\newtheorem{lemma}{Lemma}[section]
\newtheorem{proposition}{Proposition}[section]
\DeclareMathOperator{\argmax}{argmax}
\DeclareMathOperator{\Poisson}{Poisson}
\DeclareMathOperator{\tPoisson}{tPoisson}
\DeclareMathOperator{\Dirichlet}{Dirichlet}
\DeclareMathOperator{\Unif}{Unif}
\DeclareMathOperator{\Ber}{Ber}
\DeclareMathOperator{\tBin}{tBin}
\DeclareMathOperator{\Gam}{Gamma}
\DeclareMathOperator{\Mult}{Mult}
\newcommand{\1}[1]{\mathbb{1}_{#1}}
\newcommand{\sumv}{\sum_{i=1}^n v_i}
\begin{document}

\begin{frontmatter}

\title{Nonnegative Bayesian nonparametric factor models with completely random measures for community detection}
\runtitle{Bayesian Nonparametric Poisson factorization}

\begin{aug}
  \author{Fadhel Ayed\ead[label=e1]{fadhel.ayed@stats.ox.ac.uk}}
  \and
  \author{Fran\c cois Caron\ead[label=e2]{caron@stats.ox.ac.uk}}

  \address{Department of Statistics, University of Oxford\\
           \printead{e1,e2}}

  \affiliation{University of Oxford}

  \runauthor{F. Ayed and F. Caron}

\end{aug}

\begin{abstract}
We present a Bayesian nonparametric Poisson factorization model for modeling network data with an unknown and potentially growing number of overlapping communities. The construction is based on completely random measures and allows the number of communities to either increase with the number of nodes at a specified logarithmic or polynomial rate, or be bounded. We develop asymptotics for the number of nodes and the degree distribution of the network and derive a Markov chain Monte Carlo algorithm for targeting the exact posterior distribution for this model. The usefulness of the approach is illustrated on various real networks.
\end{abstract}

\begin{keyword}
\kwd{Poisson factorization}
\kwd{Community detection}
\kwd{Generalized gamma process}
\end{keyword}

\end{frontmatter}


\section{Introduction}
\label{sec:introduction}

Non-negative matrix factorization (NMF) methods~\citep{Paatero1994,Lee2001} aim to find a latent representation of a positive $n\times m$ matrix $A$ as a sum of $K$ non-negative factors. For integer-valued data, Poisson factorization models~\citep{Dunson2005} offer a flexible probabilistic framework for non-negative matrix factorization, and have found wide applicability in signal processing~\citep{Virtanen2008,Cemgil2009} or recommender systems~\citep{Ma2011,Gopalan2015}. In this paper, we focus on the application to network analysis, where $m=n$ and the $n\times n $ count matrix $A$ represents the number of directed or undirected interactions between $n$ individuals; the latent factors may be interpreted as latent and potentially overlapping communities~\citep{Ball2011}, such as sport team members or other social activities circles. We also consider binary data where the matrix represents the existence or absence of a directed or undirected link between individuals. The estimated latent factors can be used for the prediction of missing links/interactions, or for interpretation of the uncovered latent community structure.\smallskip

Poisson factorization approaches require the user to set the number $K$ of latent factors, which is typically assumed to be independent of the sample size $n$. To address this problem, \cite{Zhou2012}, \cite{Gopalan2014} and \cite{Zhou2015} proposed Bayesian nonparametric approaches that allow the number of latent factors to be estimated from the data, and to grow unboundedly with the size $n$ of the matrix. In particular, \cite{Gopalan2014} and \cite{Zhou2015}, considered a Poisson factorization model
\begin{equation}
A_{ij}\sim \Poisson \left  (\sum_{k=1}^\infty r_k v_{ik}v_{jk}\right ),~1\leq i,j\leq n\label{eq:Aij_int}
\end{equation}
where the positive weights $(r_k)_{k\geq 1}$ represent the importance of community $k$, and $v_{ik}>0$ represents the level of affiliation of individual $i$ to community $k$. \cite{Gopalan2014} and \cite{Zhou2015}, extending work from \cite{Titsias2008}, assume that the weights $(r_k)$ are the jumps of a gamma process, ensuring the sum in equation~\eqref{eq:Aij_int} is almost surely finite. Using properties of Poisson random variables, the model~\eqref{eq:Aij_int} can be equivalently represented as
\begin{align}
A_{ij}&=\sum_{k=1}^{\infty} Z_{ijk}\label{eq:Aijsum}\\
Z_{ijk}&\sim \Poisson(r_k v_{ik}v_{jk}),~k=1,2,\ldots\label{eq:Zijk}
\end{align}
for $1\leq i,j\leq n$. The latent count variables $Z_{ijk}$ may be interpreted as the number of latent interactions between two individuals $i$ and $j$ via community $k$, the overall number $A_{ij}$ of interactions being the sum of those community interactions. For example, two members of the same company who also play sport together may meet five times at the company, and twice at the sport center, resulting in seven interactions overall. 
The overall number
\begin{equation}
K_n=\sum_{k=1}^\infty \1{\sum_{1\leq i, j\leq n} Z_{ijk}>0}
\end{equation}
of communities $k$ that generated at least one interaction between the $n$ individuals is termed the number of \textit{active communities}. For the gamma process Poisson factor model~\citep{Zhou2015}, the number of active communities $K_n$ grows logarithmically with the number $n$ of individuals. The logarithmic growth assumption may be too restrictive. For example, the number of active communities may actually be unknown but bounded above; alternatively, it may increase at a rate faster or slower than logarithmic. \smallskip

In this paper, we consider generalizations of the gamma process Poisson factorization model, using completely random measures (CRM) ~\citep{Kingman1967}. CRMs offer a flexible and tractable modeling framework~\citep{Lijoi2010}. The proposed models fit in the class of multivariate generalized Indian Buffet process priors recently developed by \cite{James2017} and are also related to compound completely random measures~\citep{Griffin2017}. We consider that $(r_k)$ are the points of Poisson point process with mean measure $\rho$. Depending on the properties of this measure, the number of active communities $K_n$ is either (i) bounded, with a random upper bound, (ii) unbounded and grows sub-polynomially (e.g. $\log n$ or $\log\log n$) or (iii) unbounded and grows as $n^{2 \sigma}$, for some $\sigma \in(0,1)$. For the implementation, we focus in particular on the generalized gamma process~\citep{Brix1999} where a single parameter flexibly controls all three behaviors.\smallskip

The article is organized as follows. In Section \ref{sec:model}, we describe the statistical model for count and binary matrices. The asymptotic properties of the model are derived in Section \ref{sec:properties}. In particular, we relate the asymptotic growth of the number of active features to the regular variation properties of the measure $\rho$. In Section~\ref{sec:inference} we derive a Markov chain Monte Carlo algorithm for posterior inference that does not require any approximation to the original model. In Section~\ref{sec:experiments} we consider applications of our approach to overlapping community detection and link detection in networks, considering real network data with up to tens of thousands of nodes.

\section{Statistical model for count and binary data}
\label{sec:model}

\subsection{General construction}

We present here the model for directed count or binary observations, but the model can be straightforwardly adapted to undirected interactions. Let $(r_k)_{k=1,2\ldots,}$ be the points of a Poisson point process with $\sigma$-finite mean measure $\rho$ on $(0,\infty)$, and assume that $v_{ik}$, $i=1,\ldots,n$, $k\geq 1$, are independent and identically distributed from some probability distribution $F$ on $\mathbb R_+=[0,\infty)$. The variable $v_{ik}$ can be interpreted as the level of affiliation of an individual $i$ to community $k$, and $r_k$ to the importance of that community.

 For count data $(A_{ij})$, where $A_{ij}$ denotes the number of directed interactions from node $i$ to node $j$,  we consider the Poisson factorization model
\begin{equation}
A_{ij}\mid (r_k,v_{ik})\sim \Poisson \left  (\sum_{k=1}^\infty r_k v_{ik}v_{jk}\right ),~1\leq i,j\leq n.\label{eq:Aij}
\end{equation}
Denoting $\Lambda_{ij}=\sum_{k=1}^\infty r_k v_{ik}v_{jk}$ the Poisson rate for $A_{ij}$, the $n\times n$ rate matrix $\Lambda^{(n)}=(\Lambda_{ij})_{1\leq i,j\leq n}$ admits the following factorization as an infinite sum of rank-1 matrices
$$
\Lambda^{(n)}=\sum_{k=1}^\infty r_k v_{1:n,k}v^\intercal_{1:n,k}
$$
where $v_{1:n,k}=(v_{1k},\ldots,v_{nk})^\intercal$. For the model to be well specified, the sum in the right-handside of Equation~\eqref{eq:Aij} needs to be almost surely finite. A necessary and sufficient condition is
\begin{equation}
\iint(1-e^{-rv^2})\rho(dr)F(dv)<\infty~~\text{ and }~~\iint(1-e^{-rv_1 v_2})\rho(dr)F(dv_1)F(dv_2)<\infty.\label{eq:necessary}
\end{equation}
A sufficient set of conditions\footnote{The sufficientness follows from the bound \eqref{eq:bound} given in Appendix.}, which we will assume to hold in the rest of this article, is that $\rho$ is a L\'evy measure and $F$ has finite second moment, that is
\begin{align}
\int_0^\infty(1-e^{-r})\rho(dr)&<\infty~~\text{ and }\tag{A1}\label{eq:assumpt1}\\
\int_0^\infty v^2F(dv)&<\infty.\tag{A2}\label{eq:assumpt2}
\end{align}
In this case, the community affiliations and weights for $n$ nodes can be conveniently represented by a completely random measure
\begin{equation}
G=\sum_{k\geq 1} r_k\delta_{v_{1:n,k}}
\label{eq:CRM}
\end{equation}
on $\mathbb R_+^n$ with mean measure $\rho(dr)F^{\bigotimes^n}(dv_1,\ldots,dv_n)$ where $F^{\bigotimes^n}$ denotes the $n$th product measure of $F$; see \cite{Kingman1967} and \cite{Lijoi2010} for background on CRMs and their applications. If the L\'evy measure is finite, that is, if
$$
\int_0^\infty \rho(dr)<\infty
$$
then the number of points $(r_k)$, and therefore the number of communities, is almost surely finite. Otherwise, when $\int \rho(dr)=\infty$, the number of communities is infinite. \smallskip

When we have binary observation $(Y_{ij})$, we treat the count matrix $(A_{ij})$ as a latent variable, and consider that $Y_{ij}=\mathbb{1}_{A_{ij}>0}$ as in \citep{Caron2017,Zhou2015}. Integrating out $(A_{ij})$, this leads to the following model for binary observations
\begin{equation}
Y_{ij}\mid (r_k,v_{ik})\sim \Ber \left (1-\exp \left  [\sum_{k=1}^\infty r_k v_{ik}v_{jk}\right ]\right ),~1\leq i,j\leq n.
\label{eq:Yij}
\end{equation}

\subsection{Specific model}

In the inference and experimental part, we use the following choice for the $\rho$ and $F$. The L\'evy measure $\rho$ is taken to be that of a generalized gamma process (GGP, see \cite{Hougaard1986}, \cite{Brix1999})
\begin{equation}
\rho(dr) = \frac{\kappa}{\Gamma (1-\sigma_0)} r^{-1-\sigma} e^{-\tau r} dr
\end{equation}
where $\sigma_0\in (-\infty, 1)$, $\kappa>0$ and $\tau>0$. When $\sigma_0=0$, we obtain a gamma process, and the model corresponds to that of \cite{Zhou2015}. When $\sigma_0 < 0$, the L\'evy measure is finite, while when $\sigma_0 \geq 0$, the L\'evy measure is infinite.

Concerning the affiliations, we will assume that $F$ is a gamma distribution with parameters $\alpha>0$ and $\beta>0$. That is, the probability density function (pdf) $f$ is given by

$$ f(v) = \frac{\beta^\alpha}{\Gamma(\alpha)} v^{\alpha-1} e^{-\beta v}$$
where $\Gamma$ denotes the usual gamma function. The hyperparameters $(\kappa,\sigma_0,\tau,\alpha,\beta)$ and $(\kappa^\prime=\kappa/ \beta^{2\sigma_0},$ $\sigma_0, \tau^\prime=\tau\beta^2,\alpha,1)$ induce the same distribution for the latent factors $(\Lambda_{ij})$. In order to guarantee the identifiability of the hyperparameters, we therefore set $\beta=1$.

\subsection{Related work}
\label{sec:discussion}
The model introduced in this section can be seen from different perspectives that nicely connect it to the existing literature. First, the model can be seen as obtained from a functional of a CRM. Recall the definition of the CRM $G$ in Eq. \eqref{eq:CRM}. Define the $n\times n$ matrix $\Lambda^{(n)}$ as the following functional of $G$
$$
\Lambda^{(n)}=\int_{(0,\infty)^n} h(u)G(du)=\sum_{k\geq 1} r_k v_{1:n,k} v^\intercal_{1:n,k}
$$
where $h(u)=u u^\intercal$.
Alternatively, this can be interpreted in the framework of compound completely random measures~\citep{Griffin2017}. For each $1\leq i,j\leq n$, denote $G_{ij}=\sum_{k\geq 1} r_k v_{ik}v_{jk}\delta_{\zeta_{k}}$ where $\zeta_k$ are some community locations in some domain $\Theta$, iid from some distribution $H$, irrelevant here. Then $(G_{ij})_{1\leq i,j\leq n}$ are compound CRMs on $\Theta$ and $\Lambda_{ij}=G_{ij}(\Theta)$. In the same vein, the model can also be interpreted as an instance of the class introduced by~\citep[Section 5]{james2014poisson}. Denote $Z^{(n)}_k$  the $n\times n$ matrix with entries $Z_{ijk}$. Then the matrix-valued process $\sum_{k\geq 1} Z_k^{(n)}\delta_{\zeta_k}$ is a draw from a multivariate Indian buffet process.

Finally, as mentioned in the introduction, the model admits as a special case the Poisson factorization based on the gamma process of \cite{Zhou2015}.

\section{Asymptotic Properties}
\label{sec:properties}

In this section we study the asymptotic properties of the proposed class of models, and in particular the growth rate of the number of active communities as the sample size $n$ grows, and the asymptotic proportion of communities of a given size. For a given sequence $(r_k)_{k\geq 1}$ and $(v_{ik})_{i\geq 1,k\geq 1}$, denote $A^{(n)}_{ij}$ and $Z^{(n)}_{ijk}$ where $n\geq 1$, $1\leq i,j\leq n$, $k\geq 1$ respectively the number of directed interactions and the number of community directed interactions distributed from Equations \eqref{eq:Aijsum} and \eqref{eq:Zijk}. We consider two different asymptotic settings
\begin{itemize}
\item Constrained setting. For any $1\leq m \leq n$, and $1\leq i,j \leq m$, $A_{ij}^{(n)}=A_{ij}^{(m)}$. In this setting, we suppose that the connections between the already observed nodes remain unchanged. It is equivalent to assuming that there is an infinite but  fixed graph and $A^{(n)}$ represents the connections between the $n$ first nodes of that graph.
\item Unconstrained setting. This setting is more general, and we only assume that $A^{(n)}_{ij}$ and $Z^{(n)}_{ijk}$ are marginally sampled from Equations \eqref{eq:Aijsum} and \eqref{eq:Zijk}.
\end{itemize}

All the results of this section, otherwise stated, hold for the unconstrained setting. We indicate when a stronger result holds in the constrained setting. All proofs are given in Appendix~\ref{sec:proofs}.

\subsection{General model}

Let $d^{(n)}_k$ be the degree of the community/feature $k$, corresponding to the number of interactions amongst $n$ individuals due to community $k$, and defined as
\begin{equation} d^{(n)}_k = \sum\limits_{1\leq i,j\leq n} Z^{(n)}_{ijk}.
\end{equation}

A community is active if $d^{(n)}_k\geq 1$. The number of active communities is therefore defined as
\begin{equation}
K_n = \sum\limits_{k=1}^\infty \1{d^{(n)}_k \geq 1}
\end{equation}
Denote $K_{n,j}$ the number of communities with degree $j\geq 1$
$$ K_{n,j} = \sum\limits_{k=1}^\infty \1{d^{(n)}_k = j}$$

Note that under the constrained setting, $d^{(n)}_{k}$, $K_n$ and $\sum_{\ell \geq j}K_{n,\ell}$ are all almost surely increasing with the sample size $n$, whereas this is not necessarily the case for the unconstrained setting.

\begin{proposition} \label{lemma:Kpoisson}
Under Assumptions \eqref{eq:assumpt1} and \eqref{eq:assumpt2}, the number of active communities $K_n$ is a Poisson random variable with mean
\begin{equation}
\Psi(n) = \int \left (1-e^{-r (\sum_{i=1}^n v_i)^2 } \right ) \left [\prod\limits_{i=1}^n F(dv_i) \right ]\ \rho(dr) < \infty.
\label{eq:Psi}
\end{equation}
  The number $K_{n,j}$ of communities with degree $j$ is also Poisson distributed, with mean
  \begin{equation}
  \Psi_j(n)=\frac{1}{j!} \int r^j \left (\sumv\right )^{2j} e^{-r \left (\sumv\right )^2 } \left [\prod\limits_{i=1}^n F(dv_i) \right ]\ \rho(dr).
\end{equation}

  Finally, for $j\geq 1$, $\sum\limits_{k \geq j} K_{n,k}$, the number of communities with degree at least $\ell$, is also Poisson distributed with mean $\sum\limits_{k \geq j} \Psi_k(n)$.
\end{proposition}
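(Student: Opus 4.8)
\medskip
\noindent\textbf{Proof proposal.}
The plan is to realise $\{(r_k,v_{1:n,k},d_k^{(n)})\}_{k\ge 1}$ as a marked Poisson point process on $(0,\infty)\times\mathbb R_+^n\times\mathbb Z_{\ge 0}$, and then to obtain every assertion by restricting this process to a set that depends only on the last (degree) coordinate and counting its points.

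First I would recall that, by construction, $(r_k)_{k\ge1}$ is a Poisson point process on $(0,\infty)$ with mean measure $\rho$, that the affiliation vectors $v_{1:n,k}$, $k\ge1$, are i.i.d.\ from $F^{\bigotimes^n}$, and that they are independent of $(r_k)$. The marking theorem then shows that $\{(r_k,v_{1:n,k})\}_{k\ge1}$ is a Poisson point process on $(0,\infty)\times\mathbb R_+^n$ with mean measure $\rho(dr)\prod_{i=1}^nF(dv_i)$. Next, conditionally on $(r_k,v_{1:n,k})$, the degree $d_k^{(n)}=\sum_{1\le i,j\le n}Z_{ijk}$ is a finite sum of independent $\Poisson(r_kv_{ik}v_{jk})$ variables, hence $d_k^{(n)}\mid(r_k,v_{1:n,k})\sim\Poisson\big(r_k(\sumv)^2\big)$ with $v_i=v_{ik}$, and these are conditionally independent across $k$ given the whole collection of points. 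Applying the marking theorem once more, now with $d_k^{(n)}$ as a $\mathbb Z_{\ge0}$-valued mark drawn from the probability kernel $(r,v)\mapsto\Poisson(\,\cdot\,;\,r(\sumv)^2)$, yields that $\{(r_k,v_{1:n,k},d_k^{(n)})\}_{k\ge1}$ is a Poisson point process with mean measure
$$
\mu(dr,dv_1,\dots,dv_n,\{\ell\})=\frac{\big(r(\sumv)^2\big)^{\ell}}{\ell!}\,e^{-r(\sumv)^2}\;\rho(dr)\prod_{i=1}^nF(dv_i),\qquad \ell\ge0.
$$

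Each conclusion is then a restriction-and-count. The variable $K_n$ counts the points with $\ell\ge1$, $K_{n,j}$ those with $\ell=j$, and $\sum_{k\ge j}K_{n,k}$ those with $\ell\ge j$; by the restriction theorem each is Poisson with mean the $\mu$-mass of the corresponding set. Summing $\mu(\,\cdot\,,\{\ell\})$ over $\ell\ge1$ gives $\Psi(n)$, since $\sum_{\ell\ge1}\frac{x^{\ell}}{\ell!}e^{-x}=1-e^{-x}$; evaluating at $\ell=j$ gives $\Psi_j(n)$; summing over $\ell\ge j$ gives $\sum_{k\ge j}\Psi_k(n)$. It remains to check that $\Psi(n)<\infty$, so that $K_n$ is a genuine Poisson variable rather than a.s.\ infinite: this follows from $1-e^{-r(\sumv)^2}\le\min\{1,\,r(\sumv)^2\}$ together with Assumption \eqref{eq:assumpt1} (which controls $\int_{r\le1}r\,\rho(dr)$ and $\rho((1,\infty))$) and Assumption \eqref{eq:assumpt2} (which gives $\int(\sumv)^2\prod_iF(dv_i)=\sum_{i,j}\mathbb E[v_iv_j]<\infty$); finiteness of $\Psi_j(n)$ and of $\sum_{k\ge j}\Psi_k(n)$ is then automatic since both are bounded by $\Psi(n)=\sum_{\ell\ge1}\Psi_\ell(n)$.

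The argument is essentially bookkeeping with Poisson processes; the one step needing a little care is the second marking, where one must verify that the conditional law of $d_k^{(n)}$ given $(r_k,v_{1:n,k})$ is a bona fide probability kernel and that the degrees are conditionally independent across communities. Both hold because, given the entire collection $\{(r_k,v_{1:n,k})\}_{k\ge1}$, the latent counts $Z_{ijk}$ are mutually independent. Finally, note that the statement concerns the unconstrained setting, which assumes only the marginal generative mechanism \eqref{eq:Aijsum}--\eqref{eq:Zijk} for the fixed sample size $n$, so no monotonicity-in-$n$ argument is required.
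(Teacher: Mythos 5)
Your proof is correct, but it takes a different route from the paper. The paper proves the result by computing the moment generating function $\mathbb{E}[e^{sK_n}]$ directly: it writes $K_n=\sum_k \1{d^{(n)}_k\geq 1}$, takes the conditional expectation of each Bernoulli indicator given the CRM $G$, bounds the resulting log-integrand by $1-e^{-r(\sumv)^2}$ to justify applicability of Campbell's theorem, and identifies $\exp[(e^s-1)\Psi(n)]$ as the Poisson MGF; the statements for $K_{n,j}$ and $\sum_{k\geq j}K_{n,k}$ are then obtained by adapting the same computation. You instead build the marked Poisson point process $\{(r_k,v_{1:n,k},d^{(n)}_k)\}$ in two marking steps (i.i.d.\ affiliations, then the conditionally Poisson degree $d^{(n)}_k\sim\Poisson(r_k(\sumv)^2)$ as a $\mathbb{Z}_{\geq 0}$-valued mark) and read off all three claims at once by the restriction theorem, with the Poisson probabilities summing to $1-e^{-r(\sumv)^2}$, to the single value $j$, or over $\ell\geq j$, respectively. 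Both arguments rest on the same Poisson-process machinery, but yours handles the three counts uniformly (and would in fact give their joint independence over disjoint degree classes), at the cost of verifying the marking-kernel and conditional-independence conditions, while the paper's MGF computation is more self-contained per statement and only needs Campbell's theorem plus an integrability bound. Your finiteness argument for $\Psi(n)$ via $1-e^{-r(\sumv)^2}\leq\min\{1,r(\sumv)^2\}$, split on $r\leq 1$ and $r>1$ and combined with \eqref{eq:assumpt1}--\eqref{eq:assumpt2}, is a valid alternative to the paper's bound \eqref{eq:bound}.
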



In the rest of the section we relate the asymptotic behavior of quantities of interest to the properties of the mean measure $\rho$. Let consider the tail L\'evy intensity defined as
$$ \forall x > 0,\ \overline{\rho}(x) = \int_{x}^{\infty} \rho(dr).$$
We assume that $\overline{\rho}$ is a regularly varying function at 0, that is
\begin{equation}\ \overline{\rho}(x) \sim x^{-\sigma} \ell(1/x)\text{ as }x\rightarrow 0 \tag{A4}\label{assumption:reg_var}
\end{equation}
where $\sigma\in[0,1)$ and $\ell$ is a slowly varying function verifying $\lim_{t \rightarrow +\infty} \ell(at)/\ell(t)  = 1$ for all $a>0$. Examples of slowly varying functions include functions converging to a constant, $\log^a t$ for any $t$, $\log\log t$, etc. Note that the CRM is finite activity if and only if $\sigma=0$ and $\ell(t)\rightarrow C<\infty$.

Now, let us consider the asymptotic behavior of the number of active communities $K_n$.

\begin{proposition}\label{lemma:asymptotic_K}
Let $K_n$ be the number of active communities.
Then for $0\leq \sigma < 1$,
\begin{equation}\label{expectation_K_asympt}
\mathbb{E}[K_n] \sim \Gamma(1-\sigma) m_f^{2\sigma} n^{2\sigma} \ell(n^2)
\end{equation}
as $n$ tends to infinity, where $m_f = \int v F(dv)$. Additionally, for $0 < \sigma < 1$,
\begin{equation}\label{K_asympt}
K_n \sim \mathbb{E}[K_{n}] \ \ \text{a.s.}
\end{equation}
If we further assume that the sequence $(K_n)_{n\geq 1}$ is almost surely non-decreasing (as in the constrained setting), then $(\ref{K_asympt})$ holds for $\sigma = 0$ and $\ell(t)\rightarrow\infty$ as well. In the finite activity case, that is $\sigma=0$ and $\ell(t)\rightarrow \overline \rho(0)=\int_0^\infty \rho(dr)<\infty$, we have
\begin{align*}
K_n\rightarrow K_\infty
\end{align*}
as $n$ tends to infinity, where $K_\infty$ is a Poisson random variable with mean $\overline \rho(0)$. The above convergence holds in distribution for the unconstrained setting and almost surely for the constrained setting.
\end{proposition}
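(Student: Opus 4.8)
The plan is to build on Proposition~\ref{lemma:Kpoisson}, which identifies $K_n$ as a Poisson variable with mean $\Psi(n)$, and to reduce everything to (a) a Tauberian analysis of $\Psi(n)$ and (b) standard concentration for Poisson laws. Throughout write $S_n=\sum_{i=1}^n v_i$ and $\psi(t)=\int_0^\infty(1-e^{-tr})\,\rho(dr)$, so that $\psi$ is nonnegative, increasing and concave with $\psi(0)=0$ (whence $\psi(\lambda t)\le\lambda\psi(t)$ for $\lambda\ge1$), and $\mathbb{E}[K_n]=\Psi(n)=\mathbb{E}[\psi(S_n^2)]$.

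First I would prove the expectation asymptotics \eqref{expectation_K_asympt}. By Karamata's Tauberian theorem, the regular variation \eqref{assumption:reg_var} of $\overline\rho$ at $0$ is equivalent to $\psi(t)\sim\Gamma(1-\sigma)\,t^{\sigma}\ell(t)$ as $t\to\infty$. Assumption \eqref{eq:assumpt2} gives $m_f=\int vF(dv)<\infty$ and $\mathbb{E}[(S_n/n)^2]\to m_f^2$, so by the strong law $R_n:=(S_n/n)^2\to m_f^2$ both almost surely and in $L^1$. Writing $S_n^2=n^2R_n$ and using the slow variation of $\ell$, one gets $\psi(S_n^2)/\big(n^{2\sigma}\ell(n^2)\big)\to\Gamma(1-\sigma)m_f^{2\sigma}$ a.s. To upgrade this to convergence of expectations I would check uniform integrability of $X_n:=\psi(S_n^2)/\big(n^{2\sigma}\ell(n^2)\big)$: on $\{R_n\le1\}$ one has $X_n\le\psi(n^2)/\big(n^{2\sigma}\ell(n^2)\big)$, which is bounded; on $\{R_n>1\}$ concavity gives $X_n\le R_n\,\psi(n^2)/\big(n^{2\sigma}\ell(n^2)\big)$; hence $X_n\le C(1+R_n)$ for all large $n$, and $\{C(1+R_n)\}$ is uniformly integrable since $R_n\to m_f^2$ in $L^1$. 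Therefore $\Psi(n)=\mathbb{E}[X_n]\,n^{2\sigma}\ell(n^2)\sim\Gamma(1-\sigma)m_f^{2\sigma}n^{2\sigma}\ell(n^2)$.

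For the almost sure equivalence \eqref{K_asympt} when $0<\sigma<1$, note that $\Psi(n)\sim\Gamma(1-\sigma)m_f^{2\sigma}n^{2\sigma}\ell(n^2)$ grows polynomially. Since $K_n\sim\Poisson(\Psi(n))$, the Chernoff bound for the Poisson distribution gives $\mathbb{P}\big(|K_n-\Psi(n)|>\varepsilon\Psi(n)\big)\le 2e^{-c_\varepsilon\Psi(n)}$ for each fixed $\varepsilon>0$, and $\sum_n e^{-c_\varepsilon\Psi(n)}<\infty$ by polynomial growth; the first Borel--Cantelli lemma (which needs no independence) then yields $\limsup_n|K_n/\Psi(n)-1|\le\varepsilon$ a.s.\ for every $\varepsilon$, i.e.\ \eqref{K_asympt}. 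When $\sigma=0$ and $\ell(t)\to\infty$, $\Psi(n)\to\infty$ but possibly subpolynomially, so the tail bounds need not be summable along the full sequence; instead I would pick an increasing subsequence $(n_j)$ with $\sum_j e^{-c_\varepsilon\Psi(n_j)}<\infty$ for all $\varepsilon>0$ and $\Psi(n_{j+1})/\Psi(n_j)\to1$ (possible since $\Psi$ is nondecreasing, slowly varying and diverges), obtain $K_{n_j}/\Psi(n_j)\to1$ a.s.\ as before, and then use the monotonicity of $(K_n)$ to squeeze $K_n/\Psi(n)$ between $K_{n_j}/\Psi(n_{j+1})$ and $K_{n_{j+1}}/\Psi(n_j)$ for $n_j\le n<n_{j+1}$.

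Finally, in the finite activity case $\sigma=0$, $\ell(t)\to\overline\rho(0)=\int_0^\infty\rho(dr)<\infty$, the mean measure $\rho$ is finite, so the Poisson process $(r_k)$ has finitely many points and their total number $K_\infty$ is $\Poisson(\overline\rho(0))$. Since $S_n^2\uparrow\infty$ a.s.\ (because $m_f>0$), monotone convergence gives $\psi(S_n^2)\uparrow\overline\rho(0)$ a.s., and then bounded convergence gives $\Psi(n)\to\overline\rho(0)$; consequently $K_n\sim\Poisson(\Psi(n))$ converges in distribution to $\Poisson(\overline\rho(0))$, which is the unconstrained statement. In the constrained setting $d^{(n)}_k\uparrow d^{(\infty)}_k$, where $d^{(\infty)}_k$ is Poisson with mean $r_k(\sum_{i}v_{ik})^2=+\infty$ for every $k$ with $r_k>0$, hence $d^{(\infty)}_k=+\infty$ a.s.; so each of the finitely many communities is eventually active and $K_n\uparrow K_\infty$ a.s. The crux is the first step: Karamata's Tauberian theorem converts the regular variation of $\overline\rho$ near $0$ into the asymptotics of $\psi$, and the concavity of $\psi$ together with the $L^1$ control of $(S_n/n)^2$ coming from \eqref{eq:assumpt2} is exactly what makes the passage from $\psi(S_n^2)$ to $\mathbb{E}[\psi(S_n^2)]$ go through; the a.s.\ statements then reduce to Poisson concentration plus Borel--Cantelli, the monotonicity hypothesis entering precisely when $\Psi(n)$ grows too slowly for the tail bounds to be summable along the full sequence.
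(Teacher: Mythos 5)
Your proposal is correct and takes essentially the same route as the paper: both hinge on the Gnedin/Karamata equivalence $\overline\rho(x)\sim x^{-\sigma}\ell(1/x)\Leftrightarrow\psi(t)\sim\Gamma(1-\sigma)t^{\sigma}\ell(t)$, on Poisson concentration plus Borel--Cantelli for the almost sure statements, and on a subsequence-plus-monotonicity squeeze in the case $\sigma=0$, $\ell\to\infty$. The only differences are minor: you derive $\Psi(n)=\mathbb{E}[\psi(S_n^2)]\sim\Gamma(1-\sigma)m_f^{2\sigma}n^{2\sigma}\ell(n^2)$ via almost sure convergence plus uniform integrability (using concavity, $\psi(\lambda t)\le\lambda\psi(t)$ for $\lambda\ge1$), whereas the paper sandwiches $\Psi(n)$ deterministically between Laplace-exponent values using Jensen's inequality and a law-of-large-numbers lower bound, and you additionally spell out the almost sure convergence $K_n\uparrow K_\infty$ in the constrained finite-activity case, which the paper's proof leaves implicit.
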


\begin{proposition}\label{degree:as}

Let $K_{n,j}$ be the number of communities of degree $j$. Then for $ 0 < \sigma < 1$ and any $j\geq 1$,

\begin{equation}
K_{n,j} \sim \frac{\sigma \Gamma(j-\sigma) }{j!} m_f^{2\sigma} n^{2 \sigma} \ell(n^2)\ \ \text{a.s.}
\end{equation}
as $n$ tends to infinity. Therefore,
\begin{equation}\label{prop_as}
\frac{K_{n,j}}{K_n} \rightarrow \frac{\sigma \Gamma(j-\sigma)}{ \Gamma(1-\sigma) j!}\ \ \text{a.s.}
\end{equation}
as $n$ tends to infinity. This corresponds to a power-law behavior as
$$
\frac{\sigma \Gamma(j-\sigma)}{ \Gamma(1-\sigma) j!} \sim \frac{\sigma}{j^{\sigma+1}}
$$
for large $j$. If we further assume that for all $k \geq 1$, $\left (\sum\limits_{j \geq k} K_{n,j}\right )_{n\geq 1}$ is non-decreasing (constrained setting), then (\ref{prop_as}) holds also for $\sigma = 0$ and $\ell(t)\rightarrow\infty$.
\end{proposition}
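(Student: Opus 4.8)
The plan is to route everything through the count $T_{n,j}:=\sum_{\ell\ge j}K_{n,\ell}$ of communities of degree at least $j$, which by Proposition~\ref{lemma:Kpoisson} is Poisson with mean $\overline\Psi_j(n):=\sum_{\ell\ge j}\Psi_\ell(n)$, and then to use the identities $K_n=T_{n,1}$ and $K_{n,j}=T_{n,j}-T_{n,j+1}$. Writing $S_n=\sum_{i=1}^n v_i$ and $g_j(\lambda)=\mathbb{P}(\Poisson(\lambda)\ge j)$, summing the expression for $\Psi_\ell(n)$ over $\ell\ge j$ gives $\overline\Psi_j(n)=\mathbb{E}\big[\phi_j(S_n^2)\big]$ with $\phi_j(t):=\int_0^\infty g_j(rt)\,\rho(dr)$.

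\textit{Step 1: Abelian asymptotics of $\phi_j$.} I would first prove $\phi_j(t)\sim \overline{C}_j\,t^{\sigma}\ell(t)$ as $t\to\infty$, with $\overline{C}_j:=\Gamma(j-\sigma)/(j-1)!$. Writing $\rho(dr)=-d\overline\rho(r)$ and integrating by parts — the boundary terms vanish because $g_j(\lambda)\sim\lambda^{j}/j!$ as $\lambda\to0$ with $j\ge1>\sigma$, $g_j\le1$, and $\overline\rho$ is regularly varying of index $-\sigma$ at $0$ — then substituting $u=rt$ yields $\phi_j(t)=\int_0^\infty g_j'(u)\,\overline\rho(u/t)\,du$, where $g_j'(u)=u^{j-1}e^{-u}/(j-1)!$ (the series telescopes). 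Using $\overline\rho(u/t)\sim t^{\sigma}\ell(t)\,u^{-\sigma}$, which holds uniformly for $u$ in compact subsets of $(0,\infty)$ by the uniform convergence theorem for regularly varying functions and is dominated elsewhere by Potter's bounds, gives $\phi_j(t)\sim t^{\sigma}\ell(t)\int_0^\infty g_j'(u)u^{-\sigma}\,du=\frac{\Gamma(j-\sigma)}{(j-1)!}\,t^{\sigma}\ell(t)$. For $j=1$ this is exactly the Abelian estimate behind Proposition~\ref{lemma:asymptotic_K}, with $\overline{C}_1=\Gamma(1-\sigma)$.

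\textit{Step 2: passing to the random argument $S_n^2$.} By the strong law of large numbers $S_n/n\to m_f=\int vF(dv)$ a.s., finite by~\eqref{eq:assumpt2}, and hence $\phi_j(S_n^2)/(n^{2\sigma}\ell(n^2))\to \overline{C}_j\,m_f^{2\sigma}$ a.s.\ (using the uniform convergence theorem once more to handle $\ell(S_n^2)/\ell(n^2)\to1$). I would upgrade this to convergence of expectations, $\overline\Psi_j(n)\sim \overline{C}_j\,m_f^{2\sigma}\,n^{2\sigma}\ell(n^2)$, by a truncation and uniform integrability argument: replacing each $v_i$ by $v_i\wedge M$ and using that $\phi_j$ is nondecreasing bounds $\overline\Psi_j(n)$ from below by $\mathbb{E}[\phi_j((S_n\wedge Mn)^2)]$, which is uniformly bounded by $\phi_j(M^2n^2)$ — of order $n^{2\sigma}\ell(n^2)$ — and converges after normalization by $n^{2\sigma}\ell(n^2)$ to $\overline{C}_j(\mathbb{E}[v\wedge M])^{2\sigma}$; for the matching upper bound one splits at $\{S_n\le An\}$ and $\{S_n>An\}$ with $A>m_f$, bounds the first part by $\phi_j(A^2n^2)$ and the second by $\mathbb{E}[\phi_j(S_n^2)\,\mathbb{1}_{\{S_n>An\}}]\lesssim\mathbb{E}[S_n^{2\sigma+\epsilon}\mathbb{1}_{\{S_n>An\}}]$, controlled via H\"older's inequality together with $\sup_n\mathbb{E}[(S_n/n)^p]<\infty$ for every $p<2$ and $\mathbb{P}(S_n>An)=O(n^{-1})$; letting $M\to\infty$, then $A\downarrow m_f$ and $\epsilon\downarrow0$, closes the estimate. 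This uses only~\eqref{eq:assumpt1}--\eqref{eq:assumpt2}.

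\textit{Step 3: almost-sure convergence and conclusion.} For $0<\sigma<1$, $\overline\Psi_j(n)$ is regularly varying of index $2\sigma>0$, hence eventually $\ge n^{\sigma}$, so the Chernoff bound for the Poisson law gives $\mathbb{P}(|T_{n,j}-\overline\Psi_j(n)|>\varepsilon\,\overline\Psi_j(n))\le 2e^{-c_\varepsilon\overline\Psi_j(n)}$, summable in $n$; Borel--Cantelli then yields $T_{n,j}\sim\overline\Psi_j(n)$ a.s.\ without any monotonicity, exactly as for $K_n=T_{n,1}$ in Proposition~\ref{lemma:asymptotic_K}. (In the constrained setting with $\sigma=0$ and $\ell(t)\to\infty$, one instead runs Borel--Cantelli along a subsequence $(n_m)$ with $\overline\Psi_j(n_m)$ of order $(\log m)^2$ and fills the gaps using the a.s.\ monotonicity of $T_{n,j}$.) Since $\overline{C}_j-\overline{C}_{j+1}=\frac{\Gamma(j-\sigma)}{(j-1)!}-\frac{(j-\sigma)\Gamma(j-\sigma)}{j!}=\frac{\sigma\Gamma(j-\sigma)}{j!}>0$, which is of the same order as $\overline{C}_j$, the leading terms do not cancel in $K_{n,j}=T_{n,j}-T_{n,j+1}$, so $K_{n,j}\sim\frac{\sigma\Gamma(j-\sigma)}{j!}\,m_f^{2\sigma}n^{2\sigma}\ell(n^2)$ a.s.; dividing by $K_n\sim\Gamma(1-\sigma)m_f^{2\sigma}n^{2\sigma}\ell(n^2)$ gives the stated limit for $K_{n,j}/K_n$, and $\Gamma(j-\sigma)/\Gamma(j+1)\sim j^{-1-\sigma}$ (Stirling) gives the power law. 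The main obstacle is Step 2: transferring the deterministic Abelian estimate through the random argument $S_n^2$ under only a second moment on $F$, which is what forces the truncation and moment bookkeeping; and in Step 3 one must check that the subtraction defining $K_{n,j}$ does not destroy the leading order, which reduces to the inequality $\overline{C}_j>\overline{C}_{j+1}$ just verified.
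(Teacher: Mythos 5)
Your proposal is correct, but it takes a genuinely different route from the paper's. The paper estimates the exact-degree means directly: it writes $\Psi_d(n)=\int \mathbb{E}[\varphi_d(r,S_n^2)]\rho(dr)$ with $\varphi_d(r,v)=\frac{r^dv^d}{d!}e^{-rv}$, and because this integrand is neither monotone nor convex in $v$ it has to split on the event $S_n\in B_\epsilon$, cut the $r$-integral at $1/n^2$, and invoke uniform integrability of $S_n^2/n^2$ to kill the rare-event contribution, before applying the Poisson concentration lemma to $K_{n,j}$ itself (and, only for the $\sigma=0$ constrained case, switching to the monotone tail sums $\sum_{\ell\geq j}K_{n,\ell}$). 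You instead work with the tail counts $T_{n,j}=\sum_{\ell\geq j}K_{n,\ell}$ throughout: since $g_j(\lambda)=\mathbb{P}(\Poisson(\lambda)\geq j)$ is monotone in $\lambda$, the transfer from the deterministic Abelian estimate $\phi_j(t)\sim\frac{\Gamma(j-\sigma)}{(j-1)!}t^\sigma\ell(t)$ (obtained by a clean integration by parts plus the uniform convergence theorem and Potter bounds, rather than by citing Lemma \ref{tail}) to the random argument $S_n^2$ goes through the same sandwich-type argument the paper already uses for $K_n$, with a truncation/H\"older estimate replacing the paper's $1/n^2$-splitting; you then recover $K_{n,j}=T_{n,j}-T_{n,j+1}$ by differencing, which costs you the (correctly verified) non-cancellation check $\overline{C}_j-\overline{C}_{j+1}=\frac{\sigma\Gamma(j-\sigma)}{j!}>0$ for $\sigma>0$. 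This buys a more uniform treatment — your tail-count device is exactly what the paper resorts to only for $\sigma=0$, so both regimes are handled by one mechanism — at the price of the extra subtraction step. One small point to make explicit: in the constrained $\sigma=0$ case the difference of leading constants vanishes, but that is precisely what is wanted, since there $T_{n,j}\sim\ell(n^2)$ a.s.\ for every $j$ gives $K_{n,j}=o(\ell(n^2))$ and hence $K_{n,j}/K_n\to 0$, matching the stated limit; your sketch stops just short of stating this final line.
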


Finally, let $c^{(n)}(k,k')$ denote the cosine between the corresponding affiliation vectors
$$ c^{(n)}(k,k') = \frac{\sum_{i=1}^n v_{ik}v_{ik'}}{\sqrt{\sum_i v_{ik}^2}\sqrt{\sum_i v_{ik'}^2}} .$$
This coefficient gives a measure of the overlap between two communities $k$ and $k'$. By the law of large numbers, for any $k\neq k'$,
$$ c^{(n)}(k,k'){\sim} \frac{(\int v F(dv))^2}{\int v^2 F(dv)} \ \ \textit{ a.s.  as }n\rightarrow \infty.$$

\subsection{Specific case of the GGP}

In the case of the GGP, we have
\begin{align*}
\overline\rho (x)=\frac{\kappa\tau^{\sigma_0}\Gamma(-\sigma_0,\tau x)}{\Gamma(1-\sigma_0)}\sim \left \{ \begin{array}{ll}
                        -\frac{\kappa\tau^{\sigma_0}}{\sigma_0} & \text{if }\sigma_0<0 \\
                        \kappa\log(1/x) & \text{if }\sigma_0=0 \\
                        \frac{\kappa x^{-\sigma_0}}{\sigma_0\Gamma(1-\sigma_0)} & \text{if }\sigma_0>0
                      \end{array}\right .
\end{align*}
as $x$ tends to $0$, where $\Gamma(a,x)$ is the incomplete gamma function. Note that $\overline\rho (x)$ is of the form $x^{-\sigma}\ell(1/x)$ where $\sigma=\max(0,\sigma_0)$ and
\begin{align*}
\ell(t)=\left \{ \begin{array}{ll}
                        -\frac{\kappa\tau^{\sigma_0}}{\sigma_0} & \text{if }\sigma_0<0 \\
                        \kappa\log(t) & \text{if }\sigma_0=0 \\
                        \frac{\kappa }{\sigma_0\Gamma(1-\sigma_0)} & \text{if }\sigma_0>0
                      \end{array}\right .
\end{align*}
is a slowly varying function at infinity. The results of the previous subsection therefore apply. For simplicity, we state the results for the constrained setting. We have, almost surely as $n\rightarrow\infty$
\begin{align*}
K_n\sim \left \{ \begin{array}{ll}
                        K_\infty & \text{if }\sigma_0<0 \\
                        2\kappa\log(n) & \text{if }\sigma_0=0 \\
                        \kappa\alpha^{2\sigma_0}n^{2\sigma_0} /\sigma_0 & \text{if }\sigma_0>0
                      \end{array}\right .
\end{align*}
where $K_\infty\sim \Poisson(-\kappa\tau^{\sigma_0}/\sigma_0)$. Additionally, for $\sigma\geq 0$,
\begin{align*}
\frac{K_{n,j}}{K_n}\rightarrow \frac{\sigma_0 \Gamma(j-\sigma_0)}{ \Gamma(1-\sigma_0) j!}
\end{align*}
almost surely as $n\rightarrow\infty$. Finally,
$$ c^{(n)}(k,k')\rightarrow\frac{\alpha}{\alpha +1}.$$

Therefore, $\sigma_0$ governs the asymptotic behavior of the number of active communities. $K_n$ is bounded with a random upper bound ($\sigma_0<0$), increases logarithmically ($\sigma_0=0$) or polynomially ($\sigma_0>0$). In the polynomial case, $\sigma_0$ also controls the power-law exponent of the proportion of communities of a given size. The parameter $\kappa$ is an overall linear scaling parameter. Finally, the parameter $\alpha$ governs the amount of  overlapping between two communities.

\section{Simulation, posterior characterization and inference}
\label{sec:inference}

In this section we describe the marginal distribution and conditional characterization of the model. Building on these, we derive an exact sampler for simulating from the model, and a Markov chain Monte Carlo algorithm to approximate the posterior distribution. Importantly, the sampler targets the distribution of interest and does not require any truncation or approximation. For simplicity of exposition, we assume that $\rho$ and $F$ are absolutely continuous with respect to the Lebesgue measure, with $\rho(dr)=\rho(r)dr$ and $F(dx)=f(x)dx$.
\subsection{Marginal distribution and simulation}

For a fixed $n$, recall that $K_n$ denotes the number of active communities. Let $((\widetilde r_1,\widetilde v_{1:n,1}),\ldots,$ $(\widetilde r_{K_n},\widetilde v_{1:n,K_n}))$ be the subsequence of $(r_k,v_{1:n,k})$ such that community $k$ is active, meaning that $\sum_{1\leq i,j\leq n} Z_{ijk}\geq 1 $, arranged in random order. Let $\widetilde Z_{ijk}$ be the number of community interactions corresponding to the active community $(\widetilde r_k, \widetilde v_{1:n,k})$. Note that
\begin{equation}
A_{ij}=\sum_{k=1}^{K_n}\widetilde Z_{ijk}.\label{eq:AijZtilde}
\end{equation}
Let $\widetilde Z_k=(\widetilde Z_{ijk})_{1\leq i,j\leq n}$. Using Proposition 5.2 of \cite{James2017}, we obtain the following lemma.

\begin{lemma}[Marginal distribution]
\label{th:marginal}
The joint distribution of $(K_n, (\widetilde r_{1:K_n}, \widetilde v_{1:n,1:K_n}), (\widetilde Z_k)_{k=1,\ldots,K_n})$ is given by
\begin{align}
K_n &\sim \Poisson(\Psi(n))\label{eq:KnPoisson}
\end{align}
where $\Psi(n)$ is defined in Eq.\eqref{eq:Psi}, and
\begin{align*}
p((\widetilde r_{1:K_n}, \widetilde v_{1:n,1:K_n})|K_n)&=\prod_{k=1}^{K_n} p(\widetilde r_k,\widetilde v_{1:n,k}|K_n)
\end{align*}
where
\begin{align}
p(\widetilde r_k,\widetilde v_{1:n,k}|K_n)\propto (1-e^{-\widetilde r_k (\sum_{i=1}^n \widetilde v_{ik})^2})\rho(\widetilde r_k)\prod_{i=1}^n f(\widetilde v_{ik}).
\label{eq:conditionalrv}
\end{align}
Finally, for each $k=1,\ldots,K_n$,
\begin{align}
\widetilde Z_k|(\widetilde r_{1:K_n}, \widetilde v_{1:n,1:K_n})&\sim \tPoisson(\widetilde r_k \widetilde v_{1:n,k}\widetilde v^\intercal_{1:n,k} )\label{eq:ZktruncPoisson}
\end{align}
where $\tPoisson(\Lambda)$ denotes the distribution of a integer-valued matrix with Poisson entries with mean values $\Lambda_{ij}$, conditionally on the sum of the entries being strictly positive. This has probability mass function
$$p(A)=
\left \{
  \begin{array}{ll}
    (1-e^{-\sum_{ij} \Lambda_{ij}})^{-1}\prod_{1\leq i,j\leq n} \frac{\Lambda_{ij}^{A_{ij}}e^{-\Lambda_{ij}}}{A_{ij}!} & \text{if }\sum_{ij} A_{ij}>0 \\
    0 & \text{otherwise}\\
  \end{array}
\right.
$$
\end{lemma}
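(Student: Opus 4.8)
The plan is to recognize the collection of triples $(r_k,v_{1:n,k},Z_k)$, with $Z_k=(Z_{ijk})_{1\le i,j\le n}$, as a marked Poisson process and to read off the law of the active communities by a restriction (thinning) argument; this is the probabilistic content behind Proposition~5.2 of \cite{James2017}, which I would spell out directly. First, by the marking theorem \citep{Kingman1967}, since $(r_k)$ is a Poisson process on $(0,\infty)$ with intensity $\rho$ and each $r_k$ receives the independent mark $v_{1:n,k}\sim F^{\bigotimes^n}$, the pairs $\{(r_k,v_{1:n,k})\}$ form a Poisson process with intensity $\rho(dr)\prod_{i=1}^n F(dv_i)$. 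Marking once more with the $\mathbb{N}^{n\times n}$-valued mark $Z_k$ drawn from the probability kernel
\[
q(Z\mid r,v_{1:n})=\prod_{1\le i,j\le n}\frac{(rv_iv_j)^{Z_{ij}}e^{-rv_iv_j}}{Z_{ij}!},
\]
which is exactly \eqref{eq:Zijk} since the $Z_{ijk}$ are conditionally independent Poisson, the triples $\{(r_k,v_{1:n,k},Z_k)\}$ form a Poisson process with intensity $\nu(dr,dv,dZ)=q(Z\mid r,v_{1:n})\,\rho(dr)\prod_{i=1}^n F(dv_i)$.

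Next I would restrict to the active communities. Community $k$ is active iff $d^{(n)}_k=\sum_{ij}Z_{ijk}\ge1$, equivalently $Z_k\neq 0$, so by the restriction theorem the active triples $\{(\widetilde r_k,\widetilde v_{1:n,k},\widetilde Z_k)\}$ form a Poisson process with intensity $\nu$ restricted to $\{Z\neq0\}$, independent of the inactive part. Its total mass is
\[
\iint\bigl(1-q(0\mid r,v_{1:n})\bigr)\,\rho(dr)\prod_{i=1}^n F(dv_i)=\iint\Bigl(1-e^{-r(\sum_{i=1}^n v_i)^2}\Bigr)\,\rho(dr)\prod_{i=1}^n F(dv_i)=\Psi(n),
\]
using $q(0\mid r,v_{1:n})=\exp(-r\sum_{ij}v_iv_j)=\exp(-r(\sum_i v_i)^2)$; this is finite by Proposition~\ref{lemma:Kpoisson}. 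A Poisson process with finite total mass $\Psi(n)$ has a $\Poisson(\Psi(n))$ number of points $K_n$, and, conditionally on $K_n$, its points are i.i.d.\ draws from the normalized intensity $\Psi(n)^{-1}\nu(\cdot\,;\{Z\neq0\})$, listed in uniformly random order; this yields \eqref{eq:KnPoisson} and the product form over $k$. Disintegrating the normalized intensity: integrating out $Z$ gives \eqref{eq:conditionalrv}, and the conditional law of $\widetilde Z_k$ given $(\widetilde r_k,\widetilde v_{1:n,k})$ is $q(\cdot\mid\widetilde r_k,\widetilde v_{1:n,k})$ restricted to $\{Z\neq0\}$ and renormalized by $1-e^{-\widetilde r_k(\sum_i\widetilde v_{ik})^2}$, which is exactly $\tPoisson(\widetilde r_k\widetilde v_{1:n,k}\widetilde v^\intercal_{1:n,k})$ with the stated pmf. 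Finally, $Z_{ijk}>0$ forces $k$ to be active, so the inactive communities contribute nothing to any entry and $A_{ij}=\sum_{k\ge1}Z_{ijk}=\sum_{k=1}^{K_n}\widetilde Z_{ijk}$, which is \eqref{eq:AijZtilde}.

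The step that needs the most care is the last displayed disintegration together with the ordering convention: the active communities carry no intrinsic labels, so the statement must be interpreted for the points listed in a uniformly random order, and I would check that the conditionally-i.i.d.\ representation of a finite Poisson process (which is symmetric in its points) is precisely what licenses this. I expect the rest to be bookkeeping — verifying $q(0\mid r,v_{1:n})=e^{-r(\sum_i v_i)^2}$ and that the normalizing constants telescope to $\Psi(n)$. The absolute-continuity assumptions on $\rho$ and $F$ enter only so that \eqref{eq:conditionalrv} can be written as a density; the argument otherwise relies solely on \eqref{eq:assumpt1}--\eqref{eq:assumpt2} (via Proposition~\ref{lemma:Kpoisson}) to guarantee $\Psi(n)<\infty$.
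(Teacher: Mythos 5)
Your proposal is correct, and it is essentially the argument the paper relies on: the paper gives no proof of its own for this lemma, deferring entirely to Proposition 5.2 of \cite{James2017}, and your marked-Poisson-process derivation (mark $(r_k)$ with $v_{1:n,k}\sim F^{\bigotimes^n}$ and then with the conditionally Poisson matrix $Z_k$, restrict to $\{Z\neq 0\}$, use that a Poisson process of finite total mass $\Psi(n)$ has a $\Poisson(\Psi(n))$ number of points which are conditionally i.i.d.\ from the normalized intensity, then disintegrate) is precisely the content of that cited result. Your identification of $q(0\mid r,v_{1:n})=e^{-r(\sum_i v_i)^2}$, the renormalization to the truncated Poisson law, and the handling of the random-order labeling via exchangeability are all sound, and invoking Proposition~\ref{lemma:Kpoisson} only for $\Psi(n)<\infty$ introduces no circularity, since that finiteness is obtained in the paper directly from the bound \eqref{eq:bound} under \eqref{eq:assumpt1}--\eqref{eq:assumpt2}. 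The only difference worth noting is that the paper's own Campbell-theorem computation (in the proof of Proposition~\ref{lemma:Kpoisson}) recovers the Poisson law of $K_n$ analytically, whereas your restriction-theorem route gives it structurally together with the conditional i.i.d.\ representation; your version is self-contained where the paper is citation-based.
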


The model has an infinite number of parameters, but Lemma~\ref{th:marginal} allows us to derive an algorithm to exactly sample from it, by successively simulating $K_n$, $(\widetilde r_{1:K_n}, \widetilde v_{1:n,1:K_n})$, $(\widetilde Z_k)_{k=1,\ldots,K_n}$ and $A$ using Equations \eqref{eq:KnPoisson}, \eqref{eq:conditionalrv}, \eqref{eq:ZktruncPoisson} and \eqref{eq:AijZtilde}.

Sampling from the conditional distribution \eqref{eq:ZktruncPoisson} can be done efficiently by first sampling the number of multiedges $\sum_{i,j} \widetilde Z_{i,j,k}$ from a truncated Poisson with mean $\widetilde r_k (\sum_i \widetilde v_{i,k})^2$, then sampling iid the end nodes of the edges proportionally to the affiliation vector. Simulating from the conditional distribution  \eqref{eq:conditionalrv} can be more challenging since it requires sampling a $n+1$ dimensional vector. However, if we suppose that the affiliations are Gamma distributed,   the problem reduces to sampling $(\widetilde r_k,\sum_i \widetilde v_{i,k})$, which is a two dimensional vector, and independently sample the normalized affiliations from a Dirichlet distribution. Indeed, if the affiliations are Gamma distributed, we consider the following change of variable.
\begin{align}
\widetilde \varsigma_k&=\sum_{i=1}^n \widetilde v_{ik},~~~k=1,\ldots,K_n\\
\widetilde \varphi_{ik}&=\frac{\widetilde v_{ik}}{\widetilde \varsigma_k},~~~~k=1,\ldots,K_n;~i=1,\ldots,n
\end{align}
This gives the following algorithm for exact simulation from the model.
\begin{enumerate}
\item Sample $K_n$ from Eq. \eqref{eq:KnPoisson}
\item For $k=1,\ldots,K_n$
\begin{enumerate}
\item Sample $(\widetilde \varphi_{1k},\ldots,\widetilde \varphi_{nk})\sim \Dirichlet(\alpha,\ldots,\alpha)$
\item Sample $\widetilde \varsigma_{k}$ from
\begin{equation}
p(\widetilde\varsigma)\propto \psi(\widetilde\varsigma^2)\Gam(\widetilde\varsigma;n\alpha,\beta)\label{eq:varsigma}
\end{equation}
\item Sample $\widetilde r_{k}|\widetilde \varsigma_{k}$ from
\begin{equation}
p(\widetilde r\mid \widetilde \varsigma)\propto (1-e^{-\widetilde r\widetilde \varsigma^2}) \rho(\widetilde r)\label{eq:vartilder}
\end{equation}
\item Sample $\widetilde Z_k^{(n)}$ from Eq. \eqref{eq:ZktruncPoisson}
\end{enumerate}
\item For $1\leq i,j\leq n$, set $A_{ij}=\sum_{k=1}^{K_n} \widetilde Z_{ijk}$
\end{enumerate}
where $\psi(t)=\int_0^\infty (1-e^{-wt})\rho(dw)$ is the Laplace exponent, $\Dirichlet(\alpha,\ldots,\alpha)$ denotes the standard Dirichlet distribution and $\Gam(x;a,b)$ denotes the probability density function of a Gamma random variable with parameters $a$ and $b$, evaluated at $x$. In the case of the GGP, the Laplace exponent is
\begin{equation}
\psi(t)=\frac{\kappa}{\sigma}((t+\tau)^\sigma - \tau^\sigma).\label{eq:LaplaceExpoGGP}
\end{equation}
One can sample from Eq. \eqref{eq:varsigma} and \eqref{eq:vartilder} using rejection.

\subsection{Posterior characterization}

Using Proposition 5.1 in \citep{James2017}, one can characterize the conditional distribution of the CRM $G$ given the latent community counts $\widetilde Z_{ijk}$.

\begin{lemma}\label{posterior}
Conditionally on $(\widetilde Z_{k}^{(n)})_{k=1,\ldots,K_n}$, the CRM $G$ has the same distribution as
$$G^\prime + \sum\limits_{k = 1}^{K_n} \tilde{r}_k \delta_{\tilde{v}_{1:n,k}}$$
where $G^\prime$ is an inhomogeneous CRM on $\mathbb R_+^n$ with mean intensity $$e^{-r (\sum\limits_{i=1}^n v_{i})^2} \rho(r)\prod_{i=1}^n f(v_i)$$ and $(\tilde{r}_k,\tilde{v}_{1:n,k})_{k=1,\ldots,K_n}$ are independent of $G^\prime$ and iid with density
\begin{align}
p(\tilde{r}_k,\tilde{v}_{1:n,k}|\widetilde Z_k^{(n)})=e^{-\widetilde r_k (\sum_i \widetilde v_{ik})^2 } \widetilde r_k^{\widetilde d_k}\rho(\widetilde r_k) \prod_{i=1}^n \widetilde v_{ik}^{\widetilde m_{ik}}f(\widetilde v_{ik})
\end{align}
where $\widetilde m_{ik}=\sum_j \widetilde Z_{ijk}+\widetilde Z_{jik}$ and $\widetilde d_k=\sum_{i,j} \widetilde Z_{ijk}$.

\end{lemma}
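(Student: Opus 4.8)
The plan is to exhibit the model as a special case of the multivariate Indian buffet / hierarchical CRM framework of \citet{James2017} and then to read off the posterior from Proposition~5.1 there, after identifying the L\'evy intensity and the per-atom likelihood kernel. Recall that $G=\sum_{k\geq1} r_k\delta_{v_{1:n,k}}$ is a CRM on $\mathbb R_+^n$ with L\'evy intensity $\nu(dr,dv_{1:n})=\rho(r)\,dr\prod_{i=1}^n f(v_i)\,dv_i$, and that, conditionally on $G$, the matrices $(Z_k^{(n)})_{k\geq1}$ are independent, with $Z_{ijk}\mid(r_k,v_{1:n,k})\sim\Poisson(r_k v_{ik}v_{jk})$ independently over $1\le i,j\le n$. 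For a generic atom $(r,v_{1:n})$ with count matrix $z=(z_{ij})$ the kernel factorises as
$$
g(z\mid r,v_{1:n})=\prod_{1\le i,j\le n}\frac{(rv_iv_j)^{z_{ij}}e^{-rv_iv_j}}{z_{ij}!}=\frac{r^{d}\prod_{i=1}^n v_i^{m_i}}{\prod_{i,j}z_{ij}!}\,e^{-r(\sum_{i=1}^n v_i)^2},
$$
with $d=\sum_{i,j}z_{ij}$ and $m_i=\sum_j(z_{ij}+z_{ji})$, using only the rearrangements $\prod_{i,j}(v_iv_j)^{z_{ij}}=\prod_i v_i^{m_i}$ and $\sum_{i,j}v_iv_j=(\sum_i v_i)^2$. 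In particular an atom is inactive with probability $g(0\mid r,v_{1:n})=\exp(-r(\sum_i v_i)^2)$.

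I would then invoke the thinning and Palm-calculus arguments for Poisson processes that underlie Proposition~5.1 in \citet{James2017}. Split the points of $G$ into the $K_n$ active atoms and the inactive ones ($z_k=0$). The inactive points form an independent thinning of the Poisson process $\sum_k\delta_{(r_k,v_{1:n,k})}$ with retention probability $g(0\mid r,v_{1:n})$, hence an inhomogeneous CRM $G'$ with L\'evy intensity $e^{-r(\sum_i v_i)^2}\rho(r)\prod_i f(v_i)$, independent of everything attached to the active atoms. Disintegrating the Poisson process and conditioning the active atoms on $(\widetilde Z_k^{(n)})_{k=1,\dots,K_n}$ then yields that the $(\widetilde r_k,\widetilde v_{1:n,k})$ are mutually independent with density proportional to prior intensity times kernel,
$$
p(\widetilde r_k,\widetilde v_{1:n,k}\mid\widetilde Z_k^{(n)})\ \propto\ \rho(\widetilde r_k)\Big(\prod_{i=1}^n f(\widetilde v_{ik})\Big)\,\widetilde r_k^{\widetilde d_k}\Big(\prod_{i=1}^n\widetilde v_{ik}^{\widetilde m_{ik}}\Big)e^{-\widetilde r_k(\sum_i\widetilde v_{ik})^2},
$$
which is exactly the claimed expression once the factor $(\prod_{i,j}\widetilde Z_{ijk}!)^{-1}$, constant given $\widetilde Z_k^{(n)}$, is absorbed into the normalising constant. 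Properness of this density follows from $\widetilde d_k\ge1$ together with \eqref{eq:assumpt1}--\eqref{eq:assumpt2}: the factor $\widetilde r_k^{\widetilde d_k}$ removes the singularity of $\rho$ at the origin, the exponential controls large $\widetilde r_k$, and the affiliation integrals are finite by the second-moment assumption; I would make this quantitative with the same bound used in the Appendix to establish \eqref{eq:necessary}.

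The main obstacle is the careful verification that the pair $(\nu,g)$ meets the hypotheses of Proposition~5.1 in \citet{James2017}: the required integrability $\iint(1-g(0\mid r,v_{1:n}))\,\nu(dr,dv_{1:n})<\infty$ is precisely \eqref{eq:necessary}, which is implied by \eqref{eq:assumpt1}--\eqref{eq:assumpt2}; and one must check that conditioning on the randomly ordered family $(\widetilde Z_k^{(n)})_{k=1,\dots,K_n}$, rather than on the full marked process $\sum_k\delta_{(r_k,v_{1:n,k},Z_k^{(n)})}$, preserves the asserted mutual independence of $G'$ and the active atoms and identifies the diffuse part of the posterior CRM with the thinned $G'$. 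This is the content of the multivariate-IBP posterior in \citet{James2017}, so the remaining task is to translate that notation into ours. As a cross-check I would first prove the statement when $\rho$ is finite---so $G$ has finitely many atoms and the decomposition is an elementary conditioning computation---and recover the general case by the same thinning argument, or equivalently argue directly via the Campbell--Mecke formula for $\sum_k\delta_{(r_k,v_{1:n,k},Z_k^{(n)})}$.
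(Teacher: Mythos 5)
Your proposal is correct and follows essentially the same route as the paper, which establishes this lemma simply by identifying the Lévy intensity $\rho(r)\,dr\prod_i f(v_i)\,dv_i$ and the Poisson likelihood kernel and invoking Proposition 5.1 of James (2017); your explicit computation of $g(z\mid r,v_{1:n})$, the thinning description of $G'$, and the integrability check via \eqref{eq:assumpt1}--\eqref{eq:assumpt2} merely flesh out that citation.
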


In the case where $f$ is a gamma pdf, we can use the same reparameterization as in the previous subsection with $(\widetilde \varsigma_k, \widetilde \varphi_{1:n,k})$ in place of $\widetilde v_{1:n,k}$. This leads to the following conditional distributions.
\begin{align*}
\widetilde \phi_{1:n,k}|\widetilde Z_k^{(n)} &\sim\Dirichlet(\alpha+\widetilde m_{1k},\ldots,\alpha+\widetilde m_{nk})\\
p(\widetilde \varsigma_k|\widetilde Z_k^{(n)})&\propto\varkappa(\widetilde d_k,\widetilde \varsigma^2) \Gam(\widetilde \varsigma;n\alpha+2\widetilde d_k,\beta) \\
p(\widetilde r_k|\widetilde \varsigma_k,\widetilde Z_k^{(n)})&\propto e^{-\widetilde r_k\widetilde \varsigma_k^2}\widetilde r_k^{\widetilde d_k}\rho(\widetilde r_k)
\end{align*}
where $\varkappa(m,t)=\int_0^\infty r^m e^{-rt}\rho(r)dr$. In the GGP case, we have
\begin{align*}
\boldsymbol\varkappa(m,t)=\kappa\frac{\Gamma(m-\sigma)}{\Gamma(1-\sigma)}(t+\tau)^{\sigma-m}
\end{align*}
and
$$
\widetilde r_k|\widetilde \varsigma_k,\widetilde Z_k^{(n)}\sim \Gam(\widetilde d_k-\sigma, \widetilde \varsigma_k^2+\tau).
$$

\subsection{Slice sampler for posterior inference}

We recall that $\theta$ denote the set of hyperparameters of the mean measure $\rho$ and pdf $f$. To simplify the presentation, here we suppose that we observe the complete adjacency matrix $A$, which means that we observe a directed and weighted graph with no missing (hidden) edge. The objective is obtain samples distributed from the conditional distribution
$$
p(K_n,(\widetilde r_k,\widetilde v_{1:n,k})_{k=1,\ldots,K_n},\theta \mid A).
$$
In the Appendix, we show how to do inference when we only observe a partial graph (with missing edges to predict) that can be directed or undirected, weighted or binary. In order to leverage the Poisson factorization construction, we augment the model with the latent community counts $\widetilde Z_k$. Additionally, to deal with the unknown number of active communities $K_n$, we use auxiliary slice variables, similarly to other Gibbs sampler for Bayesian nonparametric models~\citep{Walker2007,Kalli2011,Favaro2013}. For each directed pair $(i,j)$ such that $A_{ij}\geq 1$ consider the scalar latent variable
\begin{equation}
s_{ij}|(\widetilde r_k,\widetilde Z_{ijk})_{k=1,\ldots,K_n}\sim \Unif\left (0, \min_{\{k|\widetilde Z_{ijk}\geq 1\}} \widetilde r_k\right )
\end{equation}
and denote $s=\min_{ij} s_{ij}$. Note that by definition, $\widetilde r_k\geq s$ for all $k=1,\ldots,K_n$. Let
\begin{align*}
\overline G &=\sum_{k} r_k\delta_{v_{1:n,k}}\1{r_k\geq s}:=\sum_{k=1}^{\overline K_n} \overline r_k\delta_{\overline v_{1:n,k}}
\end{align*}
 be the CRM corresponding to the set of active or inactive communities with weight $r_k\geq s$, of (almost surely finite) cardinality $\overline K_n\geq K_n$. Denote $\overline Z_{ijk}\geq 0$ the associated community interactions, and $\overline Z_{k}=(\overline Z_{ijk})$. The data augmented slice sampler draws samples asymptotically distributed from
$$
p((\overline Z_k)_{k=1,\ldots,\overline K_n},\overline G,\theta, s \mid A).
$$

The main steps of the algorithm are as follows.
\begin{enumerate}
\item For each directed pair $(i,j)$ such that $A_{ij}\geq 1$, Update $(\overline Z_{ijk})_{k=1,\ldots,\overline K_n}$ given the rest of the variables,
\item Update the hyperparameters $\theta$ given the rest  of the variables,
\item Update $(\overline G, s)$ given the rest of the variables.
\end{enumerate}
The details of each step are given in Appendix~\ref{sec:appendixGibbs}. Each iteration of the Gibbs sampler has a time complexity scaling in $\overline K_n S$ where $S$ is the number of nonzero entries of the matrix. Therefore, the algorithm takes advantage of the sparsity of the networks. Additionally, each entry of the sparse graph can be dealt with independently, making the algorithm straightforwardly parallelizable.

\section{Experiments}
\label{sec:experiments}

We implement the algorithm described in the previous section with the GGP-Gamma scores model. We assign Gamma priors on the hyperparameters $\kappa, \tau, \alpha$ with parameters $(0.1,0.1)$. We fix $\beta = 1$. We allow up to a linear growth of the number of communities, corresponding to $\sigma<  0.5$, for small datasets and use a Gamma prior with parameter $(0.1,0.1)$ on $1-2\sigma$. For larger datasets, we restrict $\sigma < 0.25$, meaning that the number of communities cannot grow at a faster rate than $\sqrt{n}$. This is obtained by using a Gamma prior with parameter $(0.1,0.1)$ on $1-4\sigma$.

\subsection{Synthetic dataset}
We first run the algorithm on a synthetic dataset simulated from our model, to check that the algorithm can recover the true parameters. We sample a directed and unweighted graph from the GGP-gamma model with size $n=800$ and $\sigma = 0.2,\kappa=1,\tau=0.15,\alpha=0.05,\beta=0.2$. The number of edges of the obtained graph is $20198$, and the true number of active communities is $42$.
We run three chains in parallel with $500,000$ iterations, with $250,000$ iterations for burn-in. We show in Figure~\ref{fig:synt} trace plots of the number of active communities $K_n$ and parameter $\sigma$ showing the MCMC algorithm can recover these parameters.

\begin{figure}[H]
\begin{subfigure}{.5\textwidth}
  \centering
  \includegraphics[width=.8\linewidth]{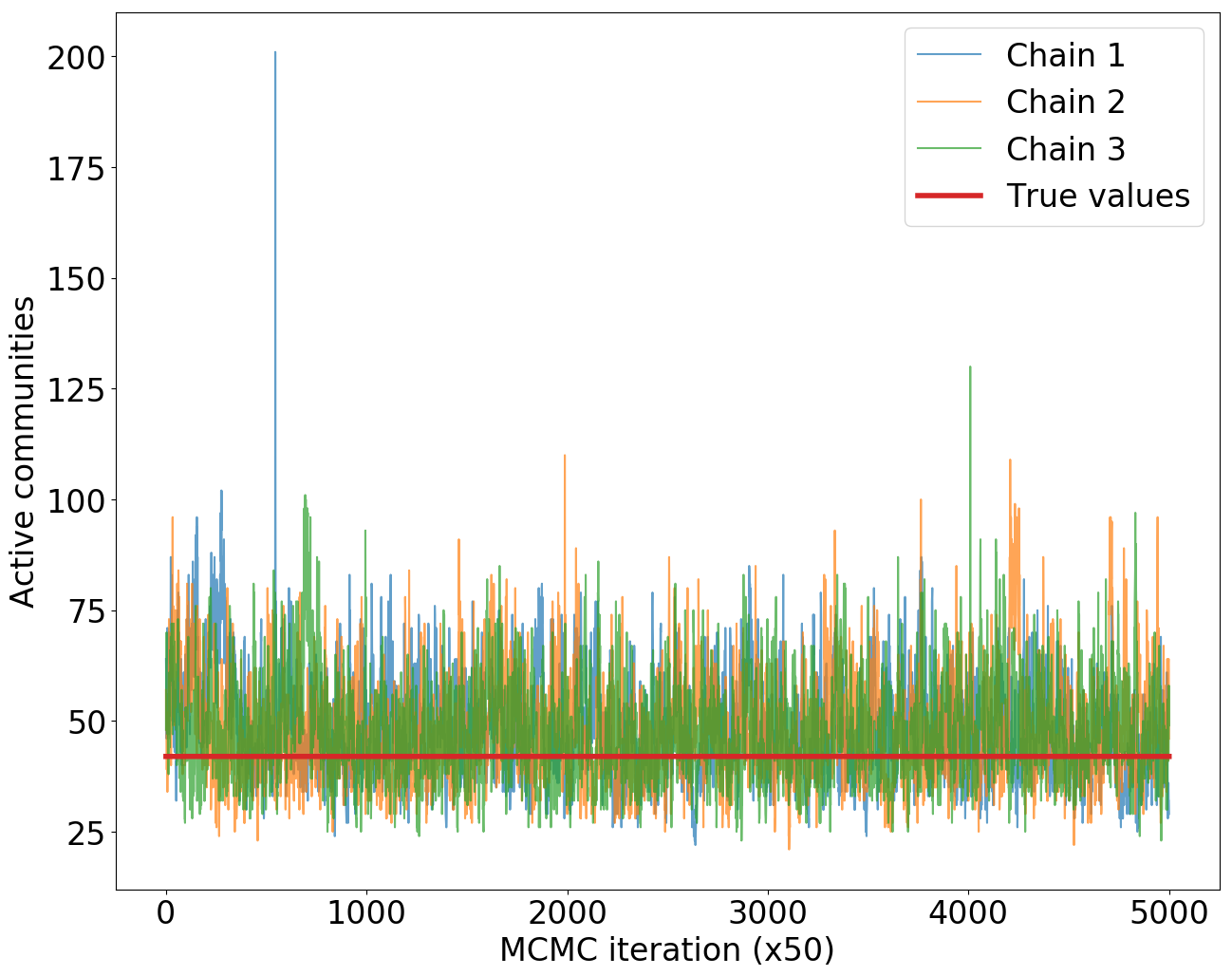}
  \caption{Trace plot of $K_n$}
  \label{fig:synt_active_co}
\end{subfigure}%
\begin{subfigure}{.5\textwidth}
  \centering
  \includegraphics[width=.8\linewidth]{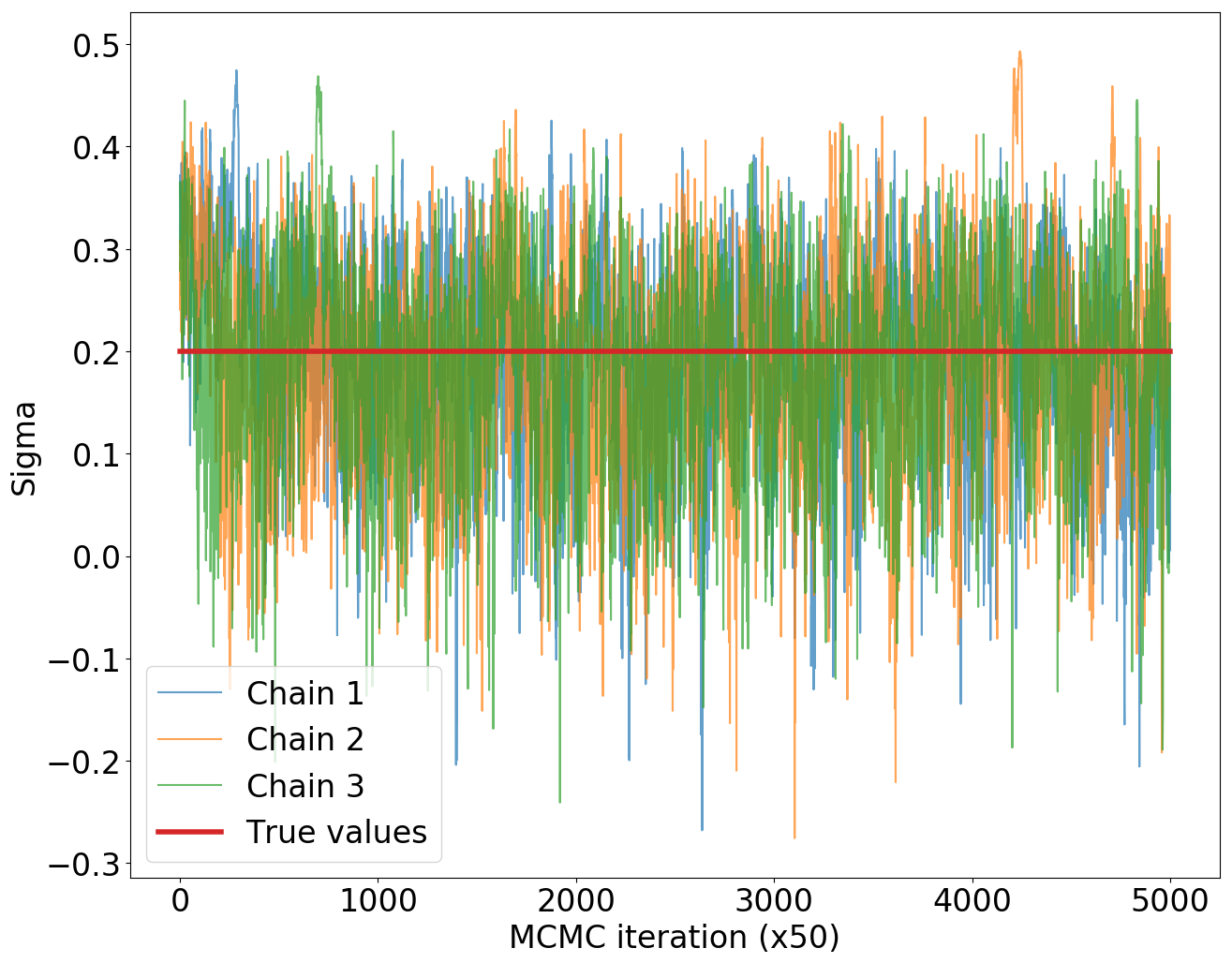}
  \caption{Trace plot of $\sigma$}
  \label{synt:synt_sigma}
\end{subfigure}
\caption{Trace plots of (a) the number of active communities $K_n$ and (b) $\sigma$, on a synthetic example.}
\label{fig:synt}
\end{figure}

\subsection{Political blogs}
The polblogs network~\citep{Adamic2005} is the network of the American political blogosphere in February 2005. It is a directed unweighted graph, where there is an edge $(i,j)$ if blog $i$ cites blog $j$. It is composed of $1490$ nodes and $19025$ edges. For each node, some ground truth information about its political affiliation (republican/democrat) is known. 

We will use this dataset in order to illustrate the role of the parameter $\alpha$ in the model. As indicated in Section~\ref{sec:properties} this parameter tunes the amount of overlapping between the communities. A smaller value enforces less overlap between communities. We run three chains with $500,000$ iterations. The posterior samples of $\sigma$ for three different values of $\alpha$ are in also shown in Figure~\ref{fig:polblogs}. The model allows overlapping communities but, for visualization purposes, it is useful to obtain an associated partition of the nodes. For each iteration, one can cluster the nodes by assigning each node to the community where it is most active. That is, at iteration $t$ of the MCMC algorithm, define for $i=1,\ldots,n$
$$
c_i^{(t)}=\argmax_{k} \{\sqrt{r^{(t)}_k} v^{(t)}_{ik}\}
$$
the cluster membership of node $i$. We then compute an approximate Bayesian point estimate $\widehat c=(\widehat c_1,\ldots,\widehat c_n)$ of the partition of the nodes, using Binder's loss function~\citep{Lau2007}. Nodes are reordered according to their estimated membership $\widehat c$, and Figure~\ref{fig:polblogs} shows the densities of connection between and within clusters for three different values of $\alpha$.  Depending on the amount of overlapping, we obtain two ($\alpha=0.8$), three ($\alpha=0.4$) or four ($\alpha=0.2$) communities. In order to interpret those communities, we calculate in Table~\ref{tab:polblogs} for each community the proportion of interactions between democrat blogs, between a democrat and a republican blog, and between two republican blogs. For $\alpha=0.8$, there are two estimated communities which can clearly be identified as democrat (community \#1) and republican (community 2). For $\alpha=0.4$, we have three communities. One is mostly associated to democrat blogs (\#1) while the other two correspond to a split of the republican blogs into right (\#2) and center-right (\#3) groups. For $\alpha=0.2$, we obtain a further split of the democrat blogs into left (\#1) and center-left (\#2) groups. Increasing the value of $\alpha$ therefore leads to a finer and finer partition of the nodes.

\begin{figure}[h]
\begin{subfigure}{.5\textwidth}
  \centering
  \includegraphics[width=.6\linewidth]{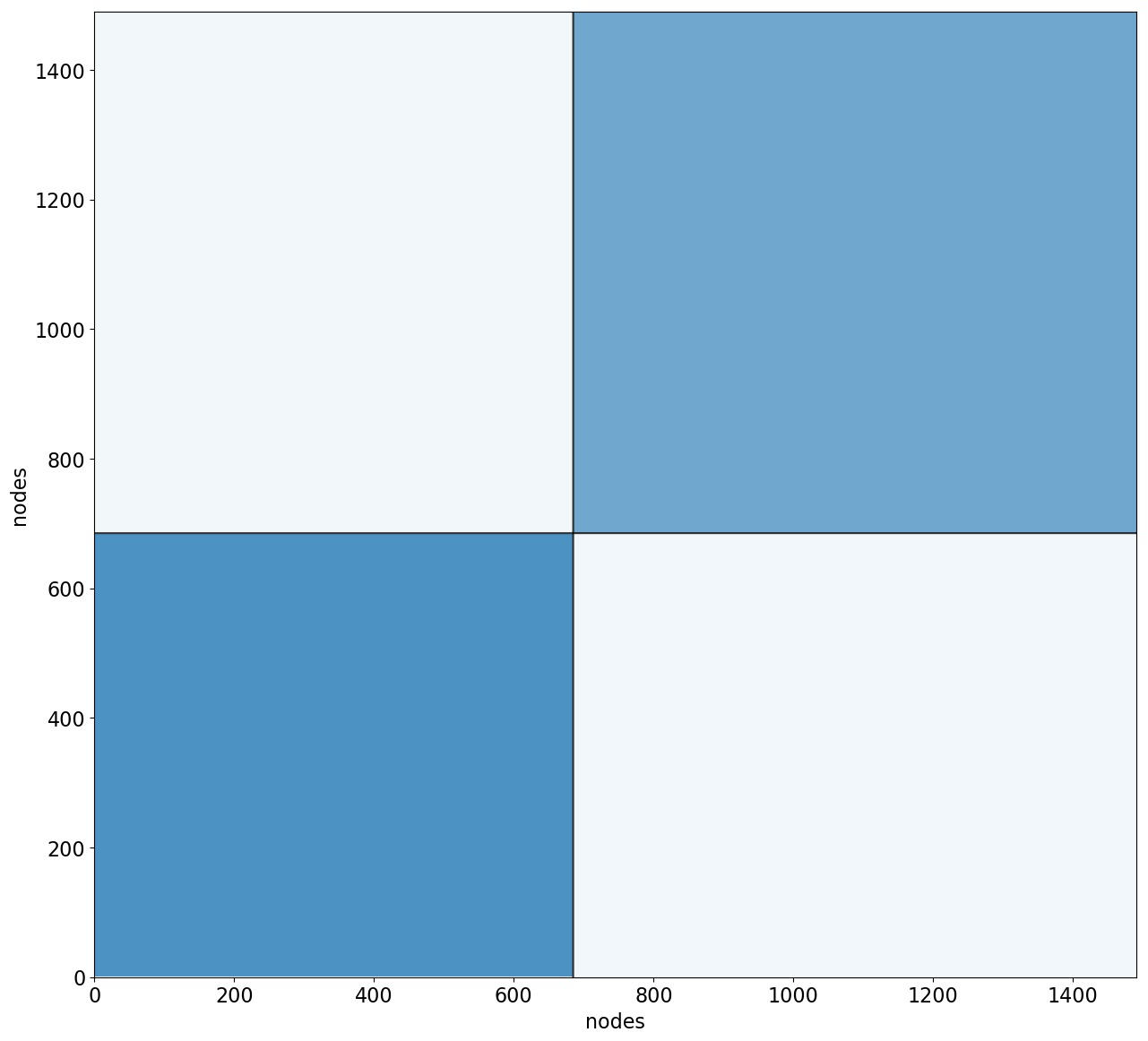}
  \caption{Block densities of reordered adjacency matrix}
\end{subfigure}%
\begin{subfigure}{.5\textwidth}
  \centering
  \includegraphics[width=.7\linewidth]{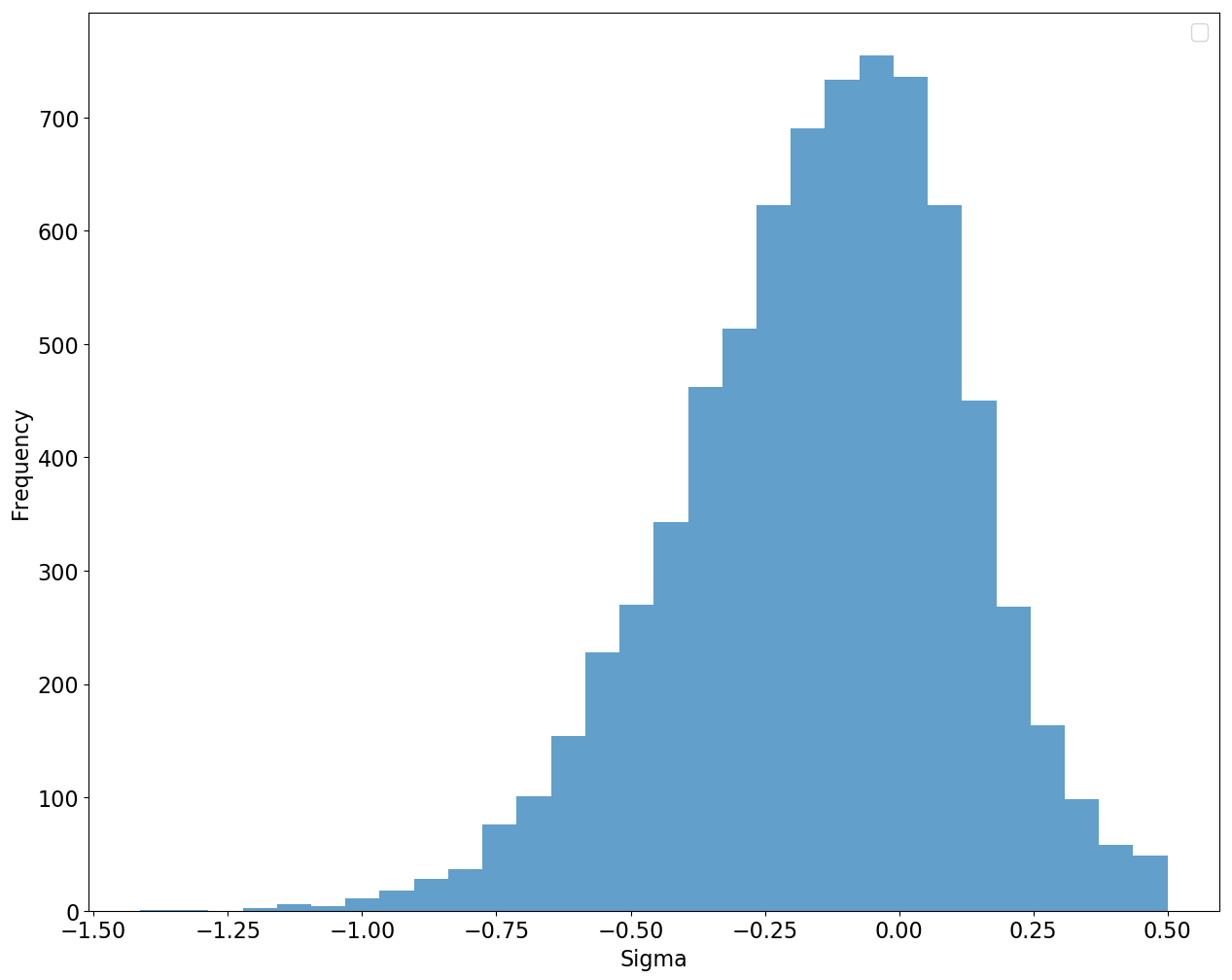}
  \caption{Histogram of $\sigma$}
\end{subfigure}
\begin{subfigure}{.5\textwidth}
  \centering
  \includegraphics[width=.6\linewidth]{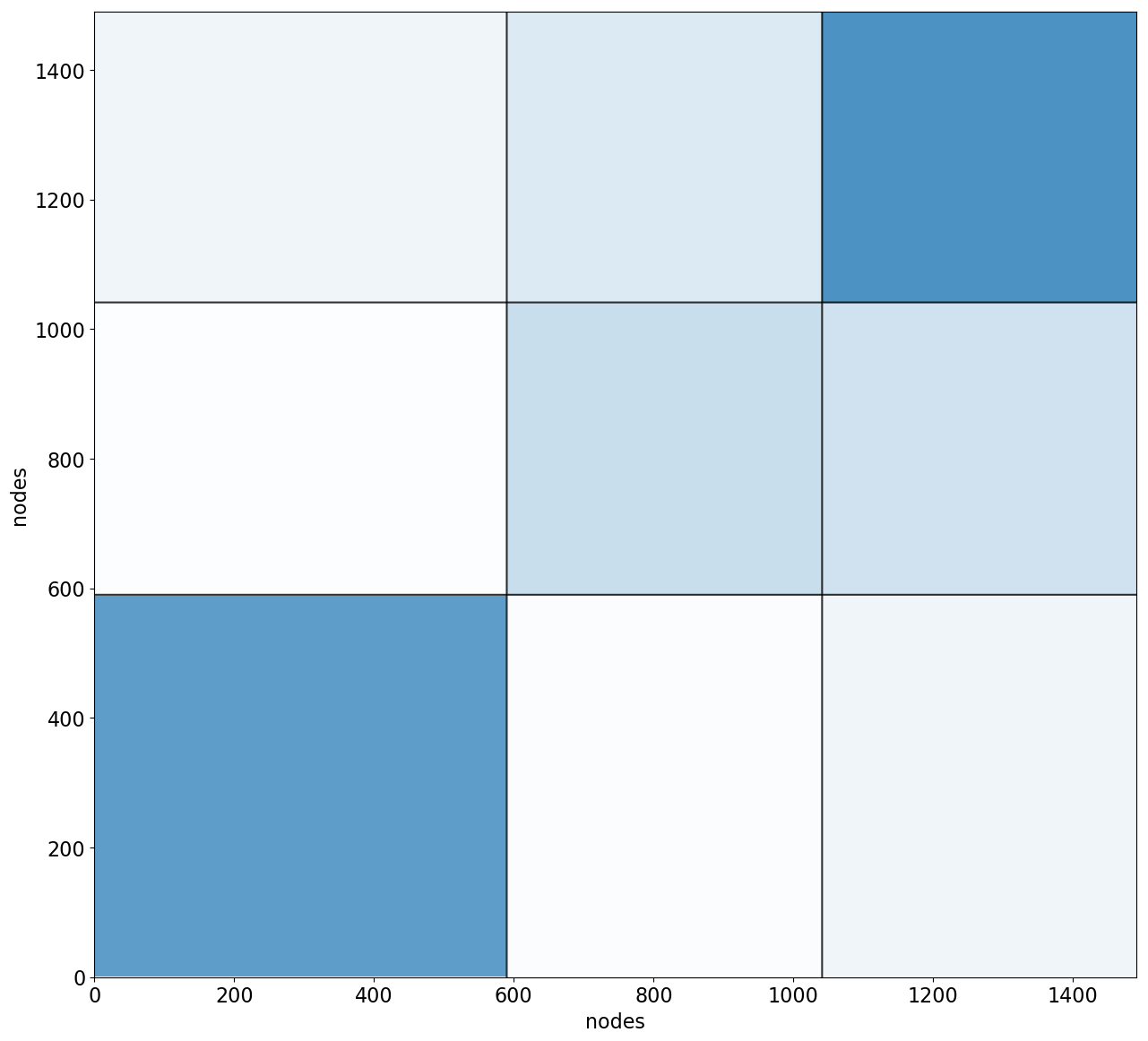}
  \caption{Block densities of reordered adjacency matrix}
\end{subfigure}%
\begin{subfigure}{.5\textwidth}
  \centering
  \includegraphics[width=.7\linewidth]{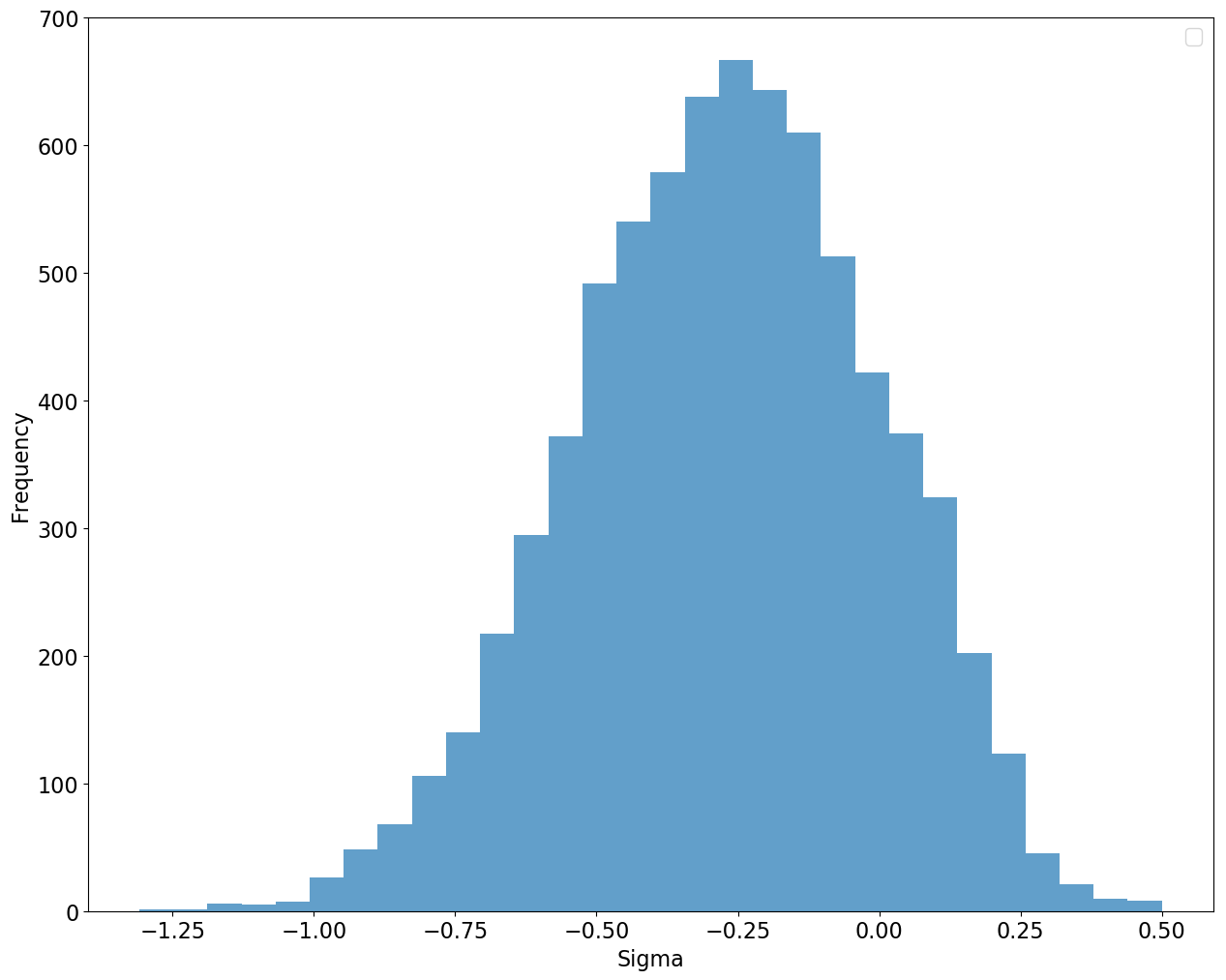}
  \caption{Histogram of $\sigma$}
\end{subfigure}
\begin{subfigure}{.5\textwidth}
  \centering
  \includegraphics[width=.6\linewidth]{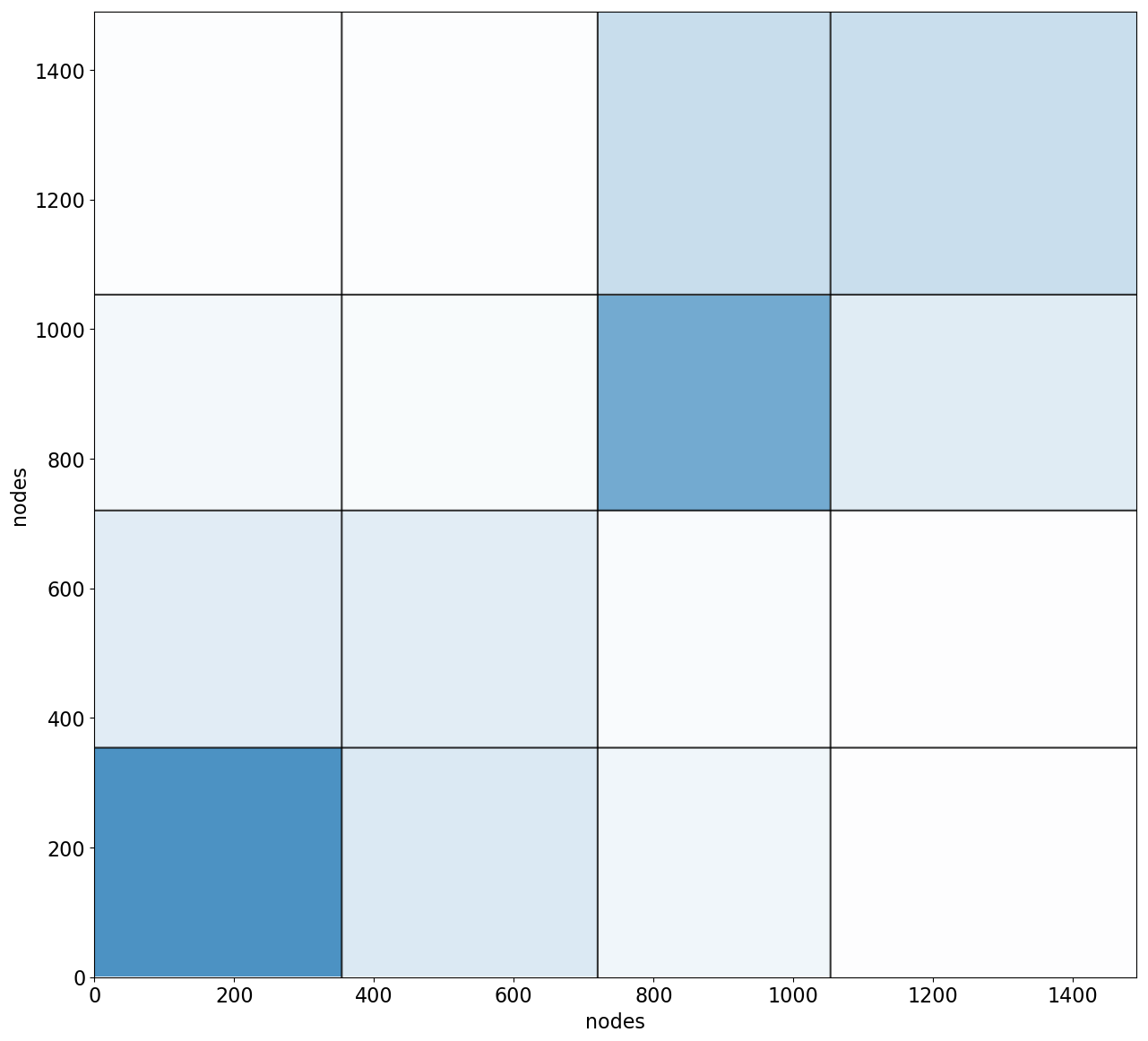}
  \caption{Block densities of reordered adjacency matrix}
\end{subfigure}%
\begin{subfigure}{.5\textwidth}
  \centering
  \includegraphics[width=.7\linewidth]{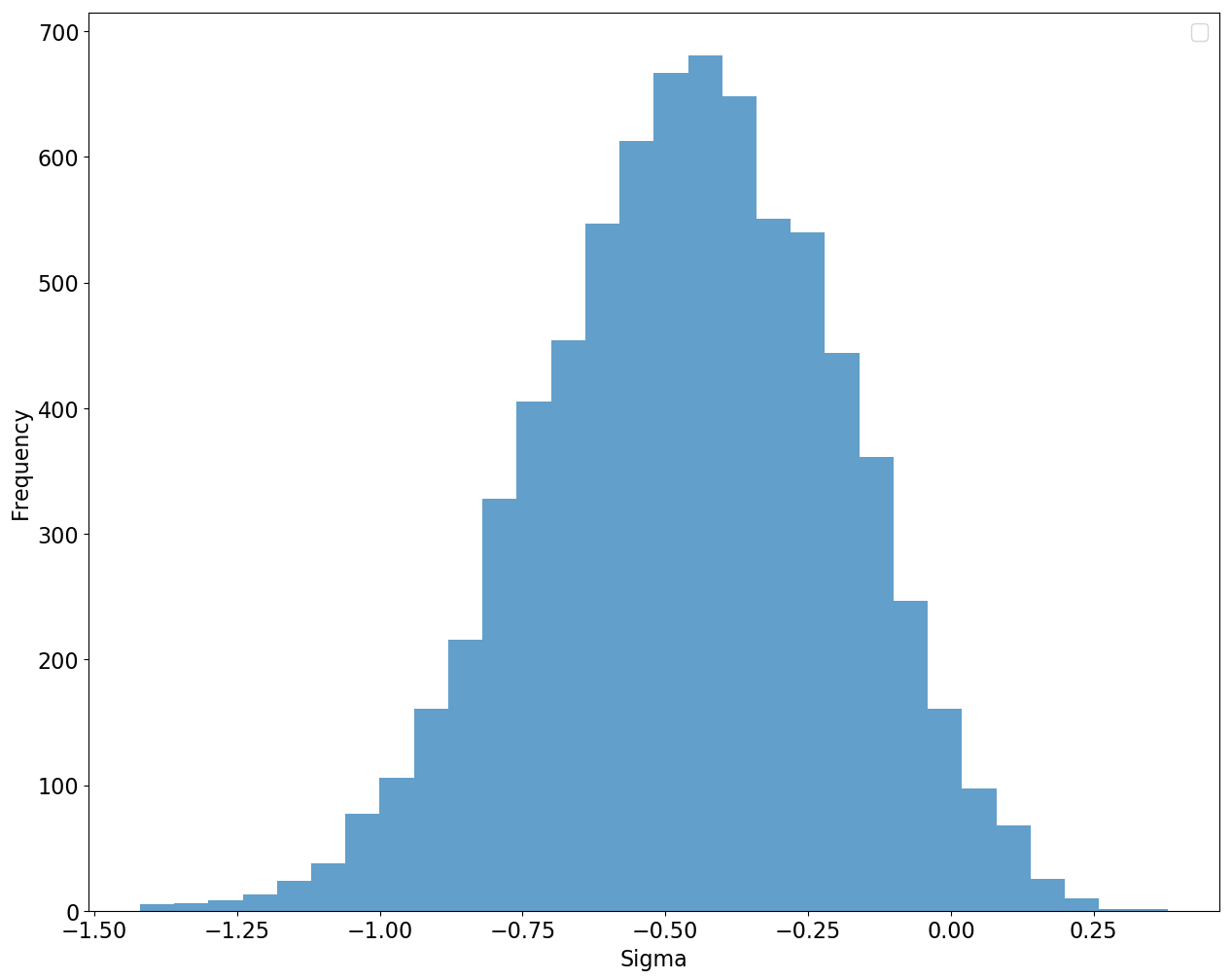}
  \caption{Histogram of $\sigma$}
\end{subfigure}
\caption{(Left) Estimated communities and (right) posterior on $\sigma$ for the Polblogs dataset with (top row) $\alpha = 0.8$, (middle row) $\alpha=0.4$ and (bottom row) $\alpha = 0.2$.}
\label{fig:polblogs}
\end{figure}

\begin{table}[t]
\caption{Proportion of the interactions of the features in each block for different values of overlapping}
\begin{tabular}{|c|cc|ccc|cccc|} \hline
\text{} & \multicolumn{2}{c|}{$\alpha=0.8$} & \multicolumn{3}{c|}{$\alpha=0.4$} & \multicolumn{4}{c|}{$\alpha=0.2$} \\ \hline\hline
			 & 1 & 2 & 1 & 2 & 3 & 1 & 2 & 3 & 4 \\ \hline
			\text{Dem/Dem} & 91.5 & 0.7 & 93.2 & 1.1  & 0.8 & 95 & 84   & 1.5  & 0.1\\
			\text{Dem/Rep } & 8.0 & 9.5 & 6.5  & 13   & 5.1 & 5  & 14.5 & 15.5 & 4.8\\
			\text{Rep/Rep} & 0.5 & 89.8 & 0.3  & 85.9 & 94.1& 0  & 1.5  & 83   & 95.1\\ \hline

\end{tabular}

\label{tab:polblogs}
\end{table}

\subsection{Wikipedia topcast}
The network is a partial web graph of Wikipedia hyperlinks collected in September 2011~\citep{klymko2014using}. It is a directed unweighted graph where an edge $(i,j)$ corresponds to a citation from a page $i$ to page $j$. We restrict it to the first $3000$ nodes, and the associated $5687$ edges.  We run three MCMC chains for  $200,000$ iterations. Trace plots of the number of active communities and parameter $\sigma$ are given in Figure \ref{fig:wiki}. Figure~\ref{fig:wiki_edge_partition} shows the adjacency matrix reordered by communities, as explained in the previous section. In order to check that the learnt communities/features are meaningful, we report in Figures the proportions of webpages associated to a given category within a given community/feature (note that a webpage can be associated to multiple categories hence the proportion do not sum to 1).

\begin{figure}[H]
\begin{subfigure}{.5\textwidth}
  \centering
  \includegraphics[width=.7\linewidth]{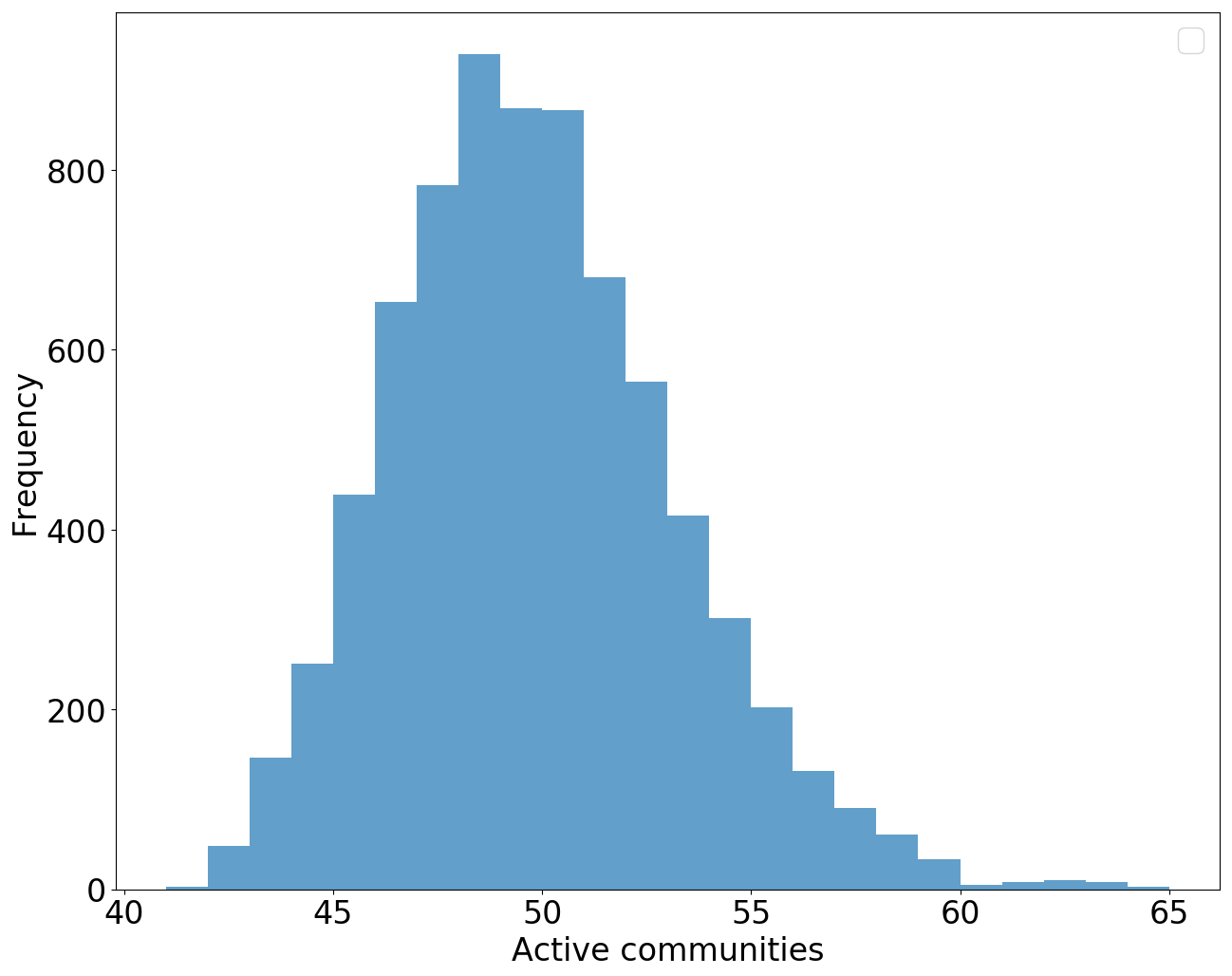}
  \caption{Histogram of number of communities}
  \label{fig:wiki_active_co}
\end{subfigure}%
\begin{subfigure}{.5\textwidth}
  \centering
  \includegraphics[width=.7\linewidth]{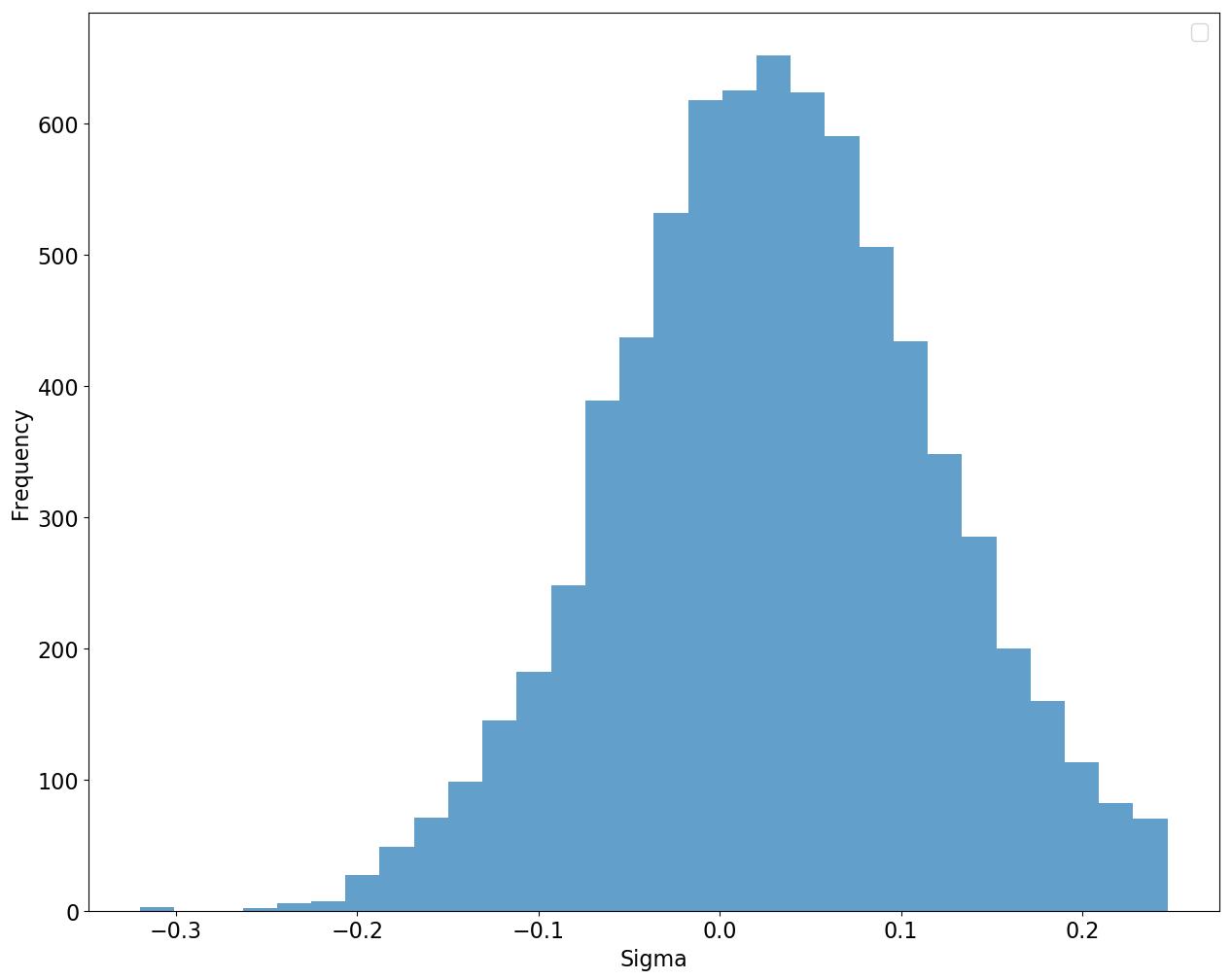}
  \caption{Histogram of $\sigma$}
  \label{fig:wiki_sigma}
\end{subfigure}
\caption{Posterior of $K_n$ and $\sigma$ for the Wiki-topcats dataset}
\label{fig:wiki}
\end{figure}
\begin{figure}[H]
  \centering
  \includegraphics[width=.4\linewidth]{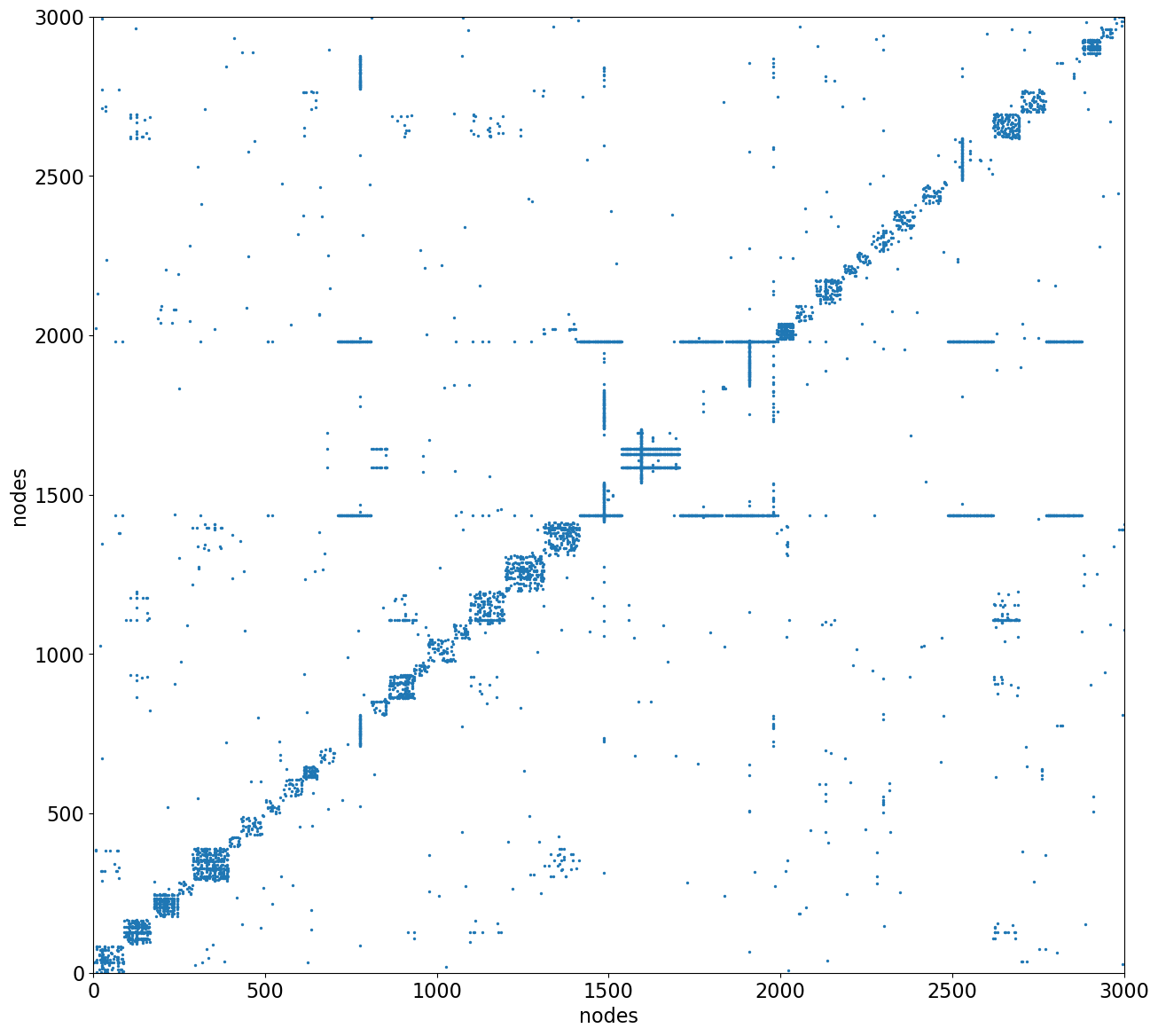}
  \caption{Reordered adjacency matrix of the Wikipedia topcats dataset}
  \label{fig:wiki_edge_partition}
\end{figure}

Note that, while the approach is able to estimate the latent block-structure, this dataset has the particularity of having star nodes, a feature that is not captured by our model.

\begin{figure}[H]
     \includegraphics[width=.7\linewidth]{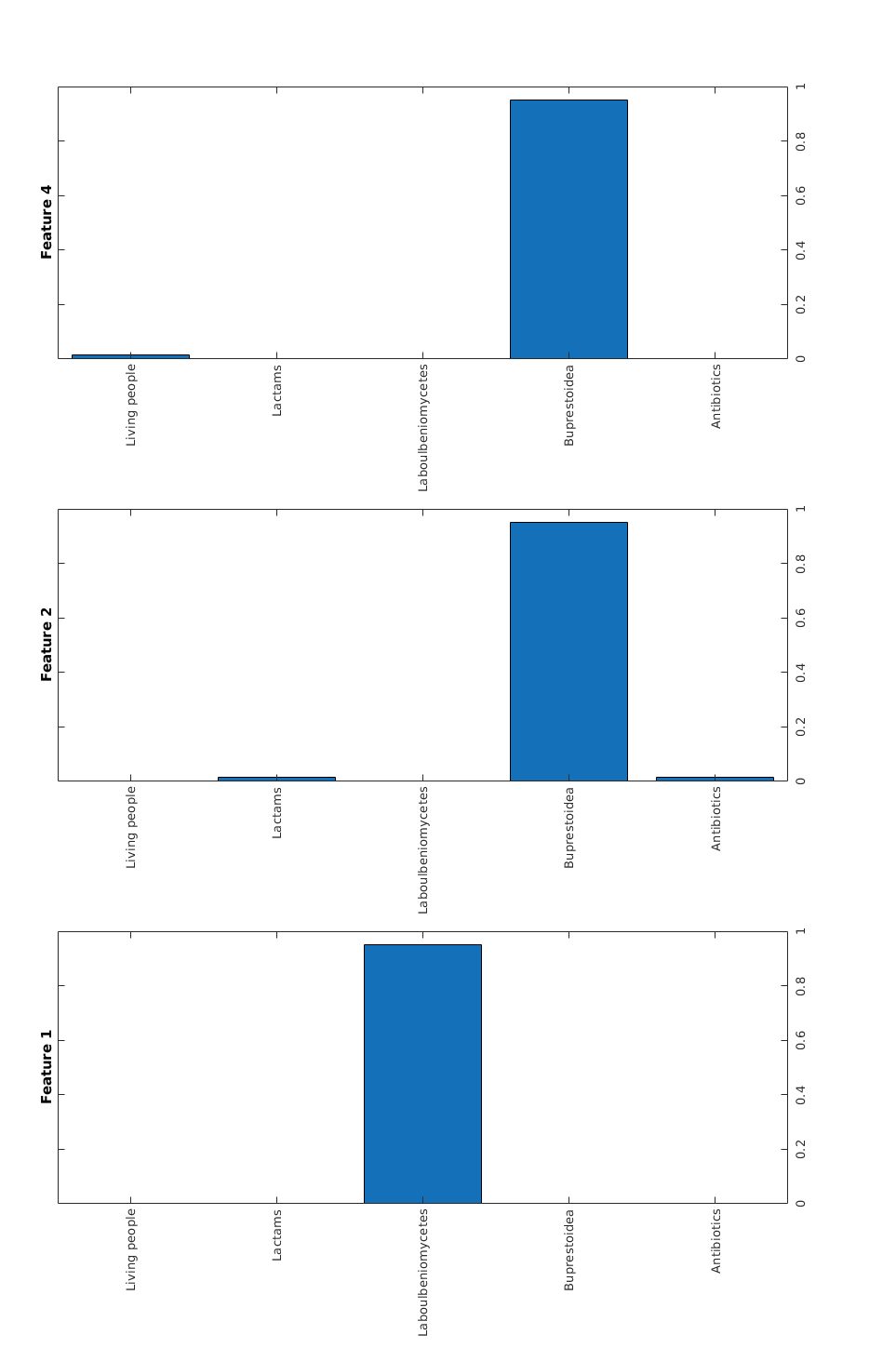}
\caption{Features compared to categories for the wikipedia dataset}
\label{fig:wikicat1}
\end{figure}

\begin{figure}[H]
     \includegraphics[width=.9\linewidth]{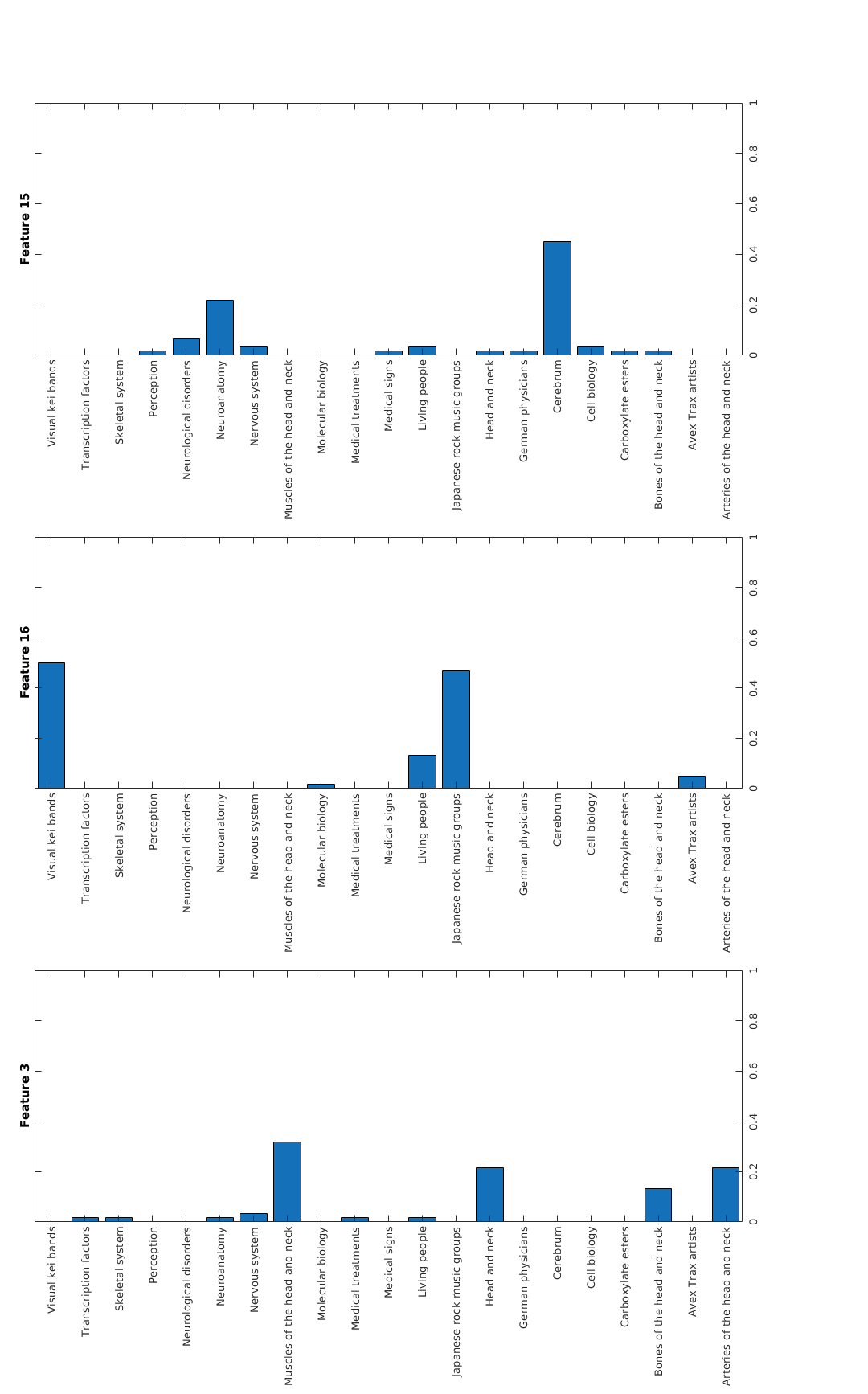}
\caption{Features compared to categories for the wikipedia dataset}
\label{fig:wikicat2}
\end{figure}

\subsection{Deezer}
The dataset was collected from the music streaming service Deezer in November 2017~\citep{Rozemberczki2018}. It represents the friendship network of a subset of Deezer users from Romania. It is an undirected unweighted graph where nodes represent the users and edges are the mutual friendships. There are $41 773$ nodes and $125 826$ edges. We run three chains with $100 000$ iterations each. Posterior histograms of the number of active communities and $\sigma$ are given in Figure~\ref{fig:deezer}. The algorithms finds around $45$ communities/features for this dataset. The reordered adjacency matrix and block densities based on the point estimate of the partition are given in Figure~\ref{fig:deezer_clust}.

\begin{figure}[H]
\begin{subfigure}{.5\textwidth}
  \centering
  \includegraphics[width=.8\linewidth]{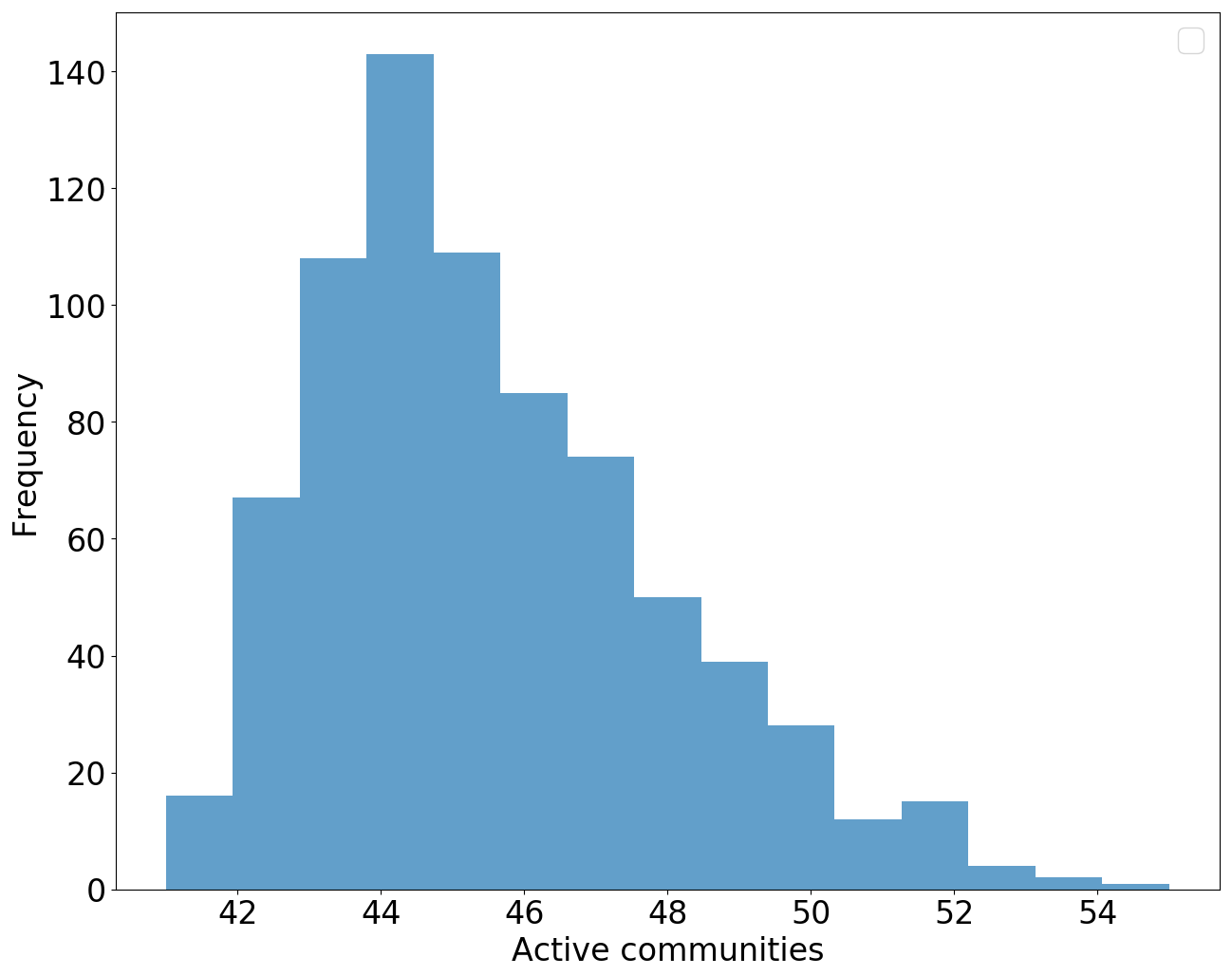}
  \caption{Histogram of number of communities}
  \label{fig:deezer_active_co}
\end{subfigure}%
\begin{subfigure}{.5\textwidth}
  \centering
  \includegraphics[width=.8\linewidth]{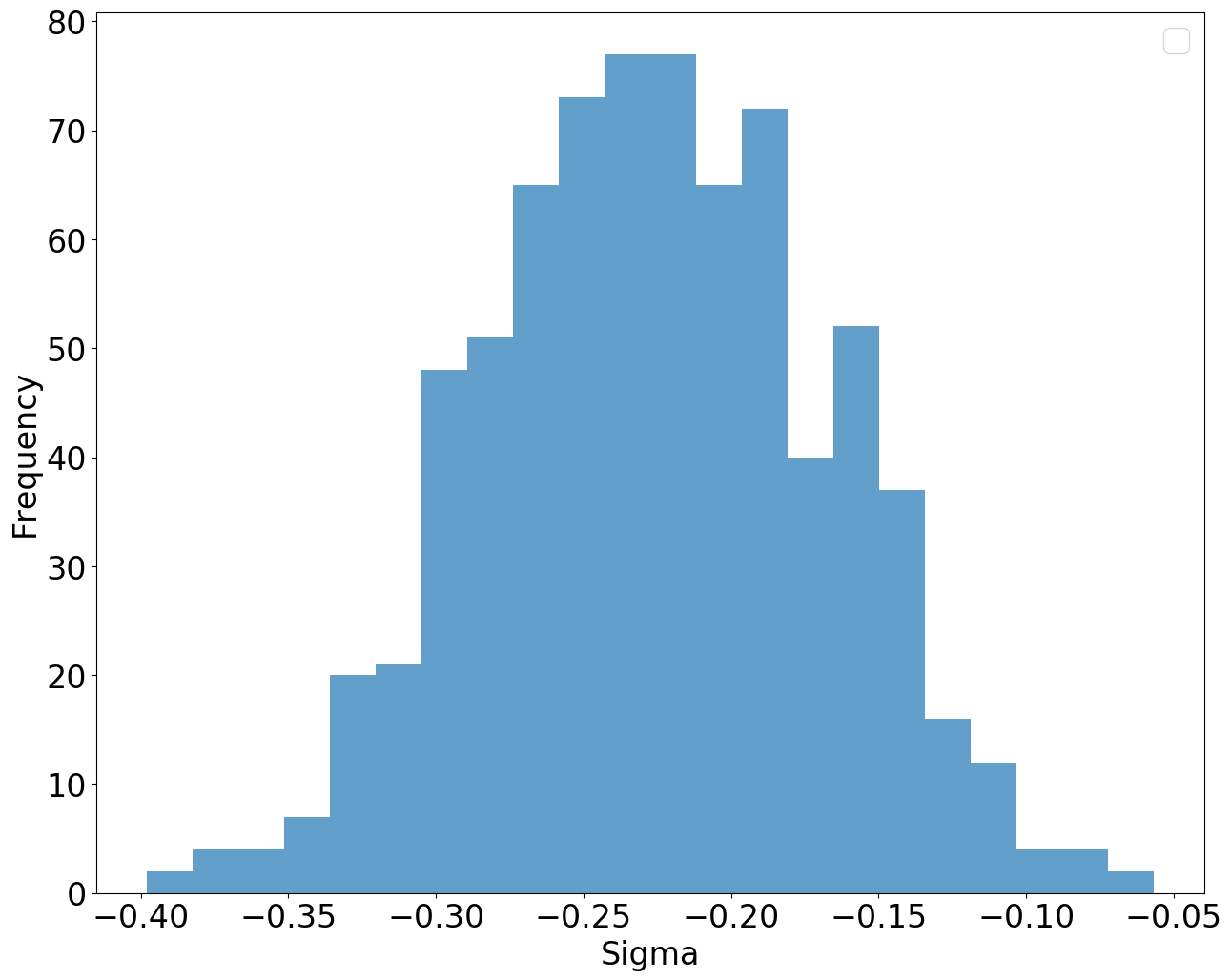}
  \caption{Histogram of $\sigma$}
  \label{synt:deezer_sigma}
\end{subfigure}
\caption{Posterior of $K_n$ and $\sigma$ on Deezer's dataset}
\label{fig:deezer}
\end{figure}

Now we can reorder the nodes using approximate MAP clustering as previously. We obtain the following adjacency matrix
\begin{figure}[H]
\begin{subfigure}{.5\textwidth}
  \centering
  \includegraphics[width=.8\linewidth]{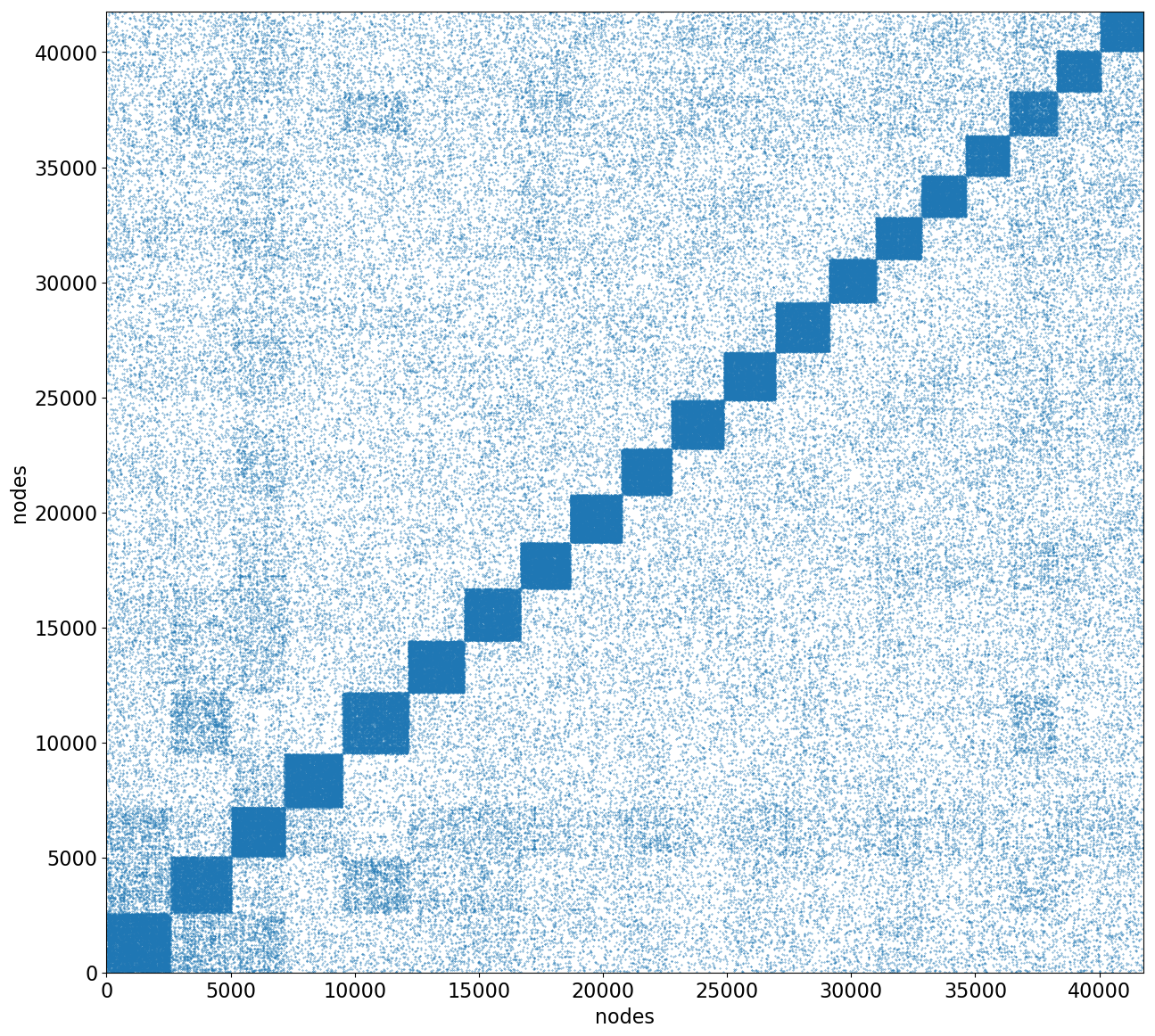}
  \caption{Reordered Adjacency matrix}
  \label{fig:deezer_adj}
\end{subfigure}%
\begin{subfigure}{.5\textwidth}
  \centering
  \includegraphics[width=.8\linewidth]{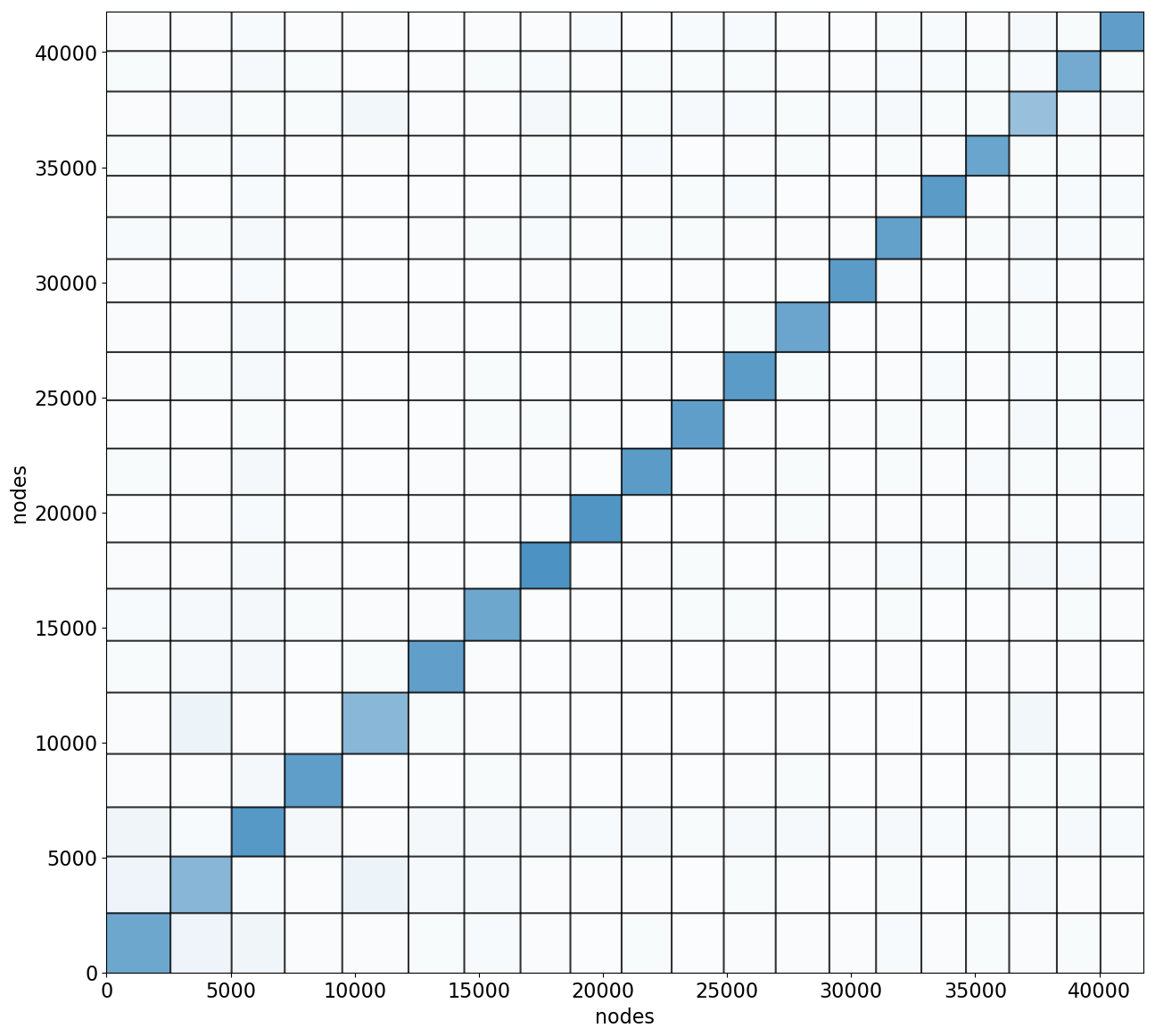}
  \caption{Block densities of reordered adjacency matrix}
  \label{synt:deezer_adj_den}
\end{subfigure}
\caption{Reordered adjacency matrix and block densities for Deezer's dataset.}
  \label{fig:deezer_clust}
\end{figure}

For each individual in the network, a list of musical genres liked by that person are available. There are in total 84 distinct genres. We represent in Figure~\ref{fig:deezergenres} the proportion of individuals who liked a subset of the 84 genres for three different communities where the interpretation in terms of genres is quite clear. The overall proportion of individuals liking a given genre is shown at the bottom of Figure~\ref{fig:deezergenres}. If the bar is red, this indicates that the proportion is 10\% higher in the community than in the population. If the bar is blue, this means the proportion is 10\% lower. Community 11 can be interpreted as $R\&B$, Community 8 as Dance, and Community 3 as Rock music. For some of the communities, not reported here, the interpretation in terms of the liked genres is less clear, and may be due to other covariates. 

\begin{figure}[H]
     \includegraphics[width=.9\linewidth]{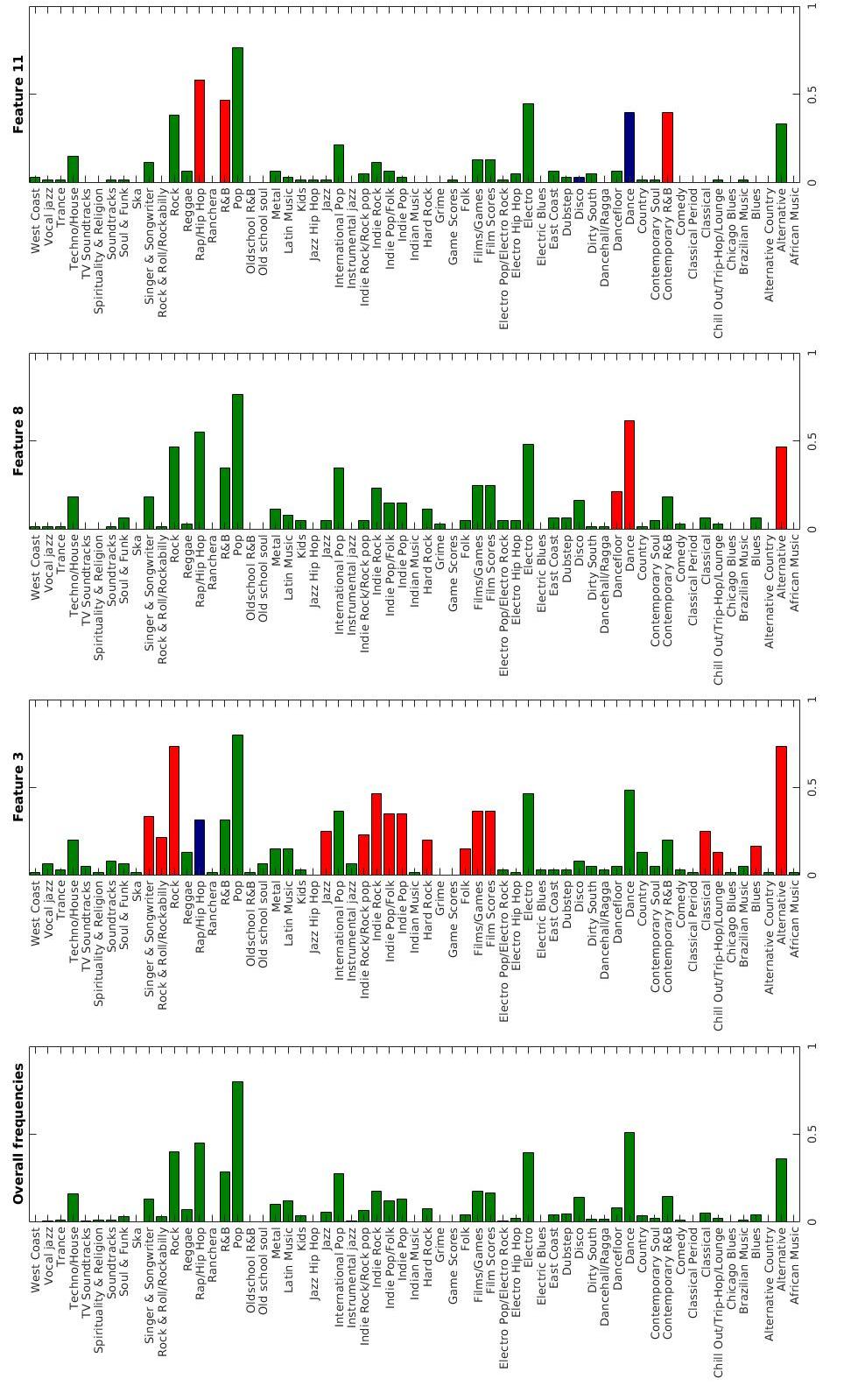}
\caption{Features compared to genres for Deezer's dataset.}
\label{fig:deezergenres}
\end{figure}

\section{Discussion}

The model presented in this paper assumed the same parameter $\beta$ for each node. We can also consider a degree corrected version of the model, similarly to \citet{Zhou2015}, where each node is assigned a different parameter $\beta_i > 0$ and then defining $Z_{ijk} \sim \Poisson(\frac{r_k v_{ik} v_{jk}}{\beta_i \beta_j})$. It is unclear however if a MCMC sampler targeting the exact posterior distribution could be implemented, and one may need to resort to some truncation approximation as in \citet{Zhou2015}.

The count matrix $(A_{ij})$ is infinitely exchangeable, hence the model presented in this article lead to asymptotically dense graphs. That is, $\sum_{1\leq i,j\leq n} A_{ij}\asymp n^2$ as $n$ tends to infinity. In order to obtain sparse graphs, we could consider two different strategies. The first solution consists in dropping the infinite exchangeability property and take $\beta^{(n)}_i \rightarrow \infty$ with $n$, then the number of edges will behave as $(n/\beta^{(n)})^2$ (we can for instance take $\beta^{(n)}_i = \sqrt n$ for any node $i$ to obtain a linear growth of the number of edges). The  model would still be finitely exchangeable for any fixed $n$, but not projective anymore. The second solution would be to consider the different notion of infinite exchangeability developed in \citep{Caron2017} and consider $(\beta_i)_i$ as a realization of a Poisson point process.

Finally, we presented a model for count (and binary) data. The results build on the additive contributions of the communities, which is why we chose the Poisson distribution on the entries of the adjacency matrix $(A_{ij})$. We can generalize to non count data using other probability distributions which are closed under convolution. For example, one could consider $A_{ij} \sim \Gam(\sum_k r_k v_{ik} v_{jk},1)$ for $A_{ij} \in \mathbb{R}_+$ or $A_{ij} \sim \mathcal{N}(\sum_k r_k v_{ik} v_{jk},1)$ for $A_{ij} \in \mathbb{R}$.

\bibliography{bibli}
\bibliographystyle{plainnat}

\appendix


\section{Proofs}
\label{sec:proofs}
\subsection{Technical Lemmas}

\begin{lemma}\label{tail}
(\cite{gnedin2007notes}, Propositions 17 and 19) Let $\rho$ be a L\'evy measure, let $\overline{\rho}(x) = \int_x^\infty \rho(r) dr$ be the tail Levy intensity and $\psi (t) = \int (1-e^{- r t}) \rho(dr)$ its Laplace exponent. Then the two following conditions are equivalent:
\begin{eqnarray}
\overline{\rho}(x) &\stackrel{x \rightarrow 0_+}{\sim}& \ell(1/x) x^{- \sigma}\label{condition:tail} \\
\psi(t) &\stackrel{x \rightarrow +\infty}{\sim}& \Gamma (1-\sigma)  t^{\sigma} \ell(t)
\end{eqnarray}
with $\ell$ a slowly varying function and $0 \leq \sigma < 1$.\\

Besides, if we let $\psi_d(t) = \frac{t^d}{d!} \int r^d e^{-rt} \rho(dr)$
\begin{enumerate}
\item if $\sigma > 0$, then $(\ref{condition:tail})$ implies that $\psi_d(t) \stackrel{t \rightarrow +\infty}{\sim} \frac{\sigma \Gamma (d-\sigma)}{d!} t^{\sigma} \ell(t)$
\item if $\sigma = 0$, then $(\ref{condition:tail})$ implies that $\psi_d(t) = o(\ell(t))$
\end{enumerate}
\end{lemma}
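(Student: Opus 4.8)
The plan is to reduce everything to classical Tauberian theory for Laplace transforms; the statements are essentially Propositions~17 and~19 of \cite{gnedin2007notes}, and I would reproduce the argument as follows. The key preliminary step is an integration-by-parts identity. Since $\rho$ is a L\'evy measure, $\overline\rho(r)=\int_r^\infty\rho(ds)\to 0$ as $r\to\infty$, and writing $1-e^{-rt}=\int_0^r t\,e^{-st}\,ds$ and exchanging the order of integration (Tonelli) gives
\[
\psi(t) = \int_0^\infty (1-e^{-rt})\rho(dr) = t\int_0^\infty e^{-st}\,\overline\rho(s)\,ds ,
\]
so that $\psi(t)/t$ is exactly the Laplace transform of the non-increasing function $\overline\rho$.

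For the equivalence of \eqref{condition:tail} and the asymptotic for $\psi$, I would invoke Karamata's Tauberian theorem (Bingham--Goldie--Teugels, or Feller Vol.~II, Chapter XIII.5). The Abelian half: if $\overline\rho(s)\sim s^{-\sigma}\ell(1/s)$ as $s\to 0_+$ with $0\le\sigma<1$, then $\int_0^\infty e^{-st}\overline\rho(s)\,ds\sim\Gamma(1-\sigma)\,t^{\sigma-1}\ell(t)$ as $t\to\infty$, hence $\psi(t)\sim\Gamma(1-\sigma)\,t^\sigma\ell(t)$. The Tauberian (converse) half uses precisely the monotonicity of $\overline\rho$ to recover \eqref{condition:tail} from the regular variation of $\psi$; no work is needed beyond quoting the theorem.

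For the statements on $\psi_d$, I would integrate by parts in $r$ again: from $\psi_d(t)=\frac{t^d}{d!}\int_0^\infty r^d e^{-rt}\rho(dr)$ one obtains, the boundary terms vanishing since $d\ge 1>\sigma$,
\[
\psi_d(t) = \frac{t^d}{d!}\int_0^\infty \bigl(d\,r^{d-1} - t\,r^d\bigr)e^{-rt}\,\overline\rho(r)\,dr .
\]
When $\sigma>0$, apply the Abelian half of Karamata's theorem to each piece: for $\beta>\sigma-1$, $\int_0^\infty r^\beta e^{-rt}\overline\rho(r)\,dr\sim\Gamma(\beta+1-\sigma)\,t^{\sigma-1-\beta}\ell(t)$; taking $\beta=d-1$ and $\beta=d$ and using $\Gamma(d+1-\sigma)=(d-\sigma)\Gamma(d-\sigma)$, the coefficient collapses to $d\,\Gamma(d-\sigma)-\Gamma(d+1-\sigma)=\sigma\,\Gamma(d-\sigma)$, giving $\psi_d(t)\sim\frac{\sigma\,\Gamma(d-\sigma)}{d!}\,t^\sigma\ell(t)$. (Equivalently, since $\psi$ is a Bernstein function, $(-1)^{d-1}\psi^{(d)}$ is non-negative and non-increasing and $\psi_d(t)=\frac{(-1)^{d-1}t^d}{d!}\psi^{(d)}(t)$, so one may instead differentiate $\psi(t)\sim\Gamma(1-\sigma)t^\sigma\ell(t)$ repeatedly via the monotone density theorem.)

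The delicate case is $\sigma=0$, where the leading coefficient just computed vanishes, so one cannot simply subtract the two Abelian equivalences. Here I would use the exact cancellation $\int_0^\infty (d\,r^{d-1}-t\,r^d)e^{-rt}\,dr = 0$ to recentre $\overline\rho$:
\[
\psi_d(t) = \frac{t^d}{d!}\int_0^\infty \bigl(d\,r^{d-1}-t\,r^d\bigr)e^{-rt}\bigl(\overline\rho(r)-\overline\rho(1/t)\bigr)\,dr ,
\]
and after the substitution $u=rt$ this is $\frac{1}{d!}\int_0^\infty (d-u)u^{d-1}e^{-u}\bigl(\overline\rho(u/t)-\overline\rho(1/t)\bigr)\,du$. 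Since $\overline\rho$ is slowly varying at $0$ when $\sigma=0$, $\overline\rho(u/t)/\overline\rho(1/t)\to 1$ for each fixed $u>0$, and Potter's bounds supply an integrable majorant on $(0,\infty)$; dominated convergence then yields that the integral is $o(\overline\rho(1/t))=o(\ell(t))$, which is the claim. I expect this last step --- arranging the uniform/Potter-type domination so the interchange of limit and integral is valid near both $0$ and $\infty$ --- to be the only genuinely technical point; the remainder is bookkeeping around standard Tauberian and monotone density theorems.
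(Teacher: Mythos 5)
The paper offers no proof of this lemma---it is quoted directly from \cite{gnedin2007notes} (Propositions 17 and 19)---and your reconstruction via the identity $\psi(t)=t\int_0^\infty e^{-st}\overline\rho(s)\,ds$, Karamata's Abelian/Tauberian theorem combined with the monotone density theorem, and the recentred integral for the $\sigma=0$ case is correct and follows the same standard Tauberian route as that reference. The only point to tighten is the dominating function in the $\sigma=0$ step: Potter's bound legitimately covers $u\le 1$ (where $t/u\ge t$ is eventually in the range where the bound holds), whereas for $u\ge 1$ you should instead invoke monotonicity of $\overline\rho$, which gives $|\overline\rho(u/t)-\overline\rho(1/t)|\le\overline\rho(1/t)$ and hence the integrable majorant there.
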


\begin{lemma}\label{poisson:concentration}\cite[Exercise 15]{Pollard2015}
Let $X$ be a Poisson random variable with parameter $\lambda $ . For any $t>0$
\begin{equation}
\mathbb{P}(|X-\lambda| \geq \lambda t) \leq 2 e^{-\frac{\lambda t^2}{2(1+t)}}.
\end{equation}
\end{lemma}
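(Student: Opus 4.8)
The statement is a standard Chernoff bound for the tails of a Poisson variable, so the plan is to bound the upper and lower tails separately and combine them. First I would recall that for $X\sim\Poisson(\lambda)$ the moment generating function is $\mathbb{E}[e^{\theta X}]=\exp(\lambda(e^{\theta}-1))$ for every $\theta\in\mathbb{R}$, hence $\log\mathbb{E}[e^{\theta(X-\lambda)}]=\lambda(e^{\theta}-1-\theta)$.

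For the upper tail, for any $\theta>0$ Markov's inequality gives $\mathbb{P}(X-\lambda\geq\lambda t)\leq\exp\big(-\lambda[\theta(1+t)-e^{\theta}+1]\big)$, and maximizing the exponent over $\theta>0$, which is attained at $\theta=\log(1+t)$, yields $\mathbb{P}(X-\lambda\geq\lambda t)\leq\exp(-\lambda h(t))$, where $h(u)=(1+u)\log(1+u)-u$. For the lower tail and $t\in(0,1)$, the same argument applied to $-X$ with $\theta=-\log(1-t)>0$ gives $\mathbb{P}(X-\lambda\leq-\lambda t)\leq\exp(-\lambda h(-t))$; when $t\geq 1$ the event $\{X\leq\lambda(1-t)\}$ has probability at most $e^{-\lambda}\1{t=1}\leq 2e^{-\lambda t^{2}/(2(1+t))}$, so only $t\in(0,1)$ really needs attention.

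It then remains to establish the two elementary scalar inequalities $h(t)\geq\frac{t^{2}}{2(1+t)}$ for all $t\geq 0$ and $h(-t)\geq\frac{t^{2}}{2}\geq\frac{t^{2}}{2(1+t)}$ for $t\in[0,1)$. The second follows because $g(t):=(1-t)\log(1-t)+t-t^{2}/2$ satisfies $g(0)=g'(0)=0$ and $g''(t)=t/(1-t)\geq 0$ on $[0,1)$. The first can be handled similarly, or by combining the Bennett-type bound $h(t)\geq\frac{t^{2}}{2(1+t/3)}$ with $1+t/3\leq 1+t$. Plugging these in, each tail probability is at most $\exp(-\frac{\lambda t^{2}}{2(1+t)})$, and since $\{X\geq\lambda(1+t)\}$ and $\{X\leq\lambda(1-t)\}$ are disjoint, $\mathbb{P}(|X-\lambda|\geq\lambda t)\leq\mathbb{P}(X\geq\lambda(1+t))+\mathbb{P}(X\leq\lambda(1-t))\leq 2\exp(-\frac{\lambda t^{2}}{2(1+t)})$.

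The only mildly delicate point is the inequality $h(t)\geq\frac{t^{2}}{2(1+t)}$: its two sides agree to second order as $t\to 0$, so a first-order Taylor estimate is not enough and one must either push the expansion to third order or invoke the classical Bennett inequality. Everything else is a routine Chernoff optimization.
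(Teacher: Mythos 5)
Your argument is correct. One point of comparison: the paper does not prove this lemma at all — it simply cites it (Pollard, Exercise 15) and uses it as a black box in Lemma \ref{poisson:as} — so your contribution is a self-contained derivation rather than an alternative to an in-paper argument. The route you take is the standard one and all steps check out: the Chernoff optimization $\theta=\log(1+t)$ for the upper tail and $\theta=-\log(1-t)$ for the lower tail gives the exponents $h(\pm t)$ with $h(u)=(1+u)\log(1+u)-u$; the scalar bound $h(-t)\geq t^2/2$ follows from your convexity computation $g''(t)=t/(1-t)\geq 0$; and $h(t)\geq \frac{t^2}{2(1+t)}$ indeed holds (e.g.\ $g(t)=h(t)-\frac{t^2}{2(1+t)}$ has $g(0)=g'(0)=0$ and $g''(t)=\frac{1}{1+t}-\frac{1}{(1+t)^3}\geq 0$, or via Bennett's $h(t)\geq\frac{t^2}{2(1+t/3)}$ as you note). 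Your handling of $t\geq 1$ for the lower tail is slightly informally stated but harmless: for $t>1$ that event is empty, and at $t=1$ its probability $e^{-\lambda}$ is dominated by $e^{-\lambda t^2/(2(1+t))}=e^{-\lambda/4}$, so the union bound over the two disjoint tails gives the factor $2$. You correctly flag the only delicate point, namely that $h(t)$ and $\frac{t^2}{2(1+t)}$ agree to second order at $0$, so a genuine third-order (or Bennett-type) argument is needed there; with that supplied, the proof is complete and could substitute for the citation.
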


\begin{lemma} \label{poisson:as}
Let $(X_n)_{n\geq 1}$ be a sequence of Poisson random variables with mean $(\mu_n)_{n\geq 1}$.  If $\log n = o(\mu_n)$ then $X_n \sim \mu_n$ almost surely as $n$ tends to infinity.
\end{lemma}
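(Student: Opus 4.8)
The statement $X_n\sim\mu_n$ a.s. means $X_n/\mu_n\to 1$ almost surely; note $\mu_n\to\infty$ since $\log n=o(\mu_n)$, so $\mu_n>0$ for $n$ large and the ratio is eventually well defined. The plan is a direct first Borel--Cantelli argument based on the Poisson concentration bound of Lemma~\ref{poisson:concentration}, which does not require any independence assumption on the $X_n$.

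First I would fix $\varepsilon\in(0,1)$ and apply Lemma~\ref{poisson:concentration} with $t=\varepsilon$ to get
\[
\mathbb{P}\bigl(|X_n-\mu_n|\geq \varepsilon\mu_n\bigr)\leq 2\exp\!\left(-\frac{\mu_n\varepsilon^2}{2(1+\varepsilon)}\right).
\]
Since $\log n=o(\mu_n)$, for every constant $C>0$ there is $N$ such that $\mu_n\geq C\log n$ for all $n\geq N$. Choosing $C=C(\varepsilon)=\frac{4(1+\varepsilon)}{\varepsilon^2}$ gives $\frac{\mu_n\varepsilon^2}{2(1+\varepsilon)}\geq 2\log n$ for all $n\geq N$, hence $\mathbb{P}(|X_n-\mu_n|\geq \varepsilon\mu_n)\leq 2n^{-2}$ eventually. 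This series converges, so by the first Borel--Cantelli lemma $\mathbb{P}(|X_n-\mu_n|\geq\varepsilon\mu_n \text{ i.o.})=0$; equivalently, almost surely $|X_n/\mu_n-1|<\varepsilon$ for all $n$ large enough.

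Finally I would remove the dependence on $\varepsilon$ by a countable-intersection argument: let $\Omega_m$ be the almost sure event on which $|X_n/\mu_n-1|<1/m$ eventually, and set $\Omega_\infty=\bigcap_{m\geq 1}\Omega_m$, which still has probability one. On $\Omega_\infty$, for every $m$ we have $\limsup_n|X_n/\mu_n-1|\leq 1/m$, hence $X_n/\mu_n\to 1$. This proves $X_n\sim\mu_n$ almost surely.

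\textbf{Main obstacle.} There is no serious obstacle here; the only points requiring a little care are (i) translating the hypothesis $\log n=o(\mu_n)$ into a usable lower bound $\mu_n\geq C(\varepsilon)\log n$ with a large enough constant so that the concentration bound becomes summable, and (ii) passing from "for each fixed $\varepsilon$" to the full almost sure convergence via a countable intersection over $\varepsilon=1/m$. It is worth emphasizing that independence of the $X_n$ is not used, since only the first (union-bound) Borel--Cantelli lemma is invoked.
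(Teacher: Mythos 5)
Your proof is correct and follows essentially the same route as the paper: the Poisson concentration bound of Lemma~\ref{poisson:concentration} with $t=\varepsilon$, translation of $\log n = o(\mu_n)$ into a summable (polynomially small) tail bound, and the first Borel--Cantelli lemma. The only cosmetic difference is that you make explicit the final countable intersection over $\varepsilon = 1/m$, which the paper leaves implicit.
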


\begin{proof}

Let $ 0< \epsilon < 1/2$. Using Lemma \ref{poisson:concentration}, we have
\begin{align}
\mathbb{P} \left ( \left |\frac{X_n}{\mu_n} -1 \right | \geq \epsilon \right ) & \leq 2 e^{-\frac{\epsilon^2 \mu_n}{4}}\nonumber\\
&=2 n^{-\frac{\epsilon^2 \mu_n}{4 \log n}}\label{ineq:concentrationX}
\end{align}
Using the assumption, we have that $-\frac{\epsilon^2  \mu_n}{4 \log n} \rightarrow -\infty$. Therefore, the RHS of $(\ref{ineq:concentrationX})$ is summable. The almost sure result follows from Borel-Cantelli lemma.
\end{proof}

\begin{lemma}
For any $x,y\geq 0$, we have the following bound
\begin{align}
1-e^{-xy}\leq \max(1,y)(1-e^{-x})\label{eq:bound}
\end{align}
\end{lemma}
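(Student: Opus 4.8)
The plan is to dispatch the inequality by splitting on whether $y\le 1$ or $y>1$, using only two elementary facts about the map $\phi(t)=1-e^{-t}$ on $[0,\infty)$: it is non-decreasing, and it is concave with $\phi(0)=0$.

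First I would handle the case $0\le y\le 1$. Here $\max(1,y)=1$; since $xy\le x$ and $\phi$ is non-decreasing, $1-e^{-xy}=\phi(xy)\le\phi(x)=1-e^{-x}$, which is exactly the claimed bound.

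For $y>1$, where $\max(1,y)=y$, I would invoke concavity. Writing $x$ as the convex combination $x=\frac1y(xy)+\bigl(1-\frac1y\bigr)\cdot 0$ and using $\phi(0)=0$ gives $\phi(x)\ge\frac1y\phi(xy)$, hence $1-e^{-xy}\le y(1-e^{-x})=\max(1,y)(1-e^{-x})$. Alternatively, one may fix $x\ge 0$ and check that $g(y):=y(1-e^{-x})-(1-e^{-xy})$ satisfies $g(1)=0$ and, for $y\ge 1$, $g'(y)=(1-e^{-x})-xe^{-xy}\ge(1-e^{-x})-xe^{-x}=e^{-x}(e^{x}-1-x)\ge 0$, so $g\ge 0$ on $[1,\infty)$.

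There is essentially no obstacle here; the only point requiring care is getting the direction of each inequality right in the two regimes, and in particular noticing that concavity of $\phi$ together with $\phi(0)=0$ forces the sublinear scaling $\phi(\lambda t)\le\lambda\phi(t)$ for $\lambda\ge 1$, which is precisely what the case $y>1$ relies on.
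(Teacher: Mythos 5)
Your proof is correct and rests on the same idea as the paper's: both cases reduce to monotonicity plus the convexity of the exponential, with your concavity step $\phi(xy)\le y\,\phi(x)$ (for $y\ge 1$, using $\phi(0)=0$) being exactly the paper's observation that convexity of $y\mapsto e^{-xy}$ makes $(e^{-xy}-1)/y$ non-decreasing in $y$, just phrased in the variable $t=xy$ instead of $y$. No gaps; the extra derivative check is a fine alternative but not needed.
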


\begin{proof}
The bound is trivial when $y\leq 1$. Consider the case $y>1$. For all $x$, the function $y\rightarrow e^{-xy}$ is convex hence $f_x(y)=\frac{e^{-xy}-1}{y}$ is a monotonically non-decreasing function of $y$ therefore
\begin{align*}
\frac{e^{-xy}-1}{y}\geq e^{-x}-1
\end{align*}
for $y\geq 1$.
\end{proof}
%
%
%
%
%
%

\subsection{Proofs of Section 3}

\begin{proof}[Proof of Proposition \ref{lemma:Kpoisson}]

The result for $K_n$ is proved in \cite{james2014poisson} in the general context of GIBP. We provide here the details of the proof for $K_n$, which can be straightforwardly adapted to $K_{n,j}$.

First, let us remark that the bound \eqref{eq:bound} together with assumptions \eqref{eq:assumpt1} and \eqref{eq:assumpt2} imply $\Psi(n)<\infty$. For $s < 0$,
\begin{eqnarray*}
\mathbb{E} [e^{s K_n}] &=& \mathbb{E} [\ \prod\limits_{k=1}^\infty \mathbb{E} [e^{s \mathbb{1}_{\sum\limits_{1\leq i,j \leq n} Z_{ijk} \geq 1}} | G]\  ] \\
&=& \mathbb{E} [\ \prod\limits_{k=1}^\infty [\ e^s + (1-e^s) e^{-r_k (\sum\limits_{i=1}^n v_{ik})^2} \ ] \\
&=& \mathbb{E} [\ e^{\sum_k \log [e^s + (1-e^s) e^{-r_k (\sum\limits_{i=1}^n v_{ik})^2}]} \ ]
\end{eqnarray*}
Then, since
\begin{eqnarray*}
\log [e^s + (1-e^s) e^{-r_k (\sum\limits_{i=1}^n v_{ik})^2}] &=& \log [e^{-r_k (\sum\limits_{i=1}^n v_{ik})^2} + e^s (1- e^{-r_k (\sum\limits_{i=1}^n v_{ik})^2})] \\
&\leq & \log [1 + e^s (1- e^{-r_k (\sum\limits_{i=1}^n v_{ik})^2})] \\
&\leq& 1 - e^{-r_k (\sum\limits_{i=1}^n v_{ik})^2}
\end{eqnarray*}
and the last part is integrable, we can use Campbell's theorem~\citep{Kingman1993} to get:
\begin{eqnarray*}
\mathbb{E} [e^{s K_n}] &=& \exp [\ \int (e^{\log [e^s+ (1-e^s)e^{-r (\sum\limits_{i=1}^n v_{i})^2}] } -1 ) \prod\limits_{i=1}^n F(dv_i)\ \rho(dr)  \ ] \\
&=& \exp [\ (e^s-1) \Psi(n) \ ]
\end{eqnarray*}
We can prove similarly that $K_{n,d}$ is a Poisson random variable with mean $\Psi_d(n)$ and that $\sum\limits_{d \geq D} K_{n,d}$ is Poisson distributed with mean $\sum\limits_{d \geq D} \Psi_d(n)$. The assumption $\Psi_d(n) < \infty$ is also sufficient in this case to apply Campbell's theorem.

\end{proof}

%

\begin{proof}[Proof of Proposition \ref{lemma:asymptotic_K}]
From Proposition \ref{lemma:Kpoisson}, we get that
\begin{eqnarray}
\mathbb{E}(K_n) = \int (1-e^{-r (\sum\limits_{i=1}^n v_i)^2 }) \prod\limits_{i=1}^n F(dv_i) \ \rho(dr)  = \Psi(n) \label{exp:Z}
\end{eqnarray}
Let $(V_i)_{i \in \mathbb{N}}$ be i.i.d random variables with distribution $F$. By assumption, $0 < \mathbb{E}[V_i] = \mu < +\infty$ and $\mathbb{V}ar [V_i] = \tau^2 < +\infty$. Let $\epsilon > 0$. Let $A(r)$ be defined for $r > 0$ by
$$A(r) = \mathbb{E} [ 1 - e^{-r (\sum\limits_{i=1}^n V_i)^2} ]$$
Since $v\mapsto 1 - e^{-r v}$ is concave, using successively Jensen's inequality and the independence of $(V_i)$, we obtain
\begin{eqnarray*}
A(r) &\leq & 1 - e^{-r \mathbb{E}[(\sum\limits_{i=1}^n V_i)^2]}  \\
&\leq & 1 - e^{-r (n^2 \mu^2 + n \tau^2)} \\
&\leq & 1 - e^{- (1+\epsilon) r n^2 \mu^2}
\end{eqnarray*}
where the last inequality holds for any $\epsilon>0$ when $n>\frac{\tau^2}{\epsilon\mu^2}$. Therefore, for $\epsilon>0$ and $n>\frac{\tau^2}{\epsilon\mu^2}$
$$ \Psi(n) \leq \int (1-e^{-(1+\epsilon) r n^2 \mu^2}) \rho(dr).$$

Besides, since $ v\mapsto 1 - e^{-r v} $ is increasing, by Markov's inequality we have for any $\epsilon>0$
\begin{eqnarray*}
A(r) &=& \mathbb{E} \left [ 1 - e^{-r n^2 (\frac{1}{n} \sum\limits_{i=1}^n V_i)^2} \right ] \\
&\geq & \mathbb{P} \left (\frac{1}{n} \sum\limits_{i=1}^n V_i \geq \frac{\mu}{(1+\epsilon)}\right ) \left (1-e^{-\frac{n^2 \mu^2}{(1+\epsilon)^2} r }\right )
\end{eqnarray*}

Hence
$$\mathbb{P} \left (\frac{1}{n} \sum\limits_{i=1}^n V_i \geq \frac{\mu}{(1+\epsilon)}\right ) \psi\left (\frac{n^2 \mu^2}{(1+\epsilon)^2} \right ) \leq \ \Psi(n)\ \leq \psi\left  ( (1+\epsilon)  n^2 \mu^2\right ) $$
where $\psi (t) = \int (1-e^{- r t}) \rho(dr)$ is the Laplace exponent. Furthermore, by the law of large numbers,
$$\mathbb{P} \left (\frac{1}{n} \sum\limits_{i=1}^n V_i \geq \frac{\mu}{(1+\epsilon)} \right ) \rightarrow 1 .$$
Therefore,  under Assumption \eqref{assumption:reg_var}, Lemma \ref{tail} implies
$$ \mathbb{E}(K_n) \sim \Gamma(1-\sigma)\mu^{2\sigma} n^{2\sigma} \ell(n^2) $$
as $n$ tends to infinity.

In the finite-activity case, that is  $\sigma=0$ and $\ell(t)\rightarrow \overline \rho(0)<\infty$, we have $ \mathbb{E}(K_n) \rightarrow  \overline \rho(0)$ hence $K_n$ tends in distribution to $\Poisson(\overline \rho(0))$.

 Now, for $\sigma > 0$, the almost sure result \eqref{K_asympt} follows from Lemma $\ref{poisson:as}$ and the fact that for every slowly varying function $\ell_0 $ and every $\epsilon>0$
$$\lim_{x \rightarrow \infty} \ell_0(x) x^{-\epsilon} = 0. $$

Finally, assume that $(K_n)_{n\geq 1}$ is non-decreasing. We only need to prove the asymptotic behavior for $\sigma = 0$. In that setting, $\Psi(n) \sim \ell(n^2)$. Using the assumption that $\int \rho(dr) = \infty$, we therefore have $\lim_{n\rightarrow\infty} \Psi(n) = \infty$. Let $n \geq 1$,
$$ \Psi(n+1) - \Psi(n) = \int \mathbb{E}\left [ e^{-r (\sum\limits_{i=1}^{n} V_i)^2} - e^{-r (\sum\limits_{i=1}^{n+1} V_i)^2}\right ] \rho(r) dr$$
Since $V_i \geq 0 $ a.s, it comes that $(\Psi(n))_n$ is non-decreasing.
Now extend the sequence $(\Psi(n))_n$ to a non-decreasing and continuous function $\Psi$ on $\mathbb{R}_+$ (by linear interpolation for instance). Let $t > 1$, then
$$1 \leq \frac{\Psi(t+1)}{\Psi(t)} \leq \frac{\Psi(3\lfloor t \rfloor)}{\Psi(\lfloor t \rfloor)} \sim \frac{\ell(9\lfloor t \rfloor^2)}{\ell(\lfloor t \rfloor^2)}\rightarrow 1$$
Hence $\lim \frac{\Psi(t+1)}{\Psi(t)} = 1$

Now, for every integer $m \geq \Psi(0)$, choose $t_m$ such that $\Psi(t_m) = m$. We have that $(t_m)$ is non-decreasing and diverges. Since $\Psi$ is increasing, it comes
$$ \Psi(t_{m+1}-1) \leq \Psi( \lfloor t_{m+1} \rfloor - 1) \leq \Psi(t_{m+1}) = m+1$$
Hence, $\Psi(\lfloor t_{m+1} \rfloor - 1) \sim m$.
Then, using Lemma \ref{poisson:as}, we get that
$$ K_{\lfloor t_{m+1} \rfloor - 1} \sim \Psi( \lfloor t_{m+1} \rfloor - 1) \sim m \ \ \ a.s.$$
Finally, let $n \geq \Psi(0)$, let $m_n = \min \{ m \ |\  n \in \{ \lfloor t_{m} \rfloor ,...,\lfloor t_{m+1} \rfloor - 1 \} \}$,
$$ \frac{K_{\lfloor t_{m_n} \rfloor}}{ \Psi( \lfloor t_{m_n+1} \rfloor -1) } \leq \frac{K_n}{\Psi(n)} \leq \frac{K_{\lfloor t_{m_n+1} \rfloor - 1}}{ \Psi( \lfloor t_{m_n} \rfloor) }$$
Since $t_{m_n} \rightarrow \infty$, both bounds converge to $1$ almost surely, which gives the result.
\end{proof}

\begin{proof}[Proof of Proposition \ref{degree:as}]
As for Proposition \ref{lemma:asymptotic_K}, we only need to show that for $d \geq 1$,
$$\Psi_d(n) \sim \frac{\sigma \Gamma(d-\sigma) }{d!} n^{2 \sigma} \mu^{2 \sigma}\ell(n^2)$$
Therefore the proof is very similar to the one of Proposition \ref{lemma:asymptotic_K}. However, there are some technicalities we need to address since here $v \mapsto \frac{v^d}{d!}e^{-rv}$ is neither convex nor decreasing. Like previously, we will lower bound and upper bound $\Psi_d(n)$ by two quantities that are equivalent to $ \frac{\sigma \Gamma(d-\sigma) }{d!} n^{2 \sigma} \mu^{2 \sigma}\ell(n^2)$.

Let us first introduce some notations. Let $(V_i)_{i \in \mathbb{N}}$ i.i.d variables with distribution $F$. Let $S_n = \sum\limits_{i=1}^n V_i$ and $\varphi_d(r,v)$ defined as
$$ \varphi_d(r,v) = \frac{r^d v^{d} }{d!} e^{-r v}.$$

Now, define $A_d(r,n)$ for $r > 0$ by
$$ A_d(r,n) = \mathbb{E}[\varphi_d(r,S_n^2) ] $$
Suppose that $0 < \sigma < 1$, let $\epsilon > 0$, recalling that $\mathbb{E} V_i = \mu $, define $B_\epsilon = [\frac{n\mu}{\epsilon + 1} , (1+\epsilon)n\mu ]$. Let us notice that the law of large numbers gives us that $\mathbb{P} (S_n \in B_\epsilon) \rightarrow 1$.

\begin{enumerate}
\item Lower bound: We have that
\begin{eqnarray*}
\mathbb{E}[\mathbb{1}_{S_n \in B_\epsilon} \varphi_d(r, S_n^2) ] \leq A_d(r,n)
\end{eqnarray*}
Besides, for all $v\in B_\epsilon$ and all $r>0$
$$
\phi_d(r,v)\geq  \frac{r^d \mu^{2d} n^{2d}}{(1+\epsilon)^{2d} d!} e^{-r n^2 (1+\epsilon)^2 \mu^2}
$$
hence
$$ \mathbb{P} (S_n \in B_\epsilon) \frac{r^d \mu^{2d} n^{2d}}{(1+\epsilon)^{2d} d!} e^{-r n^2 (1+\epsilon)^2 \mu^2}  \leq \mathbb{E}[\mathbb{1}_{S_n \in B_\epsilon} \varphi_d(r,S_n^2) ]. $$
Therefore, using Lemma $\ref{tail}$ and since $\mathbb{P} (S_n \in B_\epsilon) \rightarrow 1$, we have for $n$ large enough
$$ \frac{\sigma \Gamma(d-\sigma) }{d! (1+\epsilon)^{4d + 1 - 2\sigma}} n^{2 \sigma} \mu^{2 \sigma}\ell(n^2) \leq \int A_d(r,n) \rho(dr).$$

\item Upper bound: We have that
\begin{eqnarray*}
A_d(r,n) = \mathbb{E}[\mathbb{1}_{S_n \in B_\epsilon} \varphi_d(r, S_n^2) ] + \mathbb{E}[\mathbb{1}_{S_n \not \in B_\epsilon} \varphi_d(r, S_n^2)]
\end{eqnarray*}
Like previously, since
$$\mathbb{E}[\mathbb{1}_{S_n \in B_\epsilon} \varphi_d(r,n S_n) ] \leq \mathbb{P} (S_n \in B_\epsilon) \frac{r^d (1+\epsilon)^{2d}  \mu^{2d}  n^{2d}}{d!} e^{-r \frac{n^2 \mu^2}{(1+\epsilon)^2}} $$
We find that for $n$ large enough,
$$ \int A_d(r,n) \rho(dr) \leq \frac{\sigma \Gamma(d-\sigma) (1+\epsilon)^{4d + 1 - 2\sigma}}{d!} n^{2 \sigma} \mu^{2 \sigma}\ell(n^2) + \int \mathbb{E}[\mathbb{1}_{S_n \not \in B_\epsilon} \varphi_d(r, S_n^2)] \rho(dr)$$
Therefore, we only need to prove that
$$ \int \mathbb{E}[\mathbb{1}_{S_n \not \in B_\epsilon} \varphi_d(r, S_n^2)] \rho(r)dr = o(n^{2\sigma}\ell(n^2)).$$
In order to do so, we split the integral with respect to $r$ in two parts, an integral over $(0,\frac{1}{n^2})$ and an integral over $(\frac{1}{n^2},\infty)$ and show that both are $o(n^{2\sigma}\ell(n^2))$. Since $\varphi_d(r,v) \leq 1$,
\begin{align*}
\int_{1/n^2}^{\infty} \mathbb{E}[\mathbb{1}_{S_n \not \in B_\epsilon} \varphi_d(r, S_n^2)] \rho(dr) &\leq \mathbb{P} (S_n \not \in B_\epsilon) \int_{1/n^2}^{\infty}  \rho(dr) \\
&= \mathbb{P} (S_n \not \in B_\epsilon)\ \overline{\rho}(1/n^2)\\
&= o(n^{2\sigma}\ell(n^2))
\end{align*}

where the last line follows from the law of large numbers and Assumption \eqref{assumption:reg_var}.
Besides,
\begin{eqnarray*}
\int_{0}^{1/n^2} \mathbb{E}[\mathbb{1}_{S_n \not \in B_\epsilon} \varphi_d(r, S_n^2)] \rho(dr) &=&
 \int_{0}^{1/n^2} \mathbb{E}[\mathbb{1}_{S_n \not \in B_\epsilon} \frac{r S_n^2}{d}\varphi_{d-1}(r, S_n^2)] \rho(dr) \\
 &\leq & \int_{0}^{1/n^2} \mathbb{E}[\mathbb{1}_{S_n \not \in B_\epsilon} r S_n^2] \rho(dr) \\
&=& \mathbb{E}[\mathbb{1}_{S_n \not \in B_\epsilon} \frac{S_n^2}{n^2}]\int_{0}^{1/n^2} r n^2 \rho(dr) \\
&\leq & \mathbb{E}[\mathbb{1}_{S_n \not \in B_\epsilon} \frac{S_n^2}{n^2}]\ e\ \int_{0}^{1/n^2} r n^2 e^{-r n^2}\rho(dr) \\
&\leq & 8\mathbb{E}[\mathbb{1}_{S_n \not \in B_\epsilon} \frac{S_n^2}{n^2}]\  \frac{\sigma \Gamma(d-\sigma) }{d!} n^{2 \sigma} \ell(n^2)
\end{eqnarray*}
where the last inequality holds for $n$ large enough by Assumption \eqref{assumption:reg_var} and  Lemma $\ref{tail}$. Now, we have that
$$ \frac{S_n^2}{n^2} < \frac{1}{n} \sum\limits_{i=1}^n V_i^2$$
Since $(V_i^2)_i$ are i.i.d random variables in $\mathcal{L}_1$, we know that $(\frac{1}{n} \sum\limits_{i=1}^n V_i^2)_{n\geq 1}$ is uniformly integrable. Therefore, $(\mathbb{1}_{S_n \not \in B_\epsilon} \frac{S_n^2}{n^2})_{n\geq 1}$ is uniformly integrable. Besides, using the law of large numbers, the sequence converges almost surely, and hence in probability, to $0$. Therefore, $\lim_n \mathbb{E}[\mathbb{1}_{S_n \not \in B_\epsilon} \frac{S_n^2}{n^2}] = 0$, which concludes the proof.\\

For $\sigma = 0$, the previous computations for the upper bound give that almost surely, $\Psi_d(n) = o(\ell(n^2)) = o(\Psi(n))$. Now let $D > 1$,
$$ \mathbb{E}\sum\limits_{d \geq D} K_{n,d} = \Psi(n) - \sum\limits_{d=1}^{D-1} \Psi_d(n) \sim \ell(n^2) $$
And since $x \mapsto \sum\limits_{d\geq D} \varphi_d(1,x)$ is non-decreasing, $ (\mathbb{E}\sum\limits_{d \geq D} K_{n,d})_n $ is non-decreasing, therefore, similarly to the proof for $\sigma = 0$ for $(K_n)_n$, we find that
$$ \sum\limits_{d \geq D} K_{n,d} \sim \mathbb{E}\sum\limits_{d \geq D} K_{n,d} \sim \ell(n^2)\ \ \ a.s$$
Therefore, we finally find that
$$ \frac{K_{n,D}}{K_n} = \frac{ \sum\limits_{d \geq D} K_{n,d} - \sum\limits_{d \geq D+1} K_{n,d}}{K_n} \rightarrow 0 \ \ \ a.s$$
\end{enumerate}

\end{proof}

\section{Gibbs sampler}
\label{sec:appendixGibbs}


As mentioned in the main text, the observed graph can be directed or undirected,  binary or count, and can have missing entries we would like to predict. Denote by $B$ the observed graph. Here we describes the steps of a Gibbs algorithm with stationary distribution
$$
p(K_n,(\widetilde r_k,\widetilde v_{1:n,k})_{k=1,\ldots,K_n},\theta \mid B).
$$
Notice that observing the full matrix $B=A$ corresponds to a weighted and directed graph with no missing entry. Let $\mathcal{I}$ denote the set of all possible edges. In the directed case, $\mathcal{I} = \{ (i,j)\ |\ 1\leq i,j \leq n \}$ and on the undirected case $\mathcal{I} = \{ (i,j)\ |\ 1\leq i \leq j \leq n \}$. We say that $(i,j)$ is not observed if we don't know the value of $A_{i,j}$. Remark that $(i,j)$ can be observed and still $A_{i,j} = 0$. Denote $\mathcal{O}$ the set of all observed entries and $\mathcal{O}^c = \mathcal{I} \smallsetminus \mathcal{O}$, the set on unobserved entry.  For all unobserved entry $(i,j) \in \mathcal{O}^c$, set $B_{i,j} = -1$

Additionally, to deal with the unknown number of active communities $K_n$, we use auxiliary slice variables $s_{i,j}$ for all $(i,j) \in \mathcal{I}$, details are given in the following paragraphs. Denote $s$ the smallest non-zero slice variable $s_{i,j}$ for $(i,j) \in \mathcal{I}$. By definition of the slice variables, $\widetilde r_k\geq s$ for all $k=1,\ldots,K_n$. Let
\begin{align*}
\overline G &=\sum_{k} r_k\delta_{v_{1:n,k}}\1{r_k\geq s}:=\sum_{k=1}^{\overline K_n} \overline r_k\delta_{\overline v_{1:n,k}}
\end{align*}
be the CRM corresponding to the set of active or inactive communities with weight $r_k\geq s$, of (almost surely finite) cardinality $\overline K_n\geq K_n$. Denote $\overline Z_{ijk}\geq 0$ the associated community interactions, and $\overline Z_{k}=(\overline Z_{ijk})$.

\subsection{Directed graph}

For each observed pair $(i,j)\in \mathcal{O}$, we define the slice variable as
\begin{equation}
s_{ij}|(\widetilde r_k,\widetilde Z_{ijk})_{k=1,\ldots,K_n}\sim \Unif\left (0, \min_{\{k|\widetilde Z_{ijk}\geq 1\}} \widetilde r_k\right )\label{eq:slice}
\end{equation}
if $A_{ij} > 0$ and $s_{ij} = 0$ otherwise. For each non observed entry $(i,j) \in \mathcal{O}^c$, we define $s_{i,j}$ by (\ref{eq:slice}) if $\{\ k\ |\ \widetilde Z_{ijk}\geq 1\} \not =  \emptyset$ and
\begin{equation}
s_{ij}|(\widetilde r_k,\widetilde Z_{ijk})_{k=1,\ldots,K_n}\sim \Unif (0, 1)
\end{equation}
otherwise.

\subsubsection{Gibbs sampler step 1 for weighted graph on observed entries}\label{gibbs:weighted}

Updating $(\widetilde Z_k)_{k=1,...,K_n}| (s, \overline G), \theta, B$ on observed entries indexes \\

We sample $(\overline Z_l)_{l=1,..,\overline K_n}$ associated to all atoms of $\overline G$ and keep only the non empty communities. For every $(i,j) \in \mathcal{O}$ such that $A_{i,j} > 0$. define the random variable $m_{ij} = \min_{\{l|\widetilde Z_{ijl}\geq 1\}} \widetilde r_l$. Then, writing the joint distribution it comes that independently for every such $(i,j)$,
$$ \mathbb{P}((\overline Z_{ijl})_{l=1,..,\overline{K}_n} | (s, G^+),\theta, B_{ij}) \propto \prod\limits_{i,j} \frac{1}{m_{ij}} \mathbb{1}_{s_{ij} < m_{ij} } \Mult((\overline Z_{ijl})_l; B_{ij},(\overline p_{ijl})_l)$$
where $\Mult$ is the multinomial distribution and $\overline p_{ijl} = \frac{r_l v_{il} v_{jl}}{\sum\limits_{t=1}^{\overline K_n} r_t v_{it} v_{jt}}$. Let $p_{ijl} = r_l v_{il} v_{jl}$. To simplify the notations, let us suppose that the atoms of $\overline G$ are in decreasing order. Remark that the indexing of $\widetilde Z$ is different from the one of $\overline Z$, the second corresponding to the one of the truncated random measure. For each observed edge $(i,j)$ independently, we can proceed in 4 phases for this step.
\begin{enumerate}
\item Sample $m_{ij}$ from the locations of $\overline G$ such that $\mathbb{P}(m_{ij} = r_L) \propto \frac{(\sum\limits_{l=1}^L p_{ijl})^{B_{ij}} - (\sum\limits_{l=1}^{L-1} p_{ijl})^{B_{ij}}}{r_L} \mathbb{1}_{s_{ij} < r_L} $.
\item For $l > L$, set $\overline Z_{ijl} = 0$
\item Sample $\overline Z_{ijL}\sim \tBin(B_{ij},\frac{p_{ijL}}{\sum\limits_{l=1}^{L} p_{ijl} })$, where $\tBin$ is the zero truncated binomial distribution
\item Sample $(\overline Z_{ij1},..,\overline Z_{ijL-1}) \sim \Mult(B_{ij}-\overline Z_{ijL},(\frac{p_{ijl}}{\sum\limits_{t=1}^{L-1} p_{ijt} })_{l \leq L-1}   )$
\end{enumerate}

\subsubsection{Gibbs sampler step 1 for unweighted graph on observed entries}\label{gibbs:unweighted}

In this setting we observe a binary matrix $B_{ij} = 1_{A_{ij} > 0}$. Then the first step of the Gibbs sampler is modified and becomes:\\

Updating $(\widetilde Z_k)_{k=1,...,K_n}| (s, \overline G), \theta, B$ on observed entries indexes\\

For each observed edge $(i,j)\in \mathcal{O}$ independently do
\begin{enumerate}
\item Sample $m_{ij}$ from the locations of $\overline G$ such that
$$\mathbb{P}(m_{ij} = r_L) \propto \frac{e^{\sum\limits_{k=1}^L p_{ijl}}-e^{\sum\limits_{k=1}^{L-1} p_{ijk}}}{r_L} \mathbb{1}_{s_{ij} < r_L} $$
Suppose $m_{ij} = r_L$
\item For $l > L$, set $\overline Z_{ijl} = 0$
\item Sample $\overline Z_{ijL}\sim \tPoisson(p_{ijL}) $, where $\tPoisson$ is the zero truncated Poisson distribution
\item For $l < L$, sample $Z_{ijl} \sim Poiss(p_{ijl}) $
\end{enumerate}

\subsubsection{Gibbs sampler step 1 on unobserved entries}\label{gibbs:prediction}

For each unobserved entry $(i,j)\in \mathcal{O}^c$, knowing $s_{ij}$, we define $L_0 = \max \{ k\ | \ r_k > s_{ij}\}$.

\begin{enumerate}
\item Draw $1_{A_{ij}=0}$, which is a Bernoulli with parameter
$$ p = \frac{1}{1 + \sum\limits_{L=1}^{L_0} \frac{e^{\sum\limits_{k=1}^L p_{ijk}}-e^{\sum\limits_{k=1}^{L-1} p_{ijk}}}{r_L}}$$
\item If $A_{i,j} \not= 0 $, then use subsection \ref{gibbs:unweighted}. Otherwise, set all counts of that entry to zero
\end{enumerate}

\subsection{Undirected graph}

In the undirected graph, we suppose that for $i \not= j$, $B_{ij} = A_{ij} + A_{ji}$ and $B_{ii} = A_{ii}$. Besides, in this setting we actually don't need to sample $\widetilde Z_{ijk}$ for all $(i,j,k)$ but only $\widetilde Z_{ijk} + \widetilde Z_{jik}$.
For each observed pair $(i,j)\in \mathcal{O}$, we define the slice variable as
\begin{equation}
s_{ij}|(\widetilde r_k,\widetilde Z_{ijk}+\widetilde Z_{jik})_{k=1,\ldots,K_n}\sim \Unif\left (0, \min_{\{k|\widetilde Z_{ijk}+\widetilde Z_{jik} \geq 1\}} \widetilde r_k\right )\label{eq:slice}
\end{equation}
if $B_{ij} > 0$ and $s_{ij} = 0$ otherwise. For each non observed entry $(i,j) \in \mathcal{O}^c$, we define $s_{ij}$ by (\ref{eq:slice}) if $\{\ k\ |\ \widetilde Z_{ijk} + \widetilde Z_{jik}\geq 1\} \not =  \emptyset$ and
\begin{equation}
s_{ij}|(\widetilde r_k,\widetilde Z_{ijk}+\widetilde Z_{jik})_{k=1,\ldots,K_n}\sim \Unif (0, 1)
\end{equation}
otherwise. Then Step 2 and 3 remain unchanged. For step 1, simply replace $p_{ijk}$ by $2 p_{ijk}$ for $i \not= j$.

\subsection{Proofs for the Gibbs sampler step 1}
\subsubsection{Weighted graph}\label{weighted:chap}
Here will give the posterior distribution of the count matrices and show that
$$(\widetilde v_k, \widetilde Z_k)_{k=1,...,K_n}| (s, \overline G), \theta, B \stackrel{dist}{=} (\widetilde v_k, \widetilde Z_k)_{k=1,...,K_n}| s, G, \theta, B$$

In order to do so, we derive the RHS posterior distribution. Let us first notice that given $G$, sampling the non zero counts and corresponding locations is equivalent to sampling $(Z_k)_k$ for $k \in \mathbb{N}$.
As stated previously, we can treat each edge $(i,j)$ independently. Therefore, we sample the sequence $(Z_{ijk})_k$. Here we suppose that the communities come with decreasing activity order. Let the random variable $L = \max \{k \ | \ Z_{ijk} > 0\}$ (supposing that the $(r_k)_k$ are decreasing). And let $p_{ijk} = r_k v_{ik} v_{jk}$

\begin{eqnarray*}
\mathbb{P}((Z_{ijk})_k| s, G, \theta, A_{ij})
&\propto& \mathbb{P}((Z_{ijk})_k|G, \theta, A_{ij}) \times \mathbb{P}(s_{ij}|(Z_{ijk})_k, G, \theta, A_{ij})\\
&\propto& \mathbb{1}_{\sum_k Z_{ijk} = A_{ij}}  \frac{A_{ij}!}{\prod\limits_{k=1}^L Z_{ijk}!} \prod\limits_{k=1}^L p_{ijk}^{Z_{ijk}}  \times \frac{1}{r_L} \mathbb{1}_{s_{ij} < r_L } \\
&\propto& \frac{(\sum\limits_{k=1}^L p_{ijk})^{A_{ij}} (1-(\frac{\sum\limits_{k=1}^{L-1} p_{ijk}}{\sum\limits_{k=1}^L p_{ijk}})^{A_{ij}}  )}{r_L} \mathbb{1}_{s_{ij} < r_L}  \\
&\times&  \mathbb{1}_{1 \leq Z_{ijL} \leq A_{ij}} \frac{A_{ij}!}{Z_{ijL}!(A_{ij}-Z_{ijL})!} \frac{ (\frac{p_{ijL}}{\sum\limits_{k=1}^L p_{ijk}})^{Z_{ijL}} (\frac{\sum\limits_{k=1}^{L-1} p_{ijk}}{\sum\limits_{k=1}^L p_{ijk}})^{A_{ij}-Z_{ijL}}}{1-(\frac{\sum\limits_{k=1}^{L-1} p_{ijk}}{\sum\limits_{k=1}^L p_{ijk}})^{A_{ij}}} \\
&\times& \mathbb{1}_{\sum_{k=1}^{L-1} Z_{ijk} = A_{ij} - Z_{ijL}}  \frac{(A_{ij}-Z_{ijL})!}{\prod\limits_{k=1}^{L-1} Z_{ijk}!} \prod\limits_{k=1}^{L-1} (\frac{p_{ijk}}{\sum\limits_{k=1}^{L-1} p_{ijk}})^{Z_{ijk}}
\end{eqnarray*}
This shows how we can sample in three steps these variables. Let us remark that the second part corresponds to the distribution of a zero truncated binomial and that the third part corresponds to the distribution of a multinomial. We also notice that only the elements of $\overline G$ are actually needed.

\subsubsection{Unweighted Graph}\label{unweighted:chap}
We proceed similarly for the unweighted graph
\begin{eqnarray*}
\mathbb{P}((Z_{ijk})_k| s, G, \theta, B_{ij} = 1)
&\propto& \mathbb{P}((Z_{ijk})_k|G, \theta, B_{ij} = 1) \times \mathbb{P}(s_{ij}|(Z_{ijk})_k, G, \theta, B_{ij} = 1)\\
&\propto&  \mathbb{1}_{Z_{ijL} \not = 0} \prod\limits_{k=1}^L \frac{p_{ijk}^{Z_{ijk}}}{Z_{ijk} !} e^{-p_{ijk} } \times e^{\sum\limits_{k=1}^L p_{ijk}} \times \frac{1}{r_L} \mathbb{1}_{s_{ij} < r_L } \\
&\propto& \frac{e^{\sum\limits_{k=1}^L p_{ijk}}-e^{\sum\limits_{k=1}^{L-1} p_{ijk}}}{r_L} \mathbb{1}_{s_{ij} < r_L}  \\
&\times&  \mathbb{1}_{Z_{ijL} \not= 0} \frac{1}{1-e^{-p_{ijL}}} \frac{p_{ijk}^{Z_{ijL}}}{ Z_{ijL}! } e^{-p_{ijL} } \\
&\times& \mathbb{1}_{Z_{ij(L+1)},... = 0} \prod\limits_{k=1}^{L-1} \frac{p_{ijk}^{Z_{ijk}}}{Z_{ijk} !} e^{-p_{ijk} }
\end{eqnarray*}

\subsubsection{Prediction}\label{prediction:chap}
Here we show how to update the missing entries we try to predict. Let us recall that for a predicted count, if it is positive, we define the slice variable as previously. However, if the count is equal to zero, then the slice variable is simply uniform over $[0,1]$. Now let $L_0 = \max \{ k\ | \ r_k > s_{ij}\}$

\begin{eqnarray*}
\mathbb{P}((Z_{ijk})_k| s, G, \theta)
&\propto& \mathbb{P}((Z_{ijk})_k|G, \theta) \times \mathbb{P}(s_{ij}|(Z_{ijk})_k, G, \theta)\\
&\propto& \mathbb{1}_{Z_{ij(L_0+1)},... = 0}\ e^{\sum\limits_{k=1}^{L_0} p_{ijk}} \prod\limits_{k=1}^{L_0} \frac{p_{ijk}^{Z_{ijk}}}{Z_{ijk} !} e^{-p_{ijk} }  \times (\mathbb{1}_{A_{ij} \not= 0} \frac{1}{r_L} \mathbb{1}_{s_{ij} < r_L } + \mathbb{1}_{A_{ij} = 0})
\end{eqnarray*}
Now let
$$ f((Z_{ijk})_k) = \mathbb{1}_{Z_{ij(L_0+1)},... = 0}\  e^{\sum\limits_{k=1}^{L_0} p_{ijk}} \prod\limits_{k=1}^{L_0} \frac{p_{ijk}^{Z_{ijk}}}{Z_{ijk} !} e^{-p_{ijk} } \times (\mathbb{1}_{A_{ij} \not= 0} \frac{1}{r_L} \mathbb{1}_{s_{ij} < r_L } + \mathbb{1}_{A_{ij} = 0})$$

Using \ref{unweighted:chap}, it comes that
$$ \mathbb{E} ( f((Z_{ijk})_k)\ |\ A_{ij} \not = 0) = e^{\sum\limits_{k=1}^{L_0} p_{ijk}} \sum\limits_{L=1}^{L_0} \frac{e^{\sum\limits_{k=1}^L p_{ijk}}-e^{\sum\limits_{k=1}^{L-1} p_{ijk}}}{r_L} $$

Besides,
$$ \mathbb{E} ( f((Z_{ijk})_k)\ |\ A_{ij} = 0) = f(0) = 1$$

Therefore, here we proceed in two steps, first we sample the binomial $\mathbb{1}_{A_{ij} = 0}$ with parameter
$$ p = \frac{1}{1 + e^{\sum\limits_{k=1}^{L_0} p_{ijk}} \sum\limits_{L=1}^{L_0} \frac{e^{\sum\limits_{k=1}^L p_{ijk}}-e^{\sum\limits_{k=1}^{L-1} p_{ijk}}}{r_L}}$$

Then, conditioning on the event $A_{ij} \not= 0$, we use \ref{unweighted:chap} to proceed.

\subsection{Proof for the Gibbs step 2}

Here we show how we can update the parameters $\theta = (\kappa,\sigma,\tau,\alpha,\beta)$ using a Metropolis-Hastings update. First, let us derive the posterior distribution of the hyperparameters.

We write $G = G' + \sum\limits_{c = 1}^K \tilde{r}_c \delta_{\tilde{v}_c}$ where $G'$ is the non observed part. And we note $\overline{G'}$ the restriction of $G'$ to the locations which intensity is larger than $\min s$.

\begin{eqnarray*}
p(\theta | (s,\overline G), \widetilde{v},\widetilde{Z}) &\propto& p(\ \theta \ | \sum\limits_{c = 1}^K \tilde{r}_c \delta_{\tilde{v}_c}, s, \overline{G'}, \widetilde{Z}) \\
&\propto& p(\ \theta \ , \sum\limits_{c = 1}^K \tilde{r}_c \delta_{\tilde{v}_c}, s, \overline{G'}, \widetilde{Z}) \\
&\propto& p(\ \theta \ , \sum\limits_{c = 1}^K \tilde{r}_c \delta_{\tilde{v}_c}, \widetilde{Z}) \  p(\ s, \overline{G'} | \ \theta \ , \sum\limits_{c = 1}^K \tilde{r}_c \delta_{\tilde{v}_c}, \widetilde{Z})
\end{eqnarray*}

Now let us derive consider the first part
\begin{eqnarray*}
p(\ \theta \ , \sum\limits_{c = 1}^K \tilde{r}_c \delta_{\tilde{v}_c}, \widetilde{Z}) &\propto& p(\theta) \ p(\sum\limits_{c = 1}^K \tilde{r}_c \delta_{\tilde{v}_c} | \theta)\ p ( \widetilde{Z}| \sum\limits_{c = 1}^K \tilde{r}_c \delta_{\tilde{v}_c}) \\
&\propto& p(\theta) p(\sum\limits_{c = 1}^K \tilde{r}_c \delta_{\tilde{v}_c} | \theta) \\
&\propto& p(\theta) e^{-\Psi(n)} \prod\limits_{c=1}^K \rho_{\kappa,\sigma,\tau}(\tilde{r}_c) f_{\alpha,\beta}(\tilde{v}_c)
\end{eqnarray*}

Now let us consider the second part
\begin{eqnarray*}
p(\ s, \overline{G'} \ | \ \theta \ , \sum\limits_{c = 1}^K \tilde{r}_c \delta_{\tilde{v}_c}, \widetilde{Z}) &\propto&
p(s | \sum\limits_{c = 1}^K \tilde{r}_c \delta_{\tilde{v}_c}, \widetilde{Z})\ p(\overline{G'} | \theta, s ) \\
&\propto& p(\overline{G'} | \theta, s ) \\
&\propto& e^{-\Psi'(\min s,n)} \prod_k \rho_{\kappa,\sigma,\tau}(r'_k) f_{\alpha,\beta}(v'_k)
\end{eqnarray*}
where $(r'_k)$ and $(v'_k)$ are respectively the intensities and locations of $\overline{G'}$


Let
$$\pi(\theta) = e^{-\Psi (n) -\Psi'(\min s,n)} \prod\limits_{t=1}^T \mu_{\theta}(r_t,v_t),$$
where we are taking the product over the $T$ atoms and jumps of $\overline{G}$ and
$$\Psi' (s,n) = \int_{r > s,v} e^{-r |v|^2 } \prod\limits_{i=1}^n (f(v_i) dv_i)\ \rho(r) dr.$$
The posterior satisfies $p(\theta | (s,\overline{G})) \propto p(\theta) \pi(\theta)$. With our particular choice of distribution of the CRM, the multivariate integrals are reduced to one dimensional integrals, which makes the algorithm tractable. Indeed, we find that

$$\Psi (n) + \Psi'(\min s,n) = \frac{\kappa}{\sigma} \int_0^{+\infty} (\tau+\varsigma^2)^\sigma [ \frac{\sigma}{\Gamma(1-\sigma)} \Gamma(-\sigma,(\tau+\varsigma^2)\min s ) + 1] f_{n\alpha,\beta}(\varsigma)d\varsigma - \frac{\kappa \tau^\sigma}{\sigma} $$

\[
    \Psi (n)=
\begin{cases}
    \frac{\kappa}{\sigma} \int_0^{+\infty} (\tau+\varsigma^2)^\sigma  f_{n\alpha,\beta}(\varsigma)d\varsigma - \frac{\kappa \tau^\sigma}{\sigma}   ,& \text{if } \sigma > 0 \text{ or } \sigma < 0\\
    \kappa \int_0^{+\infty} \log(\tau+\varsigma^2) f_{n\alpha,\beta}(\varsigma)d\varsigma - \kappa \log \tau   ,& \text{if } \sigma =  0
\end{cases}
\]
and
\[
    \Psi'(\min s,n)=
\begin{cases}
\frac{\kappa}{\Gamma (1-\sigma)} \int_0^{+\infty} (\tau+\varsigma^2)^\sigma  \Gamma(-\sigma,(\tau+\varsigma^2)\min s ) f_{n\alpha,\beta}(\varsigma)d\varsigma,   & \text{if } \sigma > 0 \text{ or } \sigma < 0\\
    \kappa \int_0^{+\infty} \int_{(\tau+\varsigma^2)\min s}^{+\infty} \frac{e^{-r}}{r} dr f_{n\alpha,\beta}(\varsigma)d\varsigma   ,& \text{if } \sigma =  0
\end{cases}
\]

We use the following priors:
\begin{eqnarray*}
1-2\sigma &\sim & Gamma(a_{\sigma},b_{\sigma}) \\
\kappa &\sim & Gamma(a_{\kappa},b_{\kappa}) \\
\tau &\sim & Gamma(a_{\tau},b_{\tau}) \\
\alpha &\sim & Gamma(a_{\alpha},b_{\alpha}) \\
\beta &\sim & Gamma(a_{\beta},b_{\beta})
\end{eqnarray*}
And proposals
\begin{eqnarray*}
1-2\tilde{\sigma} | \sigma &\sim &  Lognormal(\log (1-2\sigma), \Sigma_\sigma) \\
\tilde{\kappa} | \kappa &\sim & Lognormal(\log \kappa, \Sigma_\kappa) \\
\tilde{\tau}|\tau &\sim & Lognormal(\log \tau, \Sigma_\tau) \\
\tilde{\alpha} | \alpha &\sim & Lognormal(\log \alpha, \Sigma_\alpha) \\
\tilde{\beta}|\beta &\sim & Lognormal(\log \beta, \Sigma_\beta)
\end{eqnarray*}
We find that
\begin{eqnarray*}
\log \frac{p(\tilde{\theta}) q(\theta|\tilde{\theta})}{p(\theta) q(\tilde{\theta} | \theta)} &=& a_\sigma \log \frac{1-2\tilde{\sigma}}{1-2\sigma} + 2b_\sigma (\tilde{\sigma}-\sigma) \\
&+& a_\kappa \log \frac{\tilde{\kappa}}{\kappa} - b_\kappa (\tilde{\kappa}-\kappa) \\
&+& a_\tau \log \frac{\tilde{\tau}}{\tau} - b_\tau (\tilde{\tau}-\tau) \\
&+& a_\alpha \log \frac{\tilde{\alpha}}{\alpha} - b_\alpha (\tilde{\alpha}-\alpha) \\
&+& a_\beta \log \frac{\tilde{\beta}}{\beta} - b_\beta (\tilde{\beta}-\beta)
\end{eqnarray*}
And
\begin{eqnarray*}
\log \frac{\pi(\tilde{\theta})}{\pi(\theta)} &=& \Psi_\theta(n) + \Psi'_\theta(\min s, n) - \Psi_{\tilde{\theta}}(n) - \Psi'_{\tilde{\theta}}(\min s, n) \\
&+& T \log \frac{\tilde{\kappa}}{\kappa} - T \log \frac{\Gamma(1-\tilde{\sigma})}{\Gamma(1-\sigma)} -n T \log \frac{\Gamma(\tilde{\alpha})}{\Gamma(\alpha)} \\
&-& (\tilde{\sigma} - \sigma) \sum_t \log r_t - (\tilde{\tau} - \tau) \sum_t r_t \\
&+& (\tilde{\alpha}-\alpha) \sum_{t,i} \log v_{t,i} - (\tilde{\beta} - \beta) \sum_{t,i} v_{t,i} \\
&+& n T (\tilde{\alpha} \log \tilde{\beta} - \alpha \log \beta )
\end{eqnarray*}

\subsection{Sampling from the inhomogeneous CRM}\label{sec:inhomo}
In this section we show how we can sample from the inhomogeneous CRM $G'$ with measure:
$$ \mu'(dr,dv) = e^{-r (\sum\limits_{i=1}^n v_{i})^2} \rho_{\kappa,\tau,\sigma}(r)\ [\prod\limits_{i=1}^n f_{\alpha,\beta}(v_i)dv_i]\ dr$$
Let us recall that $f_{\alpha,\beta}$ is the gamma pdf and $\rho_{\kappa,\tau,\sigma}$ the GGP intensity. From Section 4, we know that if we make the following change of variables
$(v_1,..,v_n) \mapsto (\varsigma =\sum_i v_i, \nu_1 = v_1/s,...,\nu_n = v_n/s)$, we get
$$ \mu'(dr,d\varsigma,d\nu) = e^{-r \varsigma^2} \rho_{\kappa,\tau,\sigma}(r)\ f_{n\alpha,\beta}(\varsigma)d\varsigma\ dr\ \text{Dir}(d\nu_1,...,d\nu_n; \pmb{\alpha})$$
Hence, we can sample independently $(r,\varsigma)$ and $\nu$. From one hand, $(\nu_1,..,\nu_n)$ is sampled from a Dirichlet distribution with parameter $\pmb{\alpha} = (\alpha,..,\alpha)$.
On the other hand, the total sum $\varsigma$ and the intensity $r$ are sampled from
\begin{eqnarray*}
\mu(r,\varsigma) &=&  e^{-r \varsigma^2} \rho_{\kappa,\tau,\sigma}(r)\ f_{n\alpha,\beta}(\varsigma) \\
&=& \frac{\kappa}{\Gamma (1-\sigma)} e^{-r (\varsigma^2+\tau)} r^{-1-\sigma} f_{n\alpha,\beta}(\varsigma)
\end{eqnarray*}
Now, to reduce the problem to sampling from a homogeneous CRM, let us consider the change of variable $(r,\varsigma) \mapsto (\overline{r}= r[\tau+\varsigma^2], s)$ which determinant is $\tau+\varsigma^2$. We find finally that
\begin{eqnarray*}
\mu(\overline{r},\varsigma) = \frac{\kappa}{\Gamma (1-\sigma)}  e^{-\overline{r}}\  \overline{r}^{-1-\sigma}\ (\varsigma^2+\tau)^{\sigma} f_{n\alpha,\beta}(\varsigma)
\end{eqnarray*}
Besides, since $\overline{r} \geq r \tau\ \ \forall (i,j)$, we only need to sample the points such that $\overline{r} \geq \tau \min s$. Therefore, since $\tau > 0$, we sample a finite number of atoms. Then, we only keep the points such as $r > \min s$.
Finally, let us notice that in our setting, even with $\sigma \leq 0$
$$ \int_\varsigma (\varsigma^2+\tau)^{\sigma}f_{n\alpha,\beta}(\varsigma) d\varsigma = \frac{\sigma}{\kappa} \Psi(n) + \tau^{\sigma}$$
Therefore, we first sample the jumps from the levy measure
$$\rho(\overline{r}) =  \frac{\sigma \Psi(n) + \kappa\tau^{\sigma}}{\Gamma (1-\sigma)}  e^{-\overline{r}}\  \overline{r}^{-1-\sigma} \mathbb{1}_{\overline{r}\geq \tau \min s}$$
using adaptive thinning \citep{Favaro2013}. Then, we sample $\varsigma$ with pdf $\propto (\varsigma^2+\tau)^{\sigma}f_{n\alpha,\beta}(\varsigma)$ using rejection sampling.

\end{document}